\newcommand{\vast}{\bBigg@{4}}
\newcommand{\Vast}{\bBigg@{5}}
\DeclareMathOperator{\dist}{dist}
\DeclareMathOperator{\acosh}{acosh}
\newcommand{\Spacious}[1]{\, #1 \,}
\newcommand{\SpaciousCdot}{\Spacious{\cdot}}
\title{Efficiently Approximating Vertex Cover on Scale-Free Networks
  with Underlying Hyperbolic Geometry} 
\titlerunning{Approximating Vertex Cover on Networks with Underlying
  Hyperbolic
  Geometry}
\author{Thomas Bläsius}{Karlsruhe Institute of Technology\\{Karlsruhe, Germany}}{thomas.blaesius@kit.edu}{https://orcid.org/0000-0003-2450-744X}{}
\author{Tobias Friedrich}{Hasso Plattner Institute, University of Potsdam\\{Potsdam, Germany}}{tobias.friedrich@hpi.de}{https://orcid.org/0000-0003-0076-6308}{}
\author{Maximilian Katzmann}{Hasso Plattner Institute, University of Potsdam\\{Potsdam, Germany}}{maximilian.katzmann@hpi.de}{https://orcid.org/0000-0002-9302-5527}{}
\authorrunning{T. Bläsius, T. Friedrich, M. Katzmann}
\keywords{vertex cover, approximation, random graphs, hyperbolic
  geometry, efficient
  algorithm}
\begin{document}
\maketitle

\begin{abstract}
  Finding a minimum vertex cover in a network is a fundamental
  NP-complete graph problem.  One way to deal with its computational
  hardness, is to trade the qualitative performance of an algorithm
  (allowing non-optimal outputs) for an improved running time.  For
  the vertex cover problem, there is a gap between theory and practice
  when it comes to understanding this tradeoff.  On the one hand, it
  is known that it is NP-hard to approximate a minimum vertex cover
  within a factor of $\sqrt{2}$.  On the other hand, a simple greedy
  algorithm yields close to optimal approximations in practice.

  A promising approach towards understanding this discrepancy is to
  recognize the differences between theoretical worst-case instances
  and real-world networks.  Following this direction, we close the gap
  between theory and practice by providing an algorithm that
  efficiently computes close to optimal vertex cover approximations on
  hyperbolic random graphs; a network model that closely resembles
  real-world networks in terms of degree distribution, clustering, and
  the small-world property.  More precisely, our algorithm computes a
  $(1 + o(1))$-approximation, asymptotically almost surely, and has a
  running time of $\mathcal{O}(m \log(n))$.

  The proposed algorithm is an adaption of the successful greedy
  approach, enhanced with a procedure that improves on parts of the
  graph where greedy is not optimal.  This makes it possible to
  introduce a parameter that can be used to tune the tradeoff between
  approximation performance and running time.  Our empirical
  evaluation on real-world networks shows that this allows for
  improving over the near-optimal results of the greedy approach.
\end{abstract}

\newpage

\section{Introduction}
\label{sec:introduction}

A \emph{vertex cover} of a graph is a subset of the vertices that
leaves the graph edgeless upon deletion.  Since the problem of finding
a smallest vertex cover is NP-complete~\cite{k-racp-72}, there are
probably no algorithms that solve it efficiently.  Nevertheless, the
problem is relevant due to its applications in computational
biology~\cite{acf-kavcp-04}, scheduling~\cite{elw-vcms-16}, and
internet security~\cite{ffg-cowpun-07}.
Therefore, there is an ongoing effort in exploring methods that can be
used in practice~\cite{ai-bf-16, acl-ichvc-12}, and while they often
work well, they still cannot guarantee efficient running times.

A commonly used approach to overcoming this issue are approximation
algorithms.  There, the idea is to settle for a near-optimal solution
while guaranteeing an efficient running time.  For the vertex cover
problem, a simple greedy approach computes an approximation in
quasi-linear time by iteratively adding the vertex with the largest
degree to the cover and removing it from the graph.  In general
graphs, this algorithm, which we call \emph{standard greedy}, cannot
guarantee a better approximation ratio than $\log(n)$, i.e., there are
graphs where it produces a vertex cover whose size exceeds the one of
an optimum by a factor of $\log(n)$~\cite{j-aacp-74}.  This can be
improved to a $2$-approximation using a simple linear-time algorithm.
The best known polynomial time approximation reduces the factor to
$2 - \Theta(\log(n)^{-1/2})$~\cite{k-barvcp-09}.  However, assuming
the unique games conjecture, it is NP-hard to approximate an optimal
vertex cover within a factor of $2 - \varepsilon$ for all
$\varepsilon > 0$~\cite{kr-v-08} and it is proven that finding a
$\sqrt{2}$-approximation is NP-hard~\cite{sms-psggh-18}.

Therefore, it is rather surprising that the standard greedy algorithm
not only beats the $2$-approximation on autonomous systems graphs like
the internet~\cite{pw-ialscs-05}, it also performs well on many
real-world networks, obtaining approximation ratios that are very
close to $1$~\cite{dgd-vccn-13}.  This leaves a gap between the
theoretical worst-case bounds and what is observed in practice.  One
approach to explaining this discrepancy is to consider the differences
between the examined instances.  Theoretical bounds are often obtained
by designing worst-case instances.  However, real-world networks
rarely resemble the worst case.  More realistic statements can be
obtained by making assumptions about the solution
space~\cite{bl-asie-12, crv-sris-17}, or by restricting the analysis
to networks with properties that are observed in the real world.

Many real networks, like social networks, communication networks, or
protein-interaction networks, are considered to be
\emph{scale-free}~\cite{asvw-h-20, n-sfcn-03, scm-t-21}.  Such graphs
feature a power-law degree distribution (only few vertices have high
degree, while many vertices have low degree), high clustering (two
vertices are likely to be adjacent if they have a common neighbor),
and a small diameter.

Previous efforts to obtain more realistic insights into the
approximability of the vertex cover problem have focused on networks
that feature only one of these properties, namely a power-law degree
distribution~\cite{cfr-ggdsfn-16, gh-a-14, vs-mvcgr-16}.  With this
approach, guarantees for the approximation factor of the standard
greedy algorithm were improved to a constant, compared to $\log(n)$ on
general graphs~\cite{cfr-ggdsfn-16}.  Moreover, it was shown that it
is possible to compute an expected $(2 - \varepsilon)$-approximation
for a constant $\varepsilon$, in polynomial time on such
networks~\cite{gh-a-14} and this was later improved to about $1.7$
depending on properties of the distribution~\cite{vs-mvcgr-16}.
However, it was also shown that even on graphs that have a power-law
degree distribution, the vertex cover problem remains NP-hard to
approximate within some constant factor~\cite{cfr-ggdsfn-16}.
This indicates, that focusing on networks that only feature a
power-law degree distribution, is not sufficient to explain why vertex
cover can be approximated so well in practice.

The goal of this paper is to narrow this gap between theory and
practice, by considering a random graph model that features all of the
three mentioned properties of scale-free networks.  The
\emph{hyperbolic random graph model} was introduced by Krioukov et
al.~\cite{kpk-h-10} and it was shown that the graphs generated by the
model have a power-law degree distribution and high
clustering~\cite{gpp-rhg-12, fhms-c-21}, as well as a small
diameter~\cite{ms-k-19}.  Consequently, they are good representations
of many real-world networks~\cite{bpk-sihm-10, gbas-whhgit-16,
  vs-mehssn-14}.  Additionally, the model is conceptually simple,
making it accessible to mathematical analysis.  With it we have
previously derived a theoretical explanation for why the bidirectional
breadth-first search works well in practice~\cite{bff-espsf-22}.
Moreover, we have shown that the vertex cover problem can be solved
exactly in polynomial time on hyperbolic random graphs, with high
probability~\cite{bffk-svcpt-20}.  However, we note that the degree of
the polynomial is unknown and on large networks even quadratic
algorithms are not efficient enough to obtain results in reasonable
time.

In this paper, we link the success of the standard greedy approach to
structural properties of hyperbolic random graphs, identify the parts
of the graph where it does not behave optimally, and use these
insights to derive a new approximation algorithm.  On hyperbolic
random graphs, this algorithm achieves an approximation ratio of
$1 + o(1)$, asymptotically almost surely (i.e., with probability
$1 - o(1)$), and maintains an efficient running time of
$\mathcal{O}(m \log(n))$, where $n$ and $m$ denote the number of
vertices and edges in the graph, respectively.  Since the average
degree of hyperbolic random graphs is constant, with high
probability~\cite{k-girggcg-18}, this implies a quasi-linear running
time on such networks.  Moreover, we introduce a parameter that can be
used to tune the trade-off between approximation quality and running
time of the algorithm, facilitating an improvement over the standard
greedy approach.  While our algorithm depends on the coordinates of
the vertices in the hyperbolic plane, we propose an adaptation of it
that is oblivious to the underlying geometry (only relying on the
adjacency information of the graph) and compare its approximation
performance to the standard greedy algorithm on a selection of
real-world networks.  On average our algorithm reduces the error of
the standard greedy approach to less than $50\%$.

The paper is structured as follows.  We first give an overview of our
notation and preliminaries in \Cref{sec:preliminaries} and derive a
new approximation algorithm based on prior insights about vertex cover
on hyperbolic random graphs in \Cref{sec:algorithm}.  Afterwards, we
analyze its approximation ratio in \Cref{sec:approximation} and
evaluate its performance empirically in \Cref{sec:experiments}.

\section{Preliminaries}
\label{sec:preliminaries}

Let $G = (V, E)$ be an undirected and connected graph.  We denote the
number of vertices and edges in~$G$ with $n$ and $m$, respectively.
The number of vertices in a set $S \subseteq V$ is denoted by $|S|$.
The \emph{neighborhood} of a vertex $v$ is defined as
$N(v) = \{ w \in V \mid \{ v, w \} \in E \}$.  The size of the
neighborhood, called the \emph{degree} of $v$, is denoted by
$\deg(v) = \vert N(v) \vert$.  For a subset $S \subseteq V$, we use
$G[S]$ to denote the \emph{induced subgraph} of $G$ obtained by
removing all vertices in $V \setminus S$.

\subparagraph{The Hyperbolic Plane.}

After choosing a designated origin $O$ in the two-dimensional
hyperbolic plane, together with a reference ray starting at $O$, a
point $p$ is uniquely identified by its \emph{radius} $r(p)$, denoting
the hyperbolic distance to $O$, and its \emph{angle} (or \emph{angular
  coordinate})~$\varphi(p)$, denoting the angular distance between the
reference ray and the line through $p$ and $O$.  The hyperbolic
distance between two points $p$ and $q$ is given by
\begin{align}
  \dist(p, q) = \acosh(\cosh(r(p))\cosh(r(q)) - \sinh(r(p))\sinh(r(q))\cos(\Delta_\varphi(p, q))), \notag
\end{align}
where $\cosh(x) = (e^x + e^{-x}) / 2$,
$\sinh(x) = (e^x - e^{-x}) / 2$, and
\begin{align*}
  \Delta_\varphi(p, q) = \pi - \vert \pi - \vert \varphi(p) - \varphi(q) \vert\vert
\end{align*}
denotes the angular distance between $p$ and $q$.  If not stated
otherwise, we assume that computations on angles are performed modulo
$2\pi$.

In the hyperbolic plane a disk of radius $r$ has an area of
$2\pi(\cosh(r) - 1)$.  That is, the area grows exponentially with the
radius.  In contrast, this growth is polynomial in Euclidean space.

\subparagraph{Hyperbolic Random Graphs.}

Hyperbolic random graphs are obtained by distributing $n$ points
independently and uniformly at random within a disk of radius $R$ and
connecting any two of them if and only if their hyperbolic distance is
at most $R$.  See \Cref{fig:graph} for an example.  The disk radius
$R$ (which matches the connection threshold) is given by
$R = 2\log(n) + C$, where the constant $C \in \mathbb{R}$ depends on
the average degree of the network, as well as the power-law exponent
$\beta = 2\alpha + 1$, with $\alpha \in (1/2, 1)$, which are also
assumed to be constants.  The coordinates of the vertices are drawn as
follows. For vertex~$v$ the angular coordinate, denoted
by~$\varphi(v)$, is drawn uniformly at random from $[0, 2\pi)$ and the
radius of $v$, denoted by $r(v)$, is sampled according to the
probability density function
\begin{align*}
  f(r) = \frac{\alpha\sinh(\alpha r)}{\cosh(\alpha R) - 1},
\end{align*}
for $r \in [0, R]$.  For $r > R$, $f(r) = 0$.  Then,
\begin{align}
  \label{eq:probability-density-function}
  f(r, \varphi) &= \frac{1}{2\pi} \frac{\alpha \sinh(\alpha r)}{\cosh(\alpha R) - 1} \\ \notag
                &= \frac{\alpha}{2 \pi} e^{-\alpha(R - r)}(1 + \Theta(e^{-\alpha R} - e^{-2\alpha r}))
\end{align}
is their joint distribution function.

In the chosen regime for $\alpha$ the resulting graphs have a giant
component of size $\Omega(n)$~\cite{bfm-lchmcn-15}, while all other
components have at most polylogarithmic size~\cite[Corollary
13]{fk-dhrg-18}, with high probability.  Throughout the paper, we
refer only to the giant component when addressing hyperbolic random
graphs.

We denote areas in the hyperbolic disk with calligraphic capital
letters.  The set of vertices in an area $\mathcal{A}$ is denoted by
$V(\mathcal{A})$.  The probability for a given vertex to lie in
$\mathcal{A}$ is given by its measure
$\mu(\mathcal{A}) = \iint_\mathcal{A} f(r, \varphi) \dif\varphi \dif
r$.  The hyperbolic distance between two vertices $u$ and $v$
increases with increasing angular distance between them.  The maximum
angular distance such that they are still connected by an edge is
bounded by~\cite[Lemma 3.2]{k-shrg-16}
\begin{align}
  \label{eq:maximum-angular-distance}
  \theta(r(u), r(v)) &= \arccos\left( \frac{\cosh(r(u))\cosh(r(v)) - \cosh(R)}{\sinh(r(u))\sinh(r(v))} \right) \notag \\
                     &= 2e^{(R - r(u) - r(v))/2}(1 \pm \Theta(e^{R - r(u) - r(v)})).
\end{align}

\subparagraph{Hyperbolic Random Graphs with an Expected Number of
  Vertices.}

We are often interested in the probability that one or more vertices
fall into a certain area of the hyperbolic disk during the sampling
process of a hyperbolic random graph.  Computing such a probability
becomes significantly harder, once the positions of some vertices are
already known, since that introduces stochastic dependencies.  For
example, if all $n$ vertices are sampled into an area $\mathcal{A}$,
the probability for a vertex to lie outside of $\mathcal{A}$ is $0$.
In order to overcome such issues, we use an approach (that has been
often used on hyperbolic random graphs before, see for
example~\cite{fk-dhrg-18, km-bdrhg-15}), where the vertex positions in
the hyperbolic disk are sampled using an inhomogeneous Poisson point
process.  For a given number of vertices $n$, we refer to the
resulting model as \emph{hyperbolic random graphs with $n$ vertices in
  expectation}.  After analyzing properties of this simpler model, we
can translate the results back to the original model, by conditioning
on the fact that the resulting distribution is equivalent to the one
originally used for hyperbolic random graphs.  More formally, this can
be done as follows.

A hyperbolic random graph with $n$ vertices in expectation is obtained
using an inhomogeneous Poisson point process to distribute the
vertices in the hyperbolic disk.  In order to get $n$ vertices in
expectation, the corresponding intensity function $f_{P}(r, \varphi)$
at a point $(r, \varphi)$ in the disk is chosen as
\begin{align*}
  f_{P}(r, \varphi) = e^{(R - C)/2} f(r, \varphi),
\end{align*}
where $f(r, \varphi)$ is the original probability density function
used to sample hyperbolic random graphs (see
\Cref{eq:probability-density-function}). Let $P$ denote the set of
random variables representing the points produced by this process.
Then $P$ has two properties.  First, the number of vertices in $P$
that are sampled into two disjoint areas are independent random
variables.  Second, the expected number of points in $P$ that fall
within an area $\mathcal{A}$ is given by
\begin{align*}
  \iint_{\mathcal{A}} f_{P}(r, \varphi) \dif r \dif \varphi &= n \iint_{\mathcal{A}} f(r, \varphi) \dif r \dif \varphi = n \mu(\mathcal{A}).
\end{align*}
By the choice of $f_{P}$ the number of vertices sampled into the disk
matches $n$ only in expectation, i.e.,
$\mathbb{E}[\vert P \vert] = n$.  However, we can now recover the
original distribution of the vertices, by conditioning on the fact
that $\vert P \vert = n$, as shown in the following lemma.
Intuitively, it states that probabilistic statements on hyperbolic
random graphs with $n$ vertices in expectation can be translated to
the original hyperbolic random graph model by taking a small penalty
in certainty.  We note that proofs of how to bound this penalty have
been sketched before~\cite{fk-dhrg-18, km-bdrhg-15}.  For the sake of
completeness, we give an explicit proof.  In the following, we use
$G_P$ to denote a hyperbolic random graph with $n$ vertices in
expectation and point set $P$.  Moreover, we use $\mathbf{P}$ to
denote a property of a graph and for a given graph $G$ we denote the
event that $G$ has property $\mathbf{P}$ with $E(G, \mathbf{P})$.

\begin{lemma}
  \label{lem:hrg-in-expectation}
  Let $G_P$ be a hyperbolic random graph with $n$ vertices in
  expectation, let~$\mathbf{P}$~be a property, and let $c > 0$ be a
  constant, such that $\Pr[E(G_P, \mathbf{P})] = \mathcal{O}(n^{-c})$.
  Then, for a hyperbolic random graph $G'$ with $n$ vertices it holds
  that
  \begin{align*}
    \Pr[E(G', \mathbf{P})] = \mathcal{O}(n^{-c + 1/2}).
  \end{align*}
\end{lemma}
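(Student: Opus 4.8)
The plan is to relate the two models through the law of total probability, conditioning on the size of the Poisson point set $P$. Write $N = |P|$, which is Poisson distributed with mean $n$. The key observation is that, conditioned on $N = k$, the point process $P$ produces exactly $k$ points distributed independently according to the normalized intensity — and since $f_P$ is just a constant multiple of $f$, this normalized intensity is exactly $f$. Hence, conditioned on $N = n$, the graph $G_P$ has precisely the distribution of a hyperbolic random graph $G'$ on $n$ vertices. Therefore
\begin{align*}
  \Pr[E(G', \mathbf{P})] = \Pr[E(G_P, \mathbf{P}) \mid N = n].
\end{align*}

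Next I would expand the unconditional probability for $G_P$ by summing over all possible values of $N$, isolating the term $k = n$:
\begin{align*}
  \Pr[E(G_P, \mathbf{P})] = \sum_{k \ge 0} \Pr[N = k] \cdot \Pr[E(G_P, \mathbf{P}) \mid N = k] \ge \Pr[N = n] \cdot \Pr[E(G', \mathbf{P})],
\end{align*}
using that all summands are nonnegative. Rearranging gives $\Pr[E(G', \mathbf{P})] \le \Pr[E(G_P, \mathbf{P})] / \Pr[N = n]$. By hypothesis the numerator is $\mathcal{O}(\mathbb{E}[N^{-c}])$-ish in spirit, but more carefully the hypothesis is stated as $\Pr[E(G_P, \mathbf{P})] = \mathcal{O}(|P|^{-c})$; I would interpret this (as is standard in this line of work) as $\Pr[E(G_P, \mathbf{P})] = \mathcal{O}(n^{-c})$ since $|P|$ concentrates around $n$, or handle it directly by noting $\mathbb{E}[N^{-c}] = \mathcal{O}(n^{-c})$ via a Chernoff bound on the lower tail of the Poisson. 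Either way the numerator is $\mathcal{O}(n^{-c})$.

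It then remains to lower-bound $\Pr[N = n]$, where $N \sim \mathrm{Pois}(n)$. This is the one genuinely quantitative step: by Stirling's approximation,
\begin{align*}
  \Pr[N = n] = e^{-n} \frac{n^n}{n!} = \Theta\!\left( \frac{1}{\sqrt{n}} \right).
\end{align*}
Combining, $\Pr[E(G', \mathbf{P})] = \mathcal{O}(n^{-c}) \cdot \mathcal{O}(\sqrt{n}) = \mathcal{O}(n^{-c + 1/2})$, as claimed.

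The conceptual heart is the first paragraph — justifying that conditioning the Poisson process on $N = n$ recovers exactly the original i.i.d. sampling; this is the standard "conditioning property" of Poisson point processes, and the fact that $f_P \propto f$ makes the normalized intensities agree is what makes it go through cleanly. The only real obstacle is a minor one: pinning down the precise meaning of the $\mathcal{O}(|P|^{-c})$ hypothesis (since $|P|$ is itself random) and confirming the lower-tail concentration of $N$ so that $\mathbb{E}[N^{-c}] = \mathcal{O}(n^{-c})$; the Stirling estimate for $\Pr[N=n]$ is routine.
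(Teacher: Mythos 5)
Your proof is correct and follows essentially the same route as the paper: condition on $|P| = n$, bound $\Pr[E(G_P,\mathbf{P}) \mid |P|=n] \le \Pr[E(G_P,\mathbf{P})]/\Pr[|P|=n]$ (your law-of-total-probability rearrangement is algebraically the same step), and estimate $\Pr[|P|=n] = \Theta(n^{-1/2})$ by Stirling. The only cosmetic difference is that the paper reads the hypothesis $\mathcal{O}(|P|^{-c})$ directly as $\mathcal{O}(n^{-c})$ without the interpretive discussion you include.
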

\begin{proof}
  The probability that $G'$ has property $\mathbf{P}$ can be obtained
  by taking the probability that a hyperbolic random graph $G_P$ with
  $n$ vertices in expectation has it, and conditioning on the fact
  that \emph{exactly} $n$ vertices are produced during its sampling
  process.  That is,
  \begin{align*}
    \Pr[E(G', \mathbf{P})] = \Pr[E(G_P, \mathbf{P}) \mid \vert P \vert = n].
  \end{align*}
  This probability can now be computed using the definition for
  conditional probabilities, i.e.,
  \begin{align*}
    \Pr[E(G_P, \mathbf{P}) \mid \vert P \vert = n] = \frac{\Pr[E(G_P, \mathbf{P}) \land \vert P \vert = n]}{\Pr[\vert P \vert = n]},
  \end{align*}
  where the $\land$-operator denotes that both events occur.  For the
  numerator, we have $\Pr[E(G_P, \mathbf{P})] = \mathcal{O}(n^{-c})$
  by assumption.  Constraining this to events where
  $\vert P \vert = n$ cannot increase the probability and we obtain
  $\Pr[E(G_P, \mathbf{P}) \land \vert P \vert = n] =
  \mathcal{O}(n^{-c})$.  For the denominator, recall that
  $\vert P \vert$ is a random variable that follows a Poisson
  distribution with mean $n$.  Therefore, we have
  \begin{align*}
    \Pr[\vert P \vert = n] = \frac{e^{-n}n^n}{n!} = \Theta(n^{-1/2}).
  \end{align*}
  The quotient can, thus, be bounded by
  \begin{equation*}
    \Pr[E(G', \mathbf{P})] = \frac{\mathcal{O}(n^{-c})}{\Theta(n^{-1/2})} = \mathcal{O}(n^{-c + 1/2}). 
  \end{equation*}
\end{proof}

\subparagraph{Probabilities.}

Since we are analyzing a random graph model, our results are of
probabilistic nature.  To obtain meaningful statements, we show that
they hold \emph{with high probability} (with probability $1 -
\mathcal{O}(n^{-1})$), or \emph{asymptotically almost surely} (with
probability $1 - o(1)$).  The following Chernoff bound can be used to
show that certain events occur with high probability.

\begin{theorem}[{Chernoff Bound~\cite[Theorems 4.4 and 4.5]{mu-pc-05}}]
  \label{thm:chernoff}
  Let $X_1, \dots, X_n$ be independent random variables with $X_i \in
  \{0, 1\}$ and let $X$ be their sum.  Then, for $\varepsilon \in (0, 1]$
  \begin{align*}
    \Pr[X \ge (1 + \varepsilon)\mathbb{E}[X]] &\le e^{- \varepsilon^2/3 \SpaciousCdot \mathbb{E}[X]}.
  \end{align*}
\end{theorem}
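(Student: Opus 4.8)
The plan is to use the standard exponential-moment (Chernoff--Bernstein) method. Write $\mu = \mathbb{E}[X]$ and $p_i = \Pr[X_i = 1]$, so that $\mu = \sum_{i=1}^n p_i$. For an arbitrary parameter $t > 0$, Markov's inequality applied to the nonnegative random variable $e^{tX}$ gives $\Pr[X \ge (1+\varepsilon)\mu] = \Pr[e^{tX} \ge e^{t(1+\varepsilon)\mu}] \le \mathbb{E}[e^{tX}]\, e^{-t(1+\varepsilon)\mu}$. Since the $X_i$ are independent, the moment generating function factorizes, $\mathbb{E}[e^{tX}] = \prod_{i=1}^n \mathbb{E}[e^{tX_i}] = \prod_{i=1}^n (1 + p_i(e^t - 1))$, and applying $1 + x \le e^x$ termwise bounds this by $\exp((e^t - 1)\sum_i p_i) = \exp((e^t - 1)\mu)$.

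Combining the two estimates yields $\Pr[X \ge (1+\varepsilon)\mu] \le \exp\bigl(((e^t - 1) - t(1+\varepsilon))\,\mu\bigr)$ for every $t > 0$. The next step is to minimize the exponent over $t$; differentiating $(e^t - 1) - t(1+\varepsilon)$ shows the optimum is attained at $t = \ln(1+\varepsilon)$, which is positive because $\varepsilon > 0$, and substituting it turns the bound into $\Pr[X \ge (1+\varepsilon)\mu] \le \bigl(e^{\varepsilon}(1+\varepsilon)^{-(1+\varepsilon)}\bigr)^{\mu}$.

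It then remains to show that $e^{\varepsilon}(1+\varepsilon)^{-(1+\varepsilon)} \le e^{-\varepsilon^2/3}$ for $\varepsilon \in (0,1)$, i.e.\ that $g(\varepsilon) := (1+\varepsilon)\ln(1+\varepsilon) - \varepsilon - \varepsilon^2/3 \ge 0$ on $(0,1)$. I expect this elementary estimate to be the only real work: it is \emph{not} simply the Taylor expansion $(1+\varepsilon)\ln(1+\varepsilon) - \varepsilon = \varepsilon^2/2 - \varepsilon^3/6 + \cdots$, because the contribution of the higher-order terms must be controlled uniformly over all of $(0,1)$, not only for small $\varepsilon$. I would handle it by noting $g(0) = 0$ and $g'(\varepsilon) = \ln(1+\varepsilon) - 2\varepsilon/3$ with $g'(0) = 0$; since $g''(\varepsilon) = \tfrac{1}{1+\varepsilon} - \tfrac{2}{3}$ is positive for $\varepsilon < 1/2$ and negative for $\varepsilon > 1/2$, the derivative $g'$ is unimodal on $(0,1)$ and nonnegative at both ends ($g'(0) = 0$ and $g'(1) = \ln 2 - 2/3 > 0$), hence $g' \ge 0$ throughout, so $g$ is nondecreasing and $g \ge g(0) = 0$. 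Raising $e^{-\varepsilon^2/3}$ to the power $\mu = \mathbb{E}[X]$ then gives the claimed bound $\Pr[X \ge (1+\varepsilon)\mathbb{E}[X]] \le e^{-\varepsilon^2/3 \cdot \mathbb{E}[X]}$.
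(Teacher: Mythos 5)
The paper states this Chernoff bound as a black-box citation of \cite[Theorem~1.1]{dp-cmara-12} and does not include a proof, so there is no in-paper argument to compare against. Your proof is the standard exponential-moment (Markov on $e^{tX}$) derivation, and it is correct: the moment-generating-function bound, the optimizing choice $t = \ln(1+\varepsilon)$, and the reduction to showing $(1+\varepsilon)\ln(1+\varepsilon) - \varepsilon \ge \varepsilon^2/3$ on $(0,1)$ are all right, and your handling of that last inequality via $g(0) = g'(0) = 0$, unimodality of $g'$ from the sign change of $g''$ at $\varepsilon = 1/2$, and $g'(1) = \ln 2 - 2/3 > 0$ is a clean and complete way to get the uniform bound rather than relying on a Taylor truncation.
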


Usually, it suffices to show that a random variable does not exceed an
upper bound.  The following corollary shows that a bound on the
expected value suffices to obtain concentration.

\begin{corollary}
  \label{col:chernoff}
  Let $X_1, \dots, X_n$ be independent random variables with $X_i \in
  \{0, 1\}$, let $X$ be their sum, and let $f(n)$ be an upper bound on
  $\mathbb{E}[X]$.  Then, for $\varepsilon \in (0, 1)$ 
  \begin{align*}
    \Pr[X \ge (1 + \varepsilon)f(n)] \le e^{-\varepsilon^2/3 \SpaciousCdot f(n)}.
  \end{align*}
\end{corollary}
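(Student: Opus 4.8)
The plan is to reduce the statement directly to Theorem~\ref{thm:chernoff} by ``padding'' the sum so that its expectation becomes exactly $f(n)$. Write $\mu = \mathbb{E}[X]$, so that $\mu \le f(n)$ by hypothesis. If $f(n) = 0$, then $\mu = 0$, hence $X = 0$ almost surely, and the claimed inequality reads $1 \le e^{0} = 1$, so it holds trivially; assume $f(n) > 0$ from now on.

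Next I would introduce auxiliary randomness to close the gap $f(n) - \mu \ge 0$. Concretely, on a suitably enlarged probability space pick an integer $N \ge f(n) - \mu$ (for instance $N = \lceil f(n) - \mu \rceil$, with $N = 0$ if $f(n) = \mu$) and let $Y_1, \dots, Y_N$ be independent $\{0,1\}$-valued random variables, independent of $X_1, \dots, X_n$, each with success probability $(f(n) - \mu)/N \in [0,1]$. Set $X' = \sum_{i=1}^n X_i + \sum_{j=1}^N Y_j$. Then $X'$ is again a sum of independent $\{0,1\}$ random variables and, by construction, $\mathbb{E}[X'] = \mu + (f(n) - \mu) = f(n)$.

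Finally, since $X' \ge X$ pointwise, for every $\varepsilon \in (0,1)$ we get
\begin{align*}
  \Pr[X \ge (1 + \varepsilon)f(n)] \le \Pr[X' \ge (1 + \varepsilon)f(n)] = \Pr[X' \ge (1 + \varepsilon)\mathbb{E}[X']] \le e^{-\varepsilon^2/3 \cdot \mathbb{E}[X']} = e^{-\varepsilon^2/3 \cdot f(n)},
\end{align*}
where the last inequality is Theorem~\ref{thm:chernoff} applied to $X'$, which is legitimate precisely because $\varepsilon$ lies in $(0,1)$ — the same range as in the corollary.

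There is essentially no hard step here; the only point to get right is that the padding must be carried out with fresh Bernoulli variables rather than by rescaling the deviation parameter. The naive alternative would replace $\varepsilon$ by some $\varepsilon' = (1+\varepsilon)f(n)/\mu - 1 \ge \varepsilon$, which can exceed $1$ and thus fall outside the range where Theorem~\ref{thm:chernoff} is stated, whereas the padding argument leaves $\varepsilon$ untouched and keeps the application of the theorem within its stated hypotheses.
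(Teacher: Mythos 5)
Your proof is correct and takes essentially the same route as the paper: both construct a dominating random variable $X'$ with $\mathbb{E}[X'] = f(n)$ and $X \le X'$ pointwise, then apply Theorem~\ref{thm:chernoff} to $X'$. The one genuine difference is that the paper merely asserts ``such an $X'$ exists'' without explaining how it can be realized as a sum of independent $\{0,1\}$-valued variables (which is what Theorem~\ref{thm:chernoff} actually requires), whereas you construct $X'$ explicitly by adjoining fresh independent Bernoulli variables $Y_1,\dots,Y_N$ with total mean $f(n)-\mu$. Your version therefore closes a small gap in the paper's argument and also handles the degenerate case $f(n)=0$, which the paper leaves implicit. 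Your closing remark about why one should pad rather than rescale $\varepsilon$ is well taken: rescaling could push the deviation parameter outside $(0,1)$, the stated range of Theorem~\ref{thm:chernoff}.
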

\begin{proof}
  We define random variables $X_1', \dots, X_n'$ with $X_i' \ge X_i$
  for every outcome, in such a way that $X' = \sum_{i \in [n]} X_i'$
  has expected value $\mathbb{E}[X'] = f(n)$.  Note that $X' \ge X$
  for every outcome and that $X'$ exists as $f(n) \ge \mathbb{E}[X]$.
  Since $X \le X'$, it holds that
  \begin{align*}
    \Pr[X \ge (1 + \varepsilon)f(n)] \le \Pr[X' \ge (1 + \varepsilon)f(n)] = \Pr[X' \ge (1 + \varepsilon)\mathbb{E}(X')].
  \end{align*}
  Using \Cref{thm:chernoff} we can derive that
  \begin{equation*}
    \Pr[X' \ge (1 + \varepsilon)\mathbb{E}[X']] \le e^{-\varepsilon^2/3 \SpaciousCdot \mathbb{E}[X']} = e^{-{\varepsilon^2/3 \SpaciousCdot f(n)}}.
  \end{equation*}
\end{proof}

While the Chernoff bound considers the sum of indicator random
variables, we often have to deal with different functions of random
variables.  In this case tight bounds on the probability that the
function deviates a lot from its expected value can be obtained using
the method of bounded differences.  Let $X_1, \dots, X_n$ be
independent random variables taking values in a set $S$.  We say that
a function $f \colon S^{n} \rightarrow \mathbb{R}$ satisfies the
\emph{bounded differences condition} if for all $i \in [n]$ there
exists a $\Delta_i \ge 0$ such that
\begin{align}
  \label{eq:bounded-differences-condition}
  \vert f(\boldsymbol{x}) - f(\boldsymbol{x}') \vert \le \Delta_i,
\end{align}
for all $\boldsymbol{x}, \boldsymbol{x}' \in S^{n}$ that differ only
in the $i$-th component.

\begin{theorem}[{Method of Bounded Differences~\cite[Corollary
    5.2]{dp-cmara-12}}]
  \label{thm:bounded-differences}
  Let $X_1, \dots, X_n$ be independent random variables taking values
  in a set $S$ and let $f \colon S^{n} \rightarrow \mathbb{R}$ be a
  function that satisfies the bounded differences condition with
  parameters $\Delta_i \ge 0$ for $i \in [n]$.  Then for
  $\Delta = \sum_i \Delta_i^{2}$ it holds that
  \begin{align*}
    \Pr[f > \mathbb{E}[f] + t] \le e^{-2 t^2 / \Delta}.
  \end{align*}
\end{theorem}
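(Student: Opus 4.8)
The statement is the classical McDiarmid inequality, so the plan is to follow the standard martingale argument. First I would build the \emph{Doob martingale} associated with $f$ and the filtration generated by $X_1, \dots, X_n$: set $Z_0 = \mathbb{E}[f]$ and $Z_i = \mathbb{E}[f(X_1, \dots, X_n) \mid X_1, \dots, X_i]$ for $i \in \{1, \dots, n\}$, so that $Z_n = f(X_1, \dots, X_n)$ (since $f$ is a deterministic function of the $X_j$) and $(Z_i)$ is a martingale because $\mathbb{E}[Z_i \mid X_1, \dots, X_{i-1}] = Z_{i-1}$. The quantity to control is $Z_n - Z_0 = f - \mathbb{E}[f]$, written as the telescoping sum of the martingale differences $D_i = Z_i - Z_{i-1}$.

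The next step is to show that, conditioned on $X_1, \dots, X_{i-1}$, the difference $D_i$ ranges over an interval of length at most $\Delta_i$. Here I would use the independence of the $X_j$: fixing $X_1, \dots, X_{i-1}$, both $Z_i$ and $Z_{i-1}$ are obtained by averaging $f$ over $X_{i+1}, \dots, X_n$, with $Z_i$ additionally depending on the value taken by $X_i$ while $Z_{i-1}$ averages that value out as well. Comparing two possible values $x_i, x_i'$ of the $i$-th coordinate under a coupling that leaves the other coordinates identical and invoking the bounded differences condition~\eqref{eq:bounded-differences-condition} pointwise shows that the conditional range of $D_i$ is at most $\Delta_i$; in particular $\mathbb{E}[D_i \mid X_1, \dots, X_{i-1}] = 0$ and $D_i$ lies in an interval $[L_i, L_i + \Delta_i]$ whose endpoints are measurable with respect to $X_1, \dots, X_{i-1}$.

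Then I would run the exponential-moment (Chernoff) method. For $s > 0$, Markov's inequality gives $\Pr[Z_n - Z_0 > t] \le e^{-st}\, \mathbb{E}[e^{s(Z_n - Z_0)}]$, and I would bound $\mathbb{E}[e^{s(Z_n - Z_0)}]$ by peeling off one martingale difference at a time: conditioning on $X_1, \dots, X_{n-1}$, pulling out the measurable factor $e^{s(Z_{n-1} - Z_0)}$, and applying Hoeffding's lemma (a mean-zero random variable supported on an interval of length $\Delta_n$ has moment generating function at most $e^{s^2 \Delta_n^2 / 8}$) to $D_n$ yields $\mathbb{E}[e^{s(Z_n - Z_0)} \mid X_1, \dots, X_{n-1}] \le e^{s^2 \Delta_n^2/8}\, e^{s(Z_{n-1} - Z_0)}$. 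Iterating via the tower property gives $\mathbb{E}[e^{s(Z_n - Z_0)}] \le \prod_{i=1}^{n} e^{s^2 \Delta_i^2/8} = e^{s^2 \Delta / 8}$, hence $\Pr[Z_n - Z_0 > t] \le e^{-st + s^2 \Delta/8}$, and choosing $s = 4t/\Delta$ to minimise the exponent yields the claimed bound $e^{-2t^2/\Delta}$.

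The main obstacle is the second step: rigorously establishing that the martingale differences have conditional range at most $\Delta_i$. This needs the right coupling on the coordinates and care that what enters Hoeffding's lemma is the \emph{range} (supremum minus infimum of the conditional law of $D_i$) rather than a pointwise absolute bound on $D_i$; using a cruder $|D_i| \le \Delta_i$ bound would cost a factor of $4$ in the exponent, so obtaining the constant $8$ in Hoeffding's lemma — and thus the factor $2$ in the final exponent — hinges on this distinction. The remainder is bookkeeping with the tower property of conditional expectation and a one-variable optimisation over $s$.
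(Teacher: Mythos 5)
The paper does not supply a proof of this statement; Theorem~\ref{thm:bounded-differences} is cited verbatim as Corollary~5.2 of Dubhashi and Panconesi's book, so there is no ``paper proof'' to compare against. Your proposal reproduces the standard Doob-martingale/Azuma--Hoeffding derivation of McDiarmid's inequality, which is essentially the argument in the cited reference, and it is correct. You also correctly flag the one delicate point: the bounded differences condition must be translated into a bound on the conditional \emph{range} of each martingale increment $D_i$ (not merely $|D_i| \le \Delta_i$), so that Hoeffding's lemma gives $\exp(s^2\Delta_i^2/8)$ rather than $\exp(s^2\Delta_i^2/2)$; this is exactly what produces the constant $2$ in the final exponent $e^{-2t^2/\Delta}$. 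Establishing that conditional-range bound does require the independence of $X_1,\dots,X_n$, via the coupling you describe in which only the $i$-th coordinate is resampled, and your sketch handles that correctly.
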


As before, we are usually interested in showing that a random variable
does not exceed a certain upper bound with high probability.
Analogously to the Chernoff bound in \Cref{col:chernoff}, one can show
that, again, an upper bound on the expected value suffices to show
concentration.

\begin{corollary}
  \label{col:bounded-differences}
  Let $X_1, \dots, X_n$ be independent random variables taking values
  in a set $S$ and let $f \colon S^{n} \rightarrow \mathbb{R}$ be a
  function that satisfies the bounded differences condition with
  parameters $\Delta_i \ge 0$ for $i \in [n]$.  If $g(n)$ is an upper
  bound on $\mathbb{E}[f]$ then for $\Delta = \sum_i \Delta_i^{2}$ and
  $c \ge 1$ it holds that
  \begin{align*}
    \Pr[f > c g(n)] \le e^{-2 ((c - 1)g(n))^2 / \Delta}.
  \end{align*}
\end{corollary}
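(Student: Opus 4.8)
The plan is to obtain the bound by a single application of the Method of Bounded Differences (Theorem~\ref{thm:bounded-differences}), in direct analogy with how Corollary~\ref{col:chernoff} follows from the Chernoff bound. The crucial observation is that the bounded-differences condition is a property of $f$ alone (it never mentions $\mathbb{E}[f]$), so the parameters $\Delta_i$, and hence $\Delta = \sum_i \Delta_i^2$, are exactly the same whether we measure the deviation of $f$ around its true mean or around the upper bound $g(n)$. All that remains is to rewrite the threshold $c\,g(n)$ as a valid additive deviation above $\mathbb{E}[f]$.

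Concretely, I would set $t = (c-1)\,g(n)$, which is nonnegative since $c \ge 1$ and $g(n) \ge 0$ in the regime in which the statement is applied. Because $g(n) \ge \mathbb{E}[f]$, adding $t$ to both sides gives $\mathbb{E}[f] + t \le g(n) + (c-1)g(n) = c\,g(n)$, so the event $\{f > c\,g(n)\}$ is contained in $\{f > \mathbb{E}[f] + t\}$. Applying Theorem~\ref{thm:bounded-differences} then yields
\begin{align*}
  \Pr[f > c\,g(n)] \le \Pr[f > \mathbb{E}[f] + t] \le e^{-2 t^2 / \Delta} = e^{-2 ((c-1) g(n))^2 / \Delta},
\end{align*}
which is exactly the claimed inequality.

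There is no genuine obstacle here; the argument is essentially a one-line monotonicity manipulation, and the only thing worth checking is that the substituted deviation $t = (c-1)g(n)$ is nonnegative, so that Theorem~\ref{thm:bounded-differences} can be invoked in the intended direction — this is guaranteed by $c \ge 1$ together with $g(n) \ge 0$. An alternative, mirroring the proof of Corollary~\ref{col:chernoff} more literally, would be to introduce a random variable $f'$ with $f \le f'$ pointwise, $\mathbb{E}[f'] = g(n)$, and the same bounded-differences parameters $\Delta_i$, and then apply Theorem~\ref{thm:bounded-differences} to $f'$ with $t = (c-1)g(n)$; but the direct reduction above makes constructing such an $f'$ unnecessary.
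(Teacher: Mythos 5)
Your proof is correct, and it takes a slightly different — and in fact simpler — route than the paper's. The paper mirrors its proof of Corollary~\ref{col:chernoff} verbatim: it introduces an auxiliary function $f' = f + h(n)$ with $h(n) \ge 0$ chosen so that $\mathbb{E}[f'] = g(n)$, observes that $f'$ inherits the same bounded-differences parameters $\Delta_i$ and dominates $f$ pointwise, and then applies Theorem~\ref{thm:bounded-differences} to $f'$ with $t = (c-1)\mathbb{E}[f']$. You instead bypass the auxiliary construction entirely via the event containment $\{f > c\,g(n)\} \subseteq \{f > \mathbb{E}[f] + (c-1)g(n)\}$, which follows immediately from $g(n) \ge \mathbb{E}[f]$ and $c \ge 1$. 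Your observation that the $\Delta_i$ (and hence $\Delta$) are intrinsic to $f$ and unaffected by where the threshold sits is exactly the right justification for why this shortcut is legitimate. The auxiliary-variable detour is genuinely necessary in Corollary~\ref{col:chernoff}, where the Chernoff bound's threshold is \emph{multiplicative} in $\mathbb{E}[X]$ and so cannot be absorbed by a simple containment; for the \emph{additive} deviation $\mathbb{E}[f] + t$ in Theorem~\ref{thm:bounded-differences} the containment suffices, and the paper's use of $f'$ here is best read as a choice of stylistic uniformity rather than necessity. One small caveat you already flagged: the argument needs $t = (c-1)g(n) \ge 0$, i.e.\ $g(n) \ge 0$, which does not formally follow from $g(n) \ge \mathbb{E}[f]$ alone but holds in every application in the paper; the paper's own proof has the same implicit assumption, since it too ends up invoking the theorem with $t = (c-1)g(n)$.
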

\begin{proof}
  Let $h(n) \ge 0$ be a function with $f' = f + h(n)$ such that
  $\mathbb{E}[f'] = g(n)$.  Note that $h(n)$ exists since $g(n) \ge
  \mathbb{E}[f]$.  As a consequence, we have $f \le f'$ for all
  outcomes of $X_1, \dots, X_n$ and it holds that 
  \begin{align*}
    \vert f'(\boldsymbol{x}) - f'(\boldsymbol{x}') \vert = \vert f(\boldsymbol{x}) + h(n) - f(\boldsymbol{x}') - h(n) \vert = \vert f(\boldsymbol{x}) - f(\boldsymbol{x}') \vert,
  \end{align*}
  for all $\boldsymbol{x}, \boldsymbol{x}' \in S^{n}$.  Consequently,
  $f'$ satisfies the bounded differences condition with the same
  parameters $\Delta_i$ as $f$.  Since $f \le f'$ it holds that
  \begin{align*}
    \Pr[f > c g(n)] \le \Pr[f' > c g(n)] = \Pr[f' > c \mathbb{E}[f']].
  \end{align*}
  Choosing $t = (c - 1)\mathbb{E}[f']$ allows us to apply
  \Cref{thm:bounded-differences} to conclude that
  \begin{equation*}
    \Pr[f' > c \mathbb{E}[f']] = \Pr[f' > \mathbb{E}[f'] + t] \le e^{-2((c-1)\mathbb{E}[f'])^2 / \Delta} = e^{-2((c-1)g(n))^2 / \Delta}.
  \end{equation*}
\end{proof}

A disadvantage of the method of bounded differences is that one has to
consider the worst possible change in $f$ when changing one variable
and the resulting bound becomes worse the larger this change.  A way
to overcome this issue is to consider the method of \emph{typical}
bounded differences instead.  Intuitively, it allows us to milden the
effect of the change in the worst case, if it is sufficiently
unlikely, and to focus on the typical cases where the change should be
small, instead.  Formally, we say that a function
$f \colon S^{n} \rightarrow \mathbb{R}$ satisfies the \emph{typical
  bounded differences condition} with respect to an event
$A \subseteq S^{n}$ if for all $i \in [n]$ there exist
$\Delta_i^{A} \le \Delta_i$ such that
\begin{align}
  \label{eq:typical-bounded-differences-condition}
  \vert f(\boldsymbol{x}) - f(\boldsymbol{x}') \vert \le
  \begin{cases}
    \Delta_i^{A}, & \text{if}~\boldsymbol{x} \in A,\\
    \Delta_i, & \text{otherwise},
  \end{cases}
\end{align}
for all $\boldsymbol{x}, \boldsymbol{x}' \in S^{n}$ that differ only
in the $i$-th component.

\begin{theorem}[{Method of Typical Bounded Differences,~\cite[Theorem 2\protect\footnotemark]{w-mtbd-16}}]
  \footnotetext{We state a slightly simplified version in order to
    facilitate understandability.  The original theorem allows for the
    random variables $X_1, \dots, X_n$ to take values in different
    sets.}
  \label{thm:typical-bounded-differences}
  Let $X_1, \dots, X_n$ be independent random variables taking values
  in a set $S$ and let $A \subseteq S^{n}$ be an event.  Furthermore,
  let $f \colon S^{n} \rightarrow \mathbb{R}$ be a function that
  satisfies the typical bounded differences condition with respect to
  $A$ and with parameters $\Delta_i^{A} \le \Delta_i$ for $i \in [n]$.
  Then for all $\varepsilon_1, \dots, \varepsilon_n \in (0, 1]$ there
  exists an event~$B$ satisfying $\bar{B} \subseteq A$ and
  $\Pr[B] \le \Pr[\bar{A}] \cdot \sum_{i \in [n]} 1/\varepsilon_i$,
  such that for
  $\Delta = \sum_{i \in [n]} (\Delta_i^{A} + \varepsilon_i (\Delta_i -
  \Delta_i^{A}))^2$ and $t \ge 0$ it holds that
  \begin{align*}
    \Pr[f > \mathbb{E}[f] + t \land \bar{B}] \le e^{-t^2 / (2\Delta)}.
  \end{align*}
\end{theorem}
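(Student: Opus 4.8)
The plan is to prove this by a Doob-martingale plus stopping-time argument, which is the standard route for ``method of bounded differences relative to a bad event'' statements. I would first dispense with the tempting shortcut of replacing $f$ by a surrogate $g\colon\Omega^n\to\mathbb{R}$ that agrees with $f$ on $A$ and satisfies the \emph{ordinary} bounded differences condition with the smaller parameters $c_i:=\Delta_i^A+\varepsilon_i(\Delta_i-\Delta_i^A)$ everywhere (so that Theorem~\ref{thm:bounded-differences} could be applied to $g$): such a $g$ need not exist, since a function with small local oscillation on a large subset of a product space need not extend to one with small oscillation everywhere. Instead I would perform this ``Lipschitz repair'' dynamically, one coordinate at a time along the Doob martingale of $f$, freezing the process as soon as the revealed coordinates make the configuration atypical.

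Concretely, put $\mathcal{F}_k=\sigma(X_1,\dots,X_k)$ and $Z_k=\mathbb{E}[f\mid\mathcal{F}_k]$, so $Z_0=\mathbb{E}[f]$ and $Z_n=f$. Writing $X^{(k)}$ for the configuration obtained from $X$ by resampling the $k$-th coordinate with an independent fresh copy, I would use that $\mathbb{E}[f(X^{(k)})\mid\mathcal{F}_k]=Z_{k-1}$ (by independence of the coordinates), so that $|Z_k-Z_{k-1}|\le\mathbb{E}[\,|f(X)-f(X^{(k)})|\mid\mathcal{F}_k\,]$. Since $X$ and $X^{(k)}$ differ only in coordinate $k$, applying the typical bounded differences condition~\eqref{eq:typical-bounded-differences-condition} with $x=X^{(k)}$ and using the symmetry of $|f(x)-f(x')|$ gives $|f(X)-f(X^{(k)})|\le\Delta_k^A$ whenever $X^{(k)}\in A$, and $\le\Delta_k$ always; hence $|f(X)-f(X^{(k)})|\le\Delta_k^A+(\Delta_k-\Delta_k^A)\mathbf{1}[X^{(k)}\notin A]$. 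Taking the conditional expectation and noting that in $X^{(k)}$ the coordinates $k,k+1,\dots,n$ are all fresh, I expect to obtain
\begin{align*}
  |Z_k-Z_{k-1}|\;\le\;\Delta_k^A+(\Delta_k-\Delta_k^A)\,p_k,\qquad p_k:=\Pr[\,X\notin A\mid\mathcal{F}_{k-1}\,].
\end{align*}
The crucial feature is that $p_k$ is $\mathcal{F}_{k-1}$-measurable --- the atypicality probability \emph{before} coordinate $k$ is revealed. Engineering the bad probability into this predictable form is the step I expect to be the main obstacle; everything afterwards is a routine stopping argument.

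Next I would fix $\varepsilon_1,\dots,\varepsilon_n\in(0,1]$ and set $\tau=\min\{k\le n:p_k>\varepsilon_k\}$ (with $\tau=n+1$ if no such $k$ exists). Because $\{p_k>\varepsilon_k\}\in\mathcal{F}_{k-1}$, the time $\sigma:=\tau-1\in\{0,\dots,n\}$ is a stopping time for $(\mathcal{F}_k)$, so $W_k:=Z_{k\wedge\sigma}$ is a martingale with $W_0=\mathbb{E}[f]$. On $\{k\le\sigma\}$ all of $p_1\le\varepsilon_1,\dots,p_k\le\varepsilon_k$ hold, so by the displayed increment bound $|W_k-W_{k-1}|\le\Delta_k^A+\varepsilon_k(\Delta_k-\Delta_k^A)=c_k$; for $k>\sigma$ the increment vanishes. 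Hence $W$ has deterministically bounded increments, and Azuma--Hoeffding gives, with $\Delta=\sum_k c_k^2$,
\begin{align*}
  \Pr[\,W_n-W_0>t\,]\;\le\;e^{-t^2/(2\Delta)}.
\end{align*}

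Finally I would take $B:=\bar{A}\cup\{\tau\le n\}$, so that $\bar{B}=A\cap\{\tau=n+1\}\subseteq A$, as required. On $\bar{B}$ we have $\sigma=n$, hence $W_n=Z_n=f$, so $\{f>\mathbb{E}[f]+t\}\cap\bar{B}\subseteq\{W_n-W_0>t\}$, and the previous display yields $\Pr[f>\mathbb{E}[f]+t\land\bar{B}]\le e^{-t^2/(2\Delta)}$ --- the desired bound with exactly the stated $\Delta=\sum_i(\Delta_i^A+\varepsilon_i(\Delta_i-\Delta_i^A))^2$. For the size of $B$, a union bound and Markov's inequality (using $\mathbb{E}[p_k]=\Pr[\bar{A}]$) give $\Pr[B]\le\Pr[\bar{A}]+\sum_k\Pr[p_k>\varepsilon_k]\le\Pr[\bar{A}](1+\sum_i 1/\varepsilon_i)$; a slightly more careful bookkeeping --- reorganizing the coordinate tests so that $\bar{A}$ is not counted as a separate failure --- trims the additive $1$ and recovers $\Pr[B]\le\Pr[\bar{A}]\sum_i 1/\varepsilon_i$.
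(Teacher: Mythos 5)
The paper does not prove this theorem; it imports it verbatim from Warnke~\cite{w-mtbd-16}, so there is no in-paper proof to compare against. Your Doob-martingale-with-stopping-time argument is, up to bookkeeping, the standard (and, as far as I can tell, Warnke's own) route, and the core of it is correct: the increment bound $|Z_k-Z_{k-1}|\le\Delta_k^A+(\Delta_k-\Delta_k^A)p_k$ with a \emph{predictable} $p_k$, the predictable stopping time $\sigma$, the stopped martingale having deterministically bounded increments $c_k$, Azuma, and the identification $W_n=f$ on $\bar B$ all go through.

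Two small points worth tightening. First, the justification ``in $X^{(k)}$ the coordinates $k,k+1,\dots,n$ are all fresh'' is imprecise: only coordinate $k$ is resampled in $X^{(k)}$; coordinates $k+1,\dots,n$ are the \emph{same} as in $X$, merely not yet revealed by $\mathcal{F}_k$. The correct reason $\mathbb{E}[\mathbf{1}[X^{(k)}\notin A]\mid\mathcal{F}_k]=p_k$ is that under $\Pr[\cdot\mid\mathcal{F}_k]$ the resampled $X_k'$ is independent of $X_k$ with the same law and $X_{k+1},\dots,X_n$ are independent of $\mathcal{F}_k$, so the conditional probability is a function of $X_1,\dots,X_{k-1}$ alone and coincides with $\Pr[X\notin A\mid\mathcal{F}_{k-1}]$. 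Second, the additive $1$ in your bound $\Pr[B]\le\Pr[\bar A](1+\sum_i 1/\varepsilon_i)$ really can be dropped, and the cleanest way is to exploit that $p_1=\Pr[\bar A]$ is a constant: if $\Pr[\bar A]>\varepsilon_1$ then $\tau=1$ surely, $\Pr[B]=1$, and the target bound $\Pr[\bar A]\sum_i 1/\varepsilon_i\ge\Pr[\bar A]/\varepsilon_1>1$ holds trivially; otherwise $\{p_1>\varepsilon_1\}=\emptyset$, so the union bound runs only over $k\ge 2$ and the leftover $\Pr[\bar A]\le\Pr[\bar A]/\varepsilon_1$ exactly fills the $i=1$ slot, giving $\Pr[B]\le\Pr[\bar A]\sum_{i=1}^n 1/\varepsilon_i$. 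With these tweaks your argument is a complete proof of the stated theorem.
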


Intuitively, the choice of the values for $\varepsilon_i$ has two
effects.  On the one hand, choosing $\varepsilon_i$ small allows us to
compensate for a potentially large worst-case change $\Delta_i$.  On
the other hand, this also increases the bound on the probability of
the event $B$ that represents the atypical case.  However, in that
case one can still obtain meaningful bounds if the typical event $A$
occurs with high enough probability.  Again, it is usually sufficient
to show that the function $f$ does not exceed an upper bound on its
expected value with high probability.  The proof of the following
corollary is analogous to the one of \Cref{col:bounded-differences}.

\begin{corollary}[{\cite[Corollary 4.13]{bff-espsf-22}}]
  \label{col:typical-bounded-differences}
  Let $X_1, \dots, X_n$ be independent random variables taking values
  in a set $S$ and let $A \subseteq S^{n}$ be an event.  Furthermore,
  let $f \colon S^{n} \rightarrow \mathbb{R}$ be a function that
  satisfies the typical bounded differences condition with respect to
  $A$ and with parameters $\Delta_i^{A} \le \Delta_i$ for $i \in [n]$
  and let $g(n)$ be an upper bound on $\mathbb{E}[f]$.  Then for all
  $\varepsilon_1, \dots, \varepsilon_n \in (0, 1]$,
  $\Delta = \sum_{i \in [n]} (\Delta_i^{A} + \varepsilon_i (\Delta_i -
  \Delta_i^{A}))^2$, and $c \ge 1$ it holds that
  \begin{align*}
    \Pr[f > c g(n)] \le e^{-((c - 1) g(n))^2 / (2\Delta)} + \Pr[\bar{A}] \sum_{i \in [n]} 1 / \varepsilon_i.
  \end{align*}
\end{corollary}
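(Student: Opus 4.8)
The plan is to mimic the proof of Corollary~\ref{col:bounded-differences} almost verbatim, replacing the appeal to Theorem~\ref{thm:bounded-differences} with one to Theorem~\ref{thm:typical-bounded-differences}. First I would pass to a shifted function. Set $h(n) = g(n) - \mathbb{E}[f] \ge 0$, which is well defined and nonnegative since $g(n)$ is an upper bound on $\mathbb{E}[f]$, and let $f' = f + h(n)$. Because $h(n)$ is a constant, $\mathbb{E}[f'] = g(n)$ and $f \le f'$ pointwise, and for any $\boldsymbol{x}, \boldsymbol{x}' \in \Omega^n$ differing only in coordinate $i$ we have $|f'(\boldsymbol{x}) - f'(\boldsymbol{x}')| = |f(\boldsymbol{x}) - f(\boldsymbol{x}')|$. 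Crucially, the case distinction in the typical bounded differences condition~\eqref{eq:typical-bounded-differences-condition} is phrased in terms of the argument $\boldsymbol{x}$ (whether $\boldsymbol{x} \in A$), not in terms of $f$, so it is unaffected by the shift; hence $f'$ satisfies the typical bounded differences condition with respect to $A$ with exactly the same parameters $\Delta_i^A \le \Delta_i$.

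Next I would translate the target event. Since $f \le f'$,
\[
\Pr[f > c g(n)] \le \Pr[f' > c g(n)] = \Pr[f' > c \mathbb{E}[f']].
\]
Choosing $t = (c - 1)\mathbb{E}[f'] = (c - 1) g(n) \ge 0$ (using $c \ge 1$ and $g(n) \ge 0$), Theorem~\ref{thm:typical-bounded-differences} applied to $f'$ with the given $\varepsilon_1, \dots, \varepsilon_n$ yields an event $B$ with $\bar{B} \subseteq A$ and $\Pr[B] \le \Pr[\bar{A}] \sum_i 1/\varepsilon_i$ such that $\Pr[f' > \mathbb{E}[f'] + t \land \bar{B}] \le e^{-t^2/(2\Delta)}$ for $\Delta = \sum_i (\Delta_i^A + \varepsilon_i(\Delta_i - \Delta_i^A))^2$, which is precisely the $\Delta$ in the statement. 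Finally I would split on $B$:
\[
\Pr[f' > c\mathbb{E}[f']] \le \Pr[f' > \mathbb{E}[f'] + t \land \bar{B}] + \Pr[B] \le e^{-((c-1)g(n))^2/(2\Delta)} + \Pr[\bar{A}] \sum_i 1/\varepsilon_i,
\]
which is the claimed bound.

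I do not expect a genuine obstacle here; the proof is essentially bookkeeping around Theorem~\ref{thm:typical-bounded-differences}. The only points that warrant a moment's care are (i) checking that the constant shift $f \mapsto f + h(n)$ really preserves the typical bounded differences condition \emph{including the event $A$ under which the smaller parameter applies}, which works because that condition never references $f$ in its hypothesis, and (ii) verifying $t = (c-1)g(n) \ge 0$ so that Theorem~\ref{thm:typical-bounded-differences} is applicable, which is where the assumption $c \ge 1$ is used.
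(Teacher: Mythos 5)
Your proof is correct and follows essentially the same route as the paper's: shift $f$ to $f' = f + h(n)$ so that $\mathbb{E}[f'] = g(n)$, observe that this preserves the typical bounded differences condition (your explicit remark that the event $A$ depends on $\boldsymbol{x}$ and not on $f$, so the constant shift cannot affect which branch of~\eqref{eq:typical-bounded-differences-condition} applies, is the right justification), set $t = (c-1)g(n)$, apply Theorem~\ref{thm:typical-bounded-differences} to get the event $B$, and split on $B$ versus $\bar B$. The only cosmetic difference is that the paper spells out the split via the law of total probability while you invoke the union-bound form directly; the substance is identical.
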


\paragraph*{Useful Inequalities.}

Finally, computations can often be simplified by making use of the
fact that $1 \pm x$ can be closely approximated by $e^{\pm x}$ for
small $x$.  More precisely, we use the following lemmas, which have
been derived previously using the Taylor
approximation~\cite{k-shrg-16}.

\begin{lemma}[Lemma 2.1, \cite{k-shrg-16}]
  \label{lem:1+x-upper}
  Let $x \in \mathbb{R}$.  Then, $1 + x \le e^{x}$.
\end{lemma}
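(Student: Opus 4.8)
The plan is to reduce the inequality to a statement about the sign of a single real function. Define $g \colon \mathbb{R} \to \mathbb{R}$ by $g(x) = e^{x} - (1 + x)$; then the claim $1 + x \le e^{x}$ is equivalent to $g(x) \ge 0$ for all $x \in \mathbb{R}$.

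First I would differentiate: $g'(x) = e^{x} - 1$, which is negative for $x < 0$, zero at $x = 0$, and positive for $x > 0$. Hence $g$ is strictly decreasing on $(-\infty, 0)$ and strictly increasing on $(0, \infty)$, so it attains its global minimum at $x = 0$. Since $g(0) = e^{0} - 1 = 0$, we get $g(x) \ge g(0) = 0$ for every $x$, which rearranges to the desired bound. An equally short route is to invoke convexity: $e^{x}$ has second derivative $e^{x} > 0$, so it is convex and its graph lies on or above each of its tangent lines; the tangent at $x = 0$ is exactly $y = 1 + x$.

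There is essentially no obstacle here — the argument is elementary calculus. The only point worth stating explicitly is that the inequality is claimed for \emph{all} real $x$ (not merely small or nonnegative $x$, where it would also follow from a truncated Taylor expansion $e^{x} = 1 + x + e^{\xi}x^{2}/2 \ge 1 + x$); the monotonicity-of-$g$ argument covers the full range, including large negative $x$, where the bound is very loose but still valid.
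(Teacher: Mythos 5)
Your proof is correct and complete. Note that the paper itself does not prove this lemma at all — it is stated as a citation to Lemma~2.1 of~\cite{k-shrg-16}, so there is no in-paper argument to compare against. Your calculus argument (global minimum of $g(x)=e^x-(1+x)$ at $x=0$, or equivalently convexity of $e^x$ and its tangent at $0$) is the standard elementary proof and correctly covers all real $x$, including the large-negative regime you rightly single out as the only case a naive Taylor-truncation argument would not immediately handle.
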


\begin{lemma}[Corollary of Lemma 2.2, \cite{k-shrg-16}]
  \label{lem:1-x-lower}
  Let $x > 0$ with $x = o(1)$.  Then, $1 - x \ge e^{-(1 + o(1))x}$.
\end{lemma}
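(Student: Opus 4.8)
The plan is to reduce the claim to an estimate for the natural logarithm and then invoke (a rescaled form of) the Taylor-type bound already recorded in \cite{k-shrg-16}. Since $x = o(1)$, in particular $x \in (0,1)$ for all relevant $x$, so $1 - x > 0$ and taking logarithms is legitimate. The inequality $1 - x \ge e^{-(1+o(1))x}$ is then equivalent to $-\ln(1-x) \le (1+o(1))\,x$, and it suffices to establish this.

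To bound $-\ln(1-x)$ from above I would use the series expansion $-\ln(1-x) = \sum_{k \ge 1} x^k / k$, which converges for $x \in (0,1)$. Splitting off the first term and bounding $1/k \le 1$ for $k \ge 2$ gives $-\ln(1-x) \le x + \sum_{k \ge 2} x^k = x + \frac{x^2}{1-x} = x\left(1 + \frac{x}{1-x}\right)$. (Equivalently, if Lemma~2.2 of \cite{k-shrg-16} already states the bound $1 - x \ge e^{-x/(1-x)}$, one can simply start from there and skip the series manipulation.)

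It then remains to observe that the correction factor is $1 + o(1)$: because $x = o(1)$ we have $\frac{x}{1-x} = o(1)$, hence $1 + \frac{x}{1-x} = 1 + o(1)$. Combining, $-\ln(1-x) \le (1+o(1))\,x$, and exponentiating both sides (using monotonicity of $e^{(\cdot)}$) yields $1 - x = e^{\ln(1-x)} \ge e^{-(1+o(1))x}$, as claimed.

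There is essentially no real obstacle here; the only points that need (minimal) care are that $x$ must stay bounded away from $1$ so that $\ln(1-x)$ and the geometric tail $\sum_{k\ge2} x^k$ are well defined, which is ensured by $x = o(1)$, and that the "$o(1)$" in the exponent is understood uniformly as $x \to 0$, consistent with the asymptotic convention used throughout the paper.
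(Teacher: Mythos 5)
Your proof is correct, but it takes a genuinely different route from the paper. You work on the logarithmic side: you bound $-\ln(1-x)$ from above via the power series $\sum_{k\ge 1} x^k/k$, obtaining $-\ln(1-x) \le x\bigl(1 + \tfrac{x}{1-x}\bigr)$, and then observe that the multiplicative correction is $1+o(1)$ since $x = o(1)$. The paper instead stays entirely on the exponential side and avoids logarithms and series altogether: it introduces $\varepsilon = o(1)$ with $1 - x = e^{-\varepsilon}$, rewrites $e^{-\varepsilon} = e^{-(1+\varepsilon)(1 - 1/(1+\varepsilon))}$, and then uses only the already-recorded inequality $1 + \varepsilon \le e^{\varepsilon}$ (Lemma~\ref{lem:1+x-upper}) to conclude $1/(1+\varepsilon) \ge e^{-\varepsilon} = 1 - x$, hence $e^{-\varepsilon} \ge e^{-(1+\varepsilon)x}$. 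Your argument is arguably the more transparent one and makes the effective $o(1)$ term explicit as $x/(1-x)$ (equivalently the bound $1 - x \ge e^{-x/(1-x)}$, which is indeed the content of Lemma~2.2 in \cite{k-shrg-16}); the paper's version buys self-containedness, deriving the corollary from the single elementary lemma $1 + x \le e^x$ without invoking a series expansion. Both correctly handle the only delicate point, namely that $x = o(1)$ keeps $x$ bounded away from $1$ so the manipulations are legitimate.
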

\begin{proof}
  First, note that there exists an $\varepsilon = o(1)$ such that $1 -
  x = e^{-\varepsilon}$.  Therefore, it suffices to show that
  $e^{-\varepsilon} \ge e^{-(1 + \varepsilon)x}$.  It is easy to see
  that
  \begin{align*}
    e^{-\varepsilon} &= e^{-\left(1 + \varepsilon - \frac{1 + \varepsilon}{1 + \varepsilon} \right)} = e^{-(1 + \varepsilon)\left(1 - \frac{1}{1 + \varepsilon}\right)}.
  \end{align*}
  By Lemma~\ref{lem:1+x-upper} it holds that $1 + \varepsilon \le
  e^{\varepsilon}$ and, therefore, $1/(1 + \varepsilon) \ge
  e^{-\varepsilon}$.  Since $\varepsilon$ is chosen such that
  $e^{-\varepsilon} = 1 - x$, we can bound $1/(1+\varepsilon) \ge 1 -
  x$ in the above equation and obtain
  \begin{equation*}
    e^{-x} \ge e^{-(1 + \varepsilon)(1 - (1 - x))} = e^{-(1 + \varepsilon)x}.
    \qedhere
  \end{equation*}
\end{proof}

\begin{lemma}[Lemma 2.3, \cite{k-shrg-16}]
  \label{lem:1+x-frac-bound}
  Let $x \in \mathbb{R}$ with $x = \pm o(1)$. Then, $1/(1 + x) = 1 -
  \Theta(x)$.
\end{lemma}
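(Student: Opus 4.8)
The plan is to reduce the claim to an elementary algebraic identity together with the observation that $1 + x$ is bounded away from $0$. First I would rewrite the left-hand side as
\begin{align*}
  \frac{1}{1 + x} = \frac{(1 + x) - x}{1 + x} = 1 - \frac{x}{1 + x},
\end{align*}
so that it suffices to show that $x/(1 + x) = \Theta(x)$, i.e., that $x/(1 + x)$ and $x$ agree up to constant factors (and, in particular, share the same sign).

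Next, since $x = \pm o(1)$, there is an $n_0$ beyond which $|x| \le 1/2$, and hence $1 + x \in [1/2, 3/2]$. Consequently $1/(1 + x) \in [2/3, 2]$, so $1/(1 + x) = \Theta(1)$. Multiplying this bounded positive factor by $x$ gives $x/(1 + x) = \Theta(x)$, and substituting back into the identity above yields $1/(1 + x) = 1 - \Theta(x)$, as claimed.

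The computation is entirely routine; the only point that warrants any care is the meaning of $\Theta(\cdot)$ applied to the signed quantity $x$, which is precisely why I first isolate the factor $1/(1 + x)$ and verify that it lies strictly between two positive constants for $|x|$ small enough — this simultaneously certifies the constant-factor bounds and the agreement of signs. I do not expect any genuine obstacle in this proof; it is a one-line manipulation once the factor $1/(1 + x)$ is pulled out.
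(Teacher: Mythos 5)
The paper does not prove this lemma; it cites it from the literature (Lemma~2.3 of~\cite{k-shrg-16}) without reproducing an argument, so there is no internal proof to compare against. Your self-contained derivation is correct and is essentially the canonical way to see the statement: writing $1/(1+x) = 1 - x/(1+x)$ and then using $x = \pm o(1)$ to conclude $1/(1+x) = \Theta(1)$ (bounded above and below by positive constants for $n$ large) gives $x/(1+x) = \Theta(x)$ with the sign preserved, which is exactly what the notation $1 - \Theta(x)$ is meant to convey here. You are also right to flag the one subtle point, namely the meaning of $\Theta(\cdot)$ applied to a signed quantity; isolating the strictly positive bounded factor $1/(1+x)$ is precisely what makes that rigorous.
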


\section{An Improved Greedy Algorithm}
\label{sec:algorithm}

Previous insights about solving the vertex cover problem on hyperbolic
random graphs are based on the fact that the \emph{dominance reduction
  rule} reduces the graph to a remainder of simple
structure~\cite{bffk-svcpt-20}.  This rule states that a vertex $u$
can be safely added to the vertex cover (and, thus, be removed from
the graph) if it \emph{dominates} at least one other vertex, i.e., if
there exists a neighbor $v \in N(u)$ such that all neighbors of $v$
are also neighbors of $u$.

On hyperbolic random graphs, vertices near the center of the disk
dominate with high probability~\cite[Lemma 5]{bffk-svcpt-20}.
Therefore, it is not surprising that the standard greedy algorithm
that computes a vertex cover by repeatedly taking the vertex with the
largest degree achieves good approximation rates on such networks:
Since high degree vertices are near the disk center, the algorithm
essentially favors vertices that are likely to dominate and can be
safely added to the vertex cover anyway.

On the other hand, after (safely) removing high-degree vertices, the
remaining vertices all have similar (small) degree, meaning the
standard greedy algorithm basically picks the vertices at random.
Thus, in order to improve the approximation performance of the
algorithm, one has to improve on the parts of the graph that contain
the low-degree vertices.  Based on this insight, we derive a new
greedy algorithm that achieves close to optimal approximation rates
efficiently.  More formally, we prove the following main theorem.

\wormhole{thm-vertex-cover-efficient-approximation}
\begin{theorem}
  \label{thm:vertex-cover-cover-efficient-approximation}
  Let $G$ be a hyperbolic random graph on $n$ vertices.  Given the
  radii of the vertices, an approximate vertex cover of $G$ can be
  computed in time $\mathcal{O}(m \log(n))$, such that the
  approximation ratio is $(1 + o(1))$ asymptotically almost surely.
\end{theorem}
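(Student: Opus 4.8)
The plan is to run a center-first greedy phase — which is provably correct because central vertices dominate — and then to clean up the low-degree periphery, where greedy degenerates into essentially random choices; the approximation analysis then reduces to showing that $\mathrm{OPT}(G) = \Theta(n)$ while the periphery contributes only an additive $o(n)$. Concretely, since the radii are given, I would fix a threshold $\rho = \rho(n) < R$ (to be tuned), call a vertex \emph{inner} if $r(v) < \rho$ and \emph{outer} otherwise, and put all inner vertices into the cover $C$; these are essentially the vertices of highest degree, so this coincides with standard greedy's first, well-behaved phase. It then remains to cover $G' := G[V_{\ge\rho}]$. I would compute the connected components of $G'$ in $\mathcal{O}(m)$ time and, for each component of size at most a parameter $\tau$, compute an optimal cover by brute force (or a bounded-search-tree solver), while each larger component receives the endpoints of a maximal matching, i.e.\ a $2$-approximation; the output is $C$ together with these partial covers. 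The running time is dominated by sorting the vertices by radius and by the exact solves on the small components; choosing $\tau = \Theta(\log\log n)$ keeps the latter within $\mathcal{O}(m\log n)$, and $\tau$ is precisely the parameter trading running time for the fraction of the graph handled only approximately.

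For the center-first phase I would invoke that vertices near the center dominate asymptotically almost surely~\cite[Lemma~5]{bffk-svcpt-20}: for a suitable $\rho$ one shows a.a.s.\ that every inner vertex dominates a neighbor and that these relations are robust enough — domination of a pair persists under deletion of other vertices — that the dominance reduction rule can peel off all of $V_{<\rho}$ one after another. Hence there is a minimum vertex cover of $G$ containing all inner vertices, so $\mathrm{OPT}(G) = |V_{<\rho}| + \mathrm{OPT}(G')$ a.a.s.\ and this phase contributes no error.

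Everything thus rests on two quantitative claims about the periphery. First, $\mathrm{OPT}(G) = \Theta(n)$ a.a.s.: a constant-width annulus just below the boundary contains $\Theta(n)$ vertices (Corollary~\ref{col:chernoff}) whose induced subgraph behaves like a random geometric graph of constant average degree on a circle — with connection radius read off from~\eqref{eq:maximum-angular-distance} — and hence contains a matching of linear size, which lower-bounds the cover; the bound $\mathrm{OPT}(G) \le n$ is trivial. Second, in $G'$ all but $o(n)$ vertices lie in components of size at most $\tau$: once the center is removed, $G'$ splits into clusters around its innermost — and hence comparatively rare — surviving vertices plus many genuinely small components, and a counting argument exploiting $\alpha > 1/2$ shows that these clusters, and indeed every component of size exceeding $\tau$, cover only $o(n)$ vertices in total. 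I would prove both claims in the Poisson model introduced before Lemma~\ref{lem:hrg-in-expectation}, with failure probability $n^{-\omega(1)}$, using Corollaries~\ref{col:chernoff}, \ref{col:bounded-differences} and~\ref{col:typical-bounded-differences} for concentration, and then transfer them to the fixed-$n$ model via Lemma~\ref{lem:hrg-in-expectation}. Combining the pieces, the returned cover has size $|V_{<\rho}| + \mathrm{OPT}(G') + o(n) = \mathrm{OPT}(G) + o(n) = (1 + o(1))\,\mathrm{OPT}(G)$.

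The hard part will be the joint calibration of $\rho$ with the structural analysis of $G'$: $\rho$ must be small enough that inner vertices provably dominate (so the center-first phase stays exact), yet large enough that the periphery $G'$ is structurally simple on all but $o(n)$ vertices; and proving that structural statement — effectively an exponential-type tail on the component sizes of $G'$ together with the counting bound on the large clusters, made rigorous with the method of (typical) bounded differences and the angular measure estimates of Section~\ref{sec:preliminaries} — is the technical core. A secondary nuisance is that the relevant failure probabilities must beat $1/n$ in order to survive the $\Theta(n^{1/2})$ loss incurred by Lemma~\ref{lem:hrg-in-expectation}.
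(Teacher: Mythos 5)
Your overall plan---split the disk at a radius threshold $\rho$, pay for the center, solve small components of the periphery exactly, and bound $\mathrm{OPT}=\Theta(n)$---matches the paper's, but there is one load-bearing claim in your argument that is both unnecessary and very likely false at the $\rho$ that is actually needed. You assert that a.a.s.\ every inner vertex dominates, so that the inner vertices can be peeled off by the dominance rule and $\mathrm{OPT}(G)=|V_{<\rho}|+\mathrm{OPT}(G')$. The $\rho$ that makes the outer band decompose into components of size $O(\log\log n)$ sits at $\rho = R - \Theta(\log^{(3)} n)$; almost all of the ``inner'' vertices in this regime have constant degree and constant-degree neighbors, and such a vertex dominates only with constant probability, not a.a.s.\ uniformly. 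Conversely, a $\rho$ small enough that domination holds for all inner vertices leaves $\Theta(n)$ vertices in the outer band with a giant component, killing the small-component argument. You identify this calibration tension yourself, but it is not a real obstacle---you have imposed a constraint you do not need. Since $G'$ is an induced subgraph, any minimum cover of $G$ restricted to $V(G')$ covers $G'$, so $\mathrm{OPT}(G')\le \mathrm{OPT}(G)$ unconditionally; combined with $|V_{<\rho}|=o(n)=o(\mathrm{OPT}(G))$, your own accounting $|V_{<\rho}|+\mathrm{OPT}(G')+o(n)$ already gives $(1+o(1))\mathrm{OPT}(G)$ without ever invoking domination. That is exactly the paper's argument: $|C_{\text{Exact}}|\le|C_{\text{OPT}}|$ for free, and $|V_{\text{Greedy}}|=o(|C_{\text{OPT}}|)$ is all one needs. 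Domination appears in the paper only as motivation for why standard greedy behaves well, not as a step in the proof.

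The second, smaller gap is that the structural claim about $G'$---that all but $o(n)$ vertices lie in components of size $\le\tau\log\log n$---is asserted but not really argued. This is where essentially all of the technical work in the paper lives: the outer band is discretized into angular sectors narrow enough that edges cannot cross an empty sector, components are bounded by \emph{runs} of non-empty sectors, runs are split into wide and narrow ones, and the counts of vertices in wide runs and in overfull narrow runs are separately shown to be $o(n)$ via Lemmas~\ref{lem:bernoulli-sequence}, \ref{lem:whp-circular-success-runs}, and the method of (typical) bounded differences; the interplay between the sector width $\gamma(n,\tau)$, the threshold $w$, and $\rho$ is exactly what drives the quantitative bounds. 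Your plan correctly names the tools (Poisson model, Lemma~\ref{lem:hrg-in-expectation}, Corollaries~\ref{col:chernoff}--\ref{col:typical-bounded-differences}) but gives no decomposition through which to apply them, and ``a counting argument exploiting $\alpha>1/2$'' is not a substitute. Finally, $\mathrm{OPT}=\Theta(n)$ is cited from prior work in the paper rather than re-derived; your sketch via a linear-size matching in a boundary annulus is plausible but is extra work you could avoid.
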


Consider the following greedy algorithm that computes an approximation
of a minimum vertex cover on hyperbolic random graphs.  We iterate the
vertices in order of increasing radius.  Each encountered vertex $v$
is added to the cover and removed from the graph.  After each step, we
then identify the connected components of size at most
$\tau \log\log(n)$ in the remainder of the graph, solve them
optimally, and remove them from the graph as well.  The constant
$\tau > 0$ can be used to adjust the trade-off between quality and
running time: With increasing $\tau$ the parts of the graph that are
solved exactly increase as well, but so does the running time.

This algorithm determines the order in which the vertices are
processed based on their radii, which are not known for real-world
networks.  However, in hyperbolic random graphs, there is a strong
correlation between the radius of a vertex and its
degree~\cite{gpp-rhg-12}.  Therefore, we can mimic the considered
greedy strategy by removing vertices with decreasing degree instead.
Then, the above algorithm represents an adaptation of the standard
greedy algorithm: Instead of greedily adding vertices with decreasing
degree until all remaining vertices are isolated, we increase the
quality of the approximation by solving small components exactly.

\section{Approximation Performance}
\label{sec:approximation}

To analyze the performance of the above algorithm, we utilize
structural properties of hyperbolic random graphs.  While the
power-law degree distribution and high clustering are modelled
explicitly using the underlying geometry, other properties of the
model, like the logarithmic diameter, emerge as a natural consequence
of the first two.  Our analysis is based on another emerging property:
Hyperbolic random graphs decompose into small components when removing
high-degree vertices.

More formally, we proceed as follows.  We compute the size of the
vertex cover obtained using the above algorithm, by partitioning the
vertices of the graph into two sets: $V_{\text{Greedy}}$ and
$V_{\text{Exact}}$, denoting the vertices that were added greedily and
the ones contained in small separated components that were solved
exactly, respectively (see \Cref{fig:graph}).  Clearly, we obtain a
valid vertex cover for the whole graph, if we take all vertices in
$V_{\text{Greedy}}$ together with a vertex cover $C_{\text{Exact}}$ of
$G[V_{\text{Exact}}]$.  Then, the approximation ratio is given by the
quotient
$\delta = (\vert V_{\text{Greedy}} \vert + \vert C_{\text{Exact}}
\vert)/ \vert C_{\text{OPT}}\vert,$
where $C_{\text{OPT}}$ denotes an optimal solution.  Since all
components in $G[V_{\text{Exact}}]$ are solved optimally and since any
minimum vertex cover for the whole graph induces a vertex cover on
$G[V']$ for any vertex subset $V' \subseteq V$, it holds that
$\vert C_{\text{Exact}} \vert \le \vert C_{\text{OPT}} \vert$.
Consequently, it suffices to show that
$\vert V_{\text{Greedy}} \vert \in o(\vert C_{\text{OPT}}\vert )$ in
order to obtain the claimed approximation factor of $1 + o(1)$.

\begin{figure}[t]
  \centering
  \includegraphics[width=0.8\linewidth]{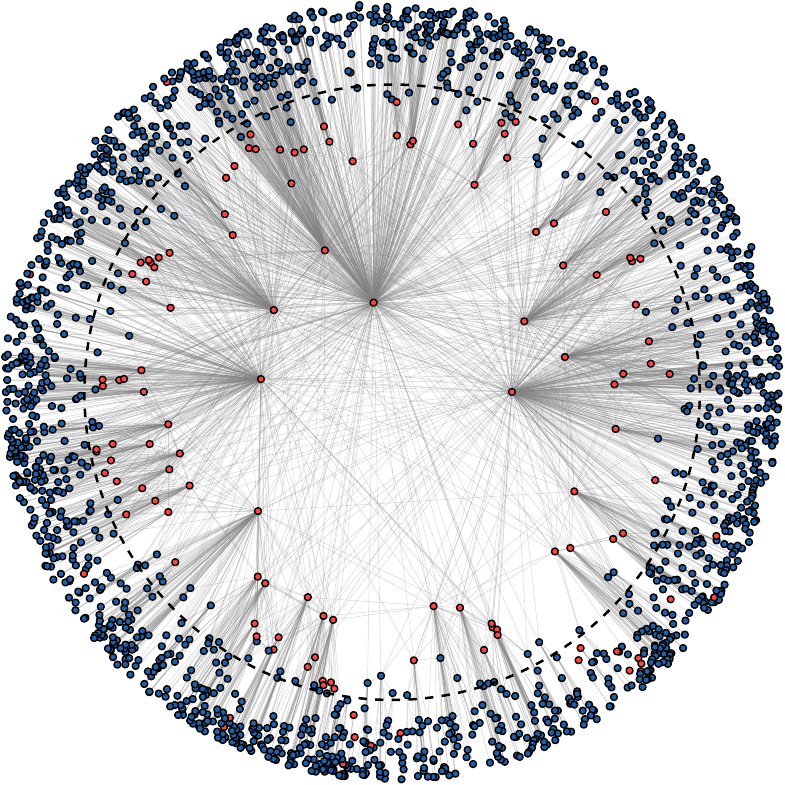}
  \caption{A hyperbolic random graph with $1942$ vertices, average
    degree $7.7$, and power-law exponent $2.6$.  The vertex sets
    $V_{\text{Greedy}}$ and $V_{\text{Exact}}$ are shown in red and
    blue, respectively.  The dashed line shows a possible threshold
    radius $\rho$.}
  \label{fig:graph}
\end{figure}

To bound the size of $V_{\text{Greedy}}$, we identify a time during
the execution of the algorithm at which only few vertices were added
greedily, yet, the majority of the vertices were contained in small
separated components (and were, therefore, part of
$V_{\text{Exact}}$), and only few vertices remain to be added
greedily.  Since the algorithm processes the vertices by increasing
radius, this point in time can be translated to a threshold radius
$\rho$ in the hyperbolic disk (see \Cref{fig:graph}).  Therefore, we
divide the hyperbolic disk into two regions: an \emph{inner disk} and
an \emph{outer band}, containing vertices with radii below and above
$\rho$, respectively.  The threshold~$\rho$ is chosen such that a
hyperbolic random graph decomposes into small components after
removing the inner disk.  When adding the first vertex from the outer
band, greedily, we can assume that the inner disk is empty (since
vertices of smaller radii were chosen before or removed as part of a
small component).  At this point, the majority of the vertices in the
outer band were contained in small components, which have been solved
exactly.  In our analysis, we now overestimate the size of
$V_{\text{Greedy}}$ by assuming that all remaining vertices are also
added to the cover greedily.  Therefore, we obtain a valid upper bound
on $\vert V_{\text{Greedy}} \vert$, by counting the total number of
vertices in the inner disk and adding the number of vertices in the
outer band that are contained in components that are not solved
exactly, i.e., components whose size exceeds $\tau \log\log(n)$.  In
the following, we show that both numbers are sublinear in $n$ with
high probability.  Together with the fact that an optimal vertex cover
on hyperbolic random graphs, asymptotically almost surely, contains
$\Omega(n)$ vertices~\cite{cfr-ggdsfn-16}, this implies
$\vert V_{\text{Greedy}} \vert \in o(\vert C_{\text{OPT}} \vert)$.

\begin{figure}
  \centering
  \includegraphics{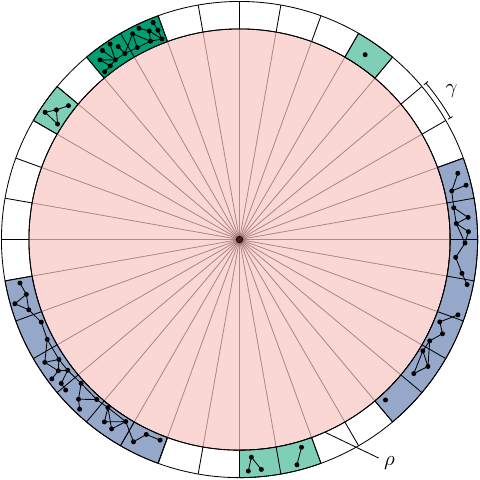}
  \caption{The disk is divided into the inner disk (red) and the outer
    band.  It is additionally divided into sectors of equal width
    $\gamma$.  Consecutive non-empty sectors form a run.  Wide runs
    (blue) consist of many sectors.  Each blue sector is a widening
    sector.  Narrow runs (green) consist of few sectors.  Small narrow
    runs contain only few vertices (light green), while large narrow
    runs contain many vertices (dark green).}
  \label{fig:sectors}
\end{figure}

The main contribution of our analysis is the identification of small
components in the outer band, which is done by discretizing it into
sectors, such that an edge cannot extend beyond an empty sector (see
\Cref{fig:sectors}).  The foundation of this analysis is the delicate
interplay between the angular width $\gamma$ of these sectors and the
threshold $\rho$ that defines the outer band.  Recall that $\rho$ is
used to represent the time in the execution of the algorithm at which
the graph has been decomposed into small components.  For our analysis
we assume that all vertices seen before this point (all vertices in
the inner disk; red \Cref{fig:sectors}) were added greedily.
Therefore, if we choose $\rho$ too large, we overestimate the actual
number of greedily added vertices by too much.  As a consequence, we
want to choose $\rho$ as small as possible.  However, this conflicts
our intentions for the choice of $\gamma$ and its impact on $\rho$.
Recall that the maximum angular distance between two vertices such
that they are adjacent increases with decreasing radii
(\Cref{eq:maximum-angular-distance}).  Thus, in order to avoid edges
that extend beyond an angular width of~$\gamma$, we need to ensure
that the radii of the vertices in the outer band are sufficiently
large.  That is, decreasing $\gamma$ requires increasing $\rho$.
However, we want to make $\gamma$ as small as possible, in order to
get a finer granularity in the discretization and, with that, a more
accurate analysis of the component structure in the outer band.
Therefore, $\gamma$ and $\rho$ need to be chosen such that the inner
disk does not become too large, while ensuring that the discretization
is granular enough to accurately detect components whose size depends
on $\tau$ and $n$.  To this end, we adjust the angular width of the
sectors using a function $\gamma(n, \tau)$, which is defined as
\begin{align*}
  \gamma(n, \tau) = \log \left(\frac{\tau \log^{(2)}(n)}{2 \log^{(3)}(n)^2} \right),
\end{align*}
where $\log^{(i)}(n)$ denotes iteratively applying the $\log$-function
$i$ times on $n$ (e.g., $\log^{(2)}(n) = \log\log(n)$), and set
\begin{align*}
  \rho = R - \log(\pi/2 \cdot e^{C/2} \gamma(n, \tau)),
\end{align*}
where $R = 2 \log(n) + C$ is the radius of the hyperbolic disk.

In the following, we first show that the number of vertices in the
inner disk is sublinear with high probability, before analyzing the
component structure in the outer band.  To this end, we make use of
the discretization of the disk into sectors, by distinguishing between
different kinds of \emph{runs} (sequences of non-empty sectors), see
\Cref{fig:sectors}.  In particular, we bound the number of \emph{wide}
runs (consisting of many sectors) and the number of vertices in them.
Then we bound the number of vertices in \emph{large narrow} runs
(consisting of few sectors but containing many vertices).  The
remaining \emph{small narrow} runs represent small components that are
solved exactly.

The analysis mainly involves working with the random variables that
denote the numbers of vertices in the above mentioned areas of the
disk.  Throughout the paper, we usually start with computing their
expected values.  Afterwards, we obtain tight concentration bounds
using the previously mentioned Chernoff bound or, when the considered
random variables are more involved, the method of (typical) bounded
differences.

\subsection{The Inner Disk}

The inner disk $\mathcal{I}$ contains all vertices whose radius is
below the threshold $\rho$.  The number of them that are added to the
cover greedily is bounded by the number of all vertices in
$\mathcal{I}$.

\begin{lemma}
  \label{lem:inner-disk-bad-nodes-whp}
  Let $G$ be a hyperbolic random graph on $n$ vertices with power-law
  exponent $\beta = 2\alpha + 1$.  With high probability, the number
  of vertices in $\mathcal{I}$ is in
  $\mathcal{O}(n \cdot \gamma(n, \tau)^{-\alpha})$.
\end{lemma}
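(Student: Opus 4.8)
The plan is to compute the expected number of vertices in the inner disk $\mathcal{I}$ and then apply a Chernoff bound (Corollary~\ref{col:chernoff}) to transfer the expectation bound to a high-probability bound. The number of vertices in $\mathcal{I}$ is a sum of $n$ independent indicator variables (one per vertex, indicating whether its radius falls below $\rho$), so Corollary~\ref{col:chernoff} applies directly once we have an upper bound on the expectation.

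\textbf{Step 1: the measure of the inner disk.} First I would compute $\mu(\mathcal{I})$, the probability that a single vertex lands in $\mathcal{I}$, by integrating the density. The radius is distributed with density $\alpha\sinh(\alpha r)/(\cosh(\alpha R)-1)$ on $[0,R]$, so
\begin{align*}
  \mu(\mathcal{I}) = \int_0^{\rho} \frac{\alpha \sinh(\alpha r)}{\cosh(\alpha R) - 1} \dif r = \frac{\cosh(\alpha \rho) - 1}{\cosh(\alpha R) - 1} = \Theta\!\left(e^{-\alpha(R - \rho)}\right),
\end{align*}
using $\cosh(x) = e^x/2 \pm o(1)$ and $\rho, R \to \infty$. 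Hence the expected number of vertices in $\mathcal{I}$ is $n \mu(\mathcal{I}) = \Theta(n \, e^{-\alpha(R-\rho)})$.

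\textbf{Step 2: plugging in the threshold $\rho$.} Next I would substitute $\rho = R - \log(\pi/2 \cdot e^{C/2}\gamma(n,\tau))$, which gives $R - \rho = \log(\pi/2 \cdot e^{C/2}\gamma(n,\tau))$, so that
\begin{align*}
  e^{-\alpha(R - \rho)} = \left(\frac{\pi}{2} e^{C/2} \gamma(n,\tau)\right)^{-\alpha} = \Theta\!\left(\gamma(n,\tau)^{-\alpha}\right),
\end{align*}
since $C$ is a constant. Therefore $\mathbb{E}[|V(\mathcal{I})|] = \Theta(n \, \gamma(n,\tau)^{-\alpha})$, and in particular there is an upper bound of the form $f(n) = c' \, n \, \gamma(n,\tau)^{-\alpha}$ for some constant $c'$.

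\textbf{Step 3: concentration.} Finally I would invoke Corollary~\ref{col:chernoff} with the indicator variables $X_i = \mathbbm{1}[r(v_i) < \rho]$ and $f(n)$ as above. Taking any fixed $\varepsilon \in (0,1)$, we get $\Pr[|V(\mathcal{I})| \ge (1+\varepsilon)f(n)] \le e^{-\varepsilon^2 f(n)/3}$. Since $\gamma(n,\tau) = \log(\tau\log^{(2)}(n)/(2\log^{(3)}(n)^2))$ grows like $\log^{(3)}(n)$ (roughly $\log\log\log n$), we have $\gamma(n,\tau)^{\alpha} = \poly(\log^{(3)}(n)) = n^{o(1)}$, so $f(n) = n^{1 - o(1)} = \omega(\log n)$; thus the failure probability is $e^{-\Theta(n^{1-o(1)})}$, which is certainly $\mathcal{O}(n^{-1})$, i.e. the bound holds with high probability. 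This yields $|V(\mathcal{I})| = \mathcal{O}(n\,\gamma(n,\tau)^{-\alpha})$ with high probability, as claimed.

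\textbf{Main obstacle.} The calculations here are routine; the only mild subtlety is making sure the asymptotic manipulations of $\cosh$ and the error terms in the density are handled cleanly (the useful inequalities in Lemmas~\ref{lem:1+x-upper}--\ref{lem:1+x-frac-bound} are the right tools), and confirming that $f(n)$ is large enough — i.e. $\gamma(n,\tau)^{-\alpha}$ decays slowly enough — for the Chernoff bound to give a high-probability (rather than merely a.a.s.) statement. This is precisely why $\gamma$ was chosen to grow so slowly (triple-log scale): it keeps the inner disk measure polynomially small in an iterated logarithm only, leaving $n^{1-o(1)}$ vertices in expectation, which both drives the concentration and keeps $|V(\mathcal{I})|$ sublinear so that it contributes only $o(|C_{\text{OPT}}|)$ later.
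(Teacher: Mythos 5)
Your proof is correct and follows essentially the same route as the paper's: compute $\mathbb{E}[|V(\mathcal{I})|] = \Theta(n\,\gamma(n,\tau)^{-\alpha})$ and then apply the Chernoff bound of Corollary~\ref{col:chernoff}, noting the expectation is $\omega(\log n)$. The only cosmetic difference is that you compute $\mu(\mathcal{I})$ directly by integrating the radial density, whereas the paper cites the corresponding formula from~\cite[Lemma 3.2]{gpp-rhg-12}.
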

\begin{proof}
  We start by computing the expected number of vertices in
  $\mathcal{I}$ and show concentration afterwards.  To this end, we
  first compute the measure $\mu(\mathcal{I})$.  The measure of a disk
  of radius~$r$ that is centered at the origin is given by $e^{-\alpha
    (R - r)}(1 + o(1))$ \cite[Lemma 3.2]{gpp-rhg-12}.  Consequently,
  the expected number of vertices in $\mathcal{I}$ is
  \begin{align*}
    \mathbb{E}[\vert V(\mathcal{I}) \vert] &= n \mu(\mathcal{I}) \\
                                           &= \mathcal{O}(n e^{-\alpha (R - \rho)}) \\
                                           &= \mathcal{O}(n e^{- \alpha \log(\pi/2 \SpaciousCdot e^{C/2} \gamma(n, \tau))}) \\
                                           &= \mathcal{O}\left(n \cdot \gamma(n, \tau)^{- \alpha} \right).
  \end{align*}
  Since $\gamma(n, \tau) = \mathcal{O}(\log^{(3)}(n))$, this bound on
  $\mathbb{E}[\vert V(\mathcal{I}) \vert]$ is $\omega(\log(n))$, and
  we can apply the Chernoff bound in \Cref{col:chernoff} to conclude
  that
  $\vert V(\mathcal{I}) \vert = \mathcal{O}\left(n \cdot \gamma(n,
    \tau)^{- \alpha} \right)$ holds with probability
  $1 - \mathcal{O}(n^{-c})$ for any $c > 0$.
\end{proof}

Since $\gamma(n, \tau) = \omega(1)$,
\Cref{lem:inner-disk-bad-nodes-whp} shows that, with high probability,
the number of vertices that are greedily added to the vertex cover in
the inner disk is sublinear.  Once the inner disk has been processed
and removed, the graph has been decomposed into small components and
the ones of size at most $\tau \log\log(n)$ have already been solved
exactly.  The remaining vertices that are now added greedily belong to
large components in the outer band.

\subsection{The Outer Band}

To identify the vertices in the outer band that are contained in
components whose size exceeds $\tau \log\log(n)$, we divide it into
sectors of angular width
$\gamma = \theta(\rho, \rho) = \pi \cdot \gamma(n, \tau)/n \cdot (1 +
o(1))$, where $\theta(\rho, \rho)$ denotes the maximum angular
distance between two vertices with radii $\rho$ to be adjacent (see
\Cref{eq:maximum-angular-distance}).  This division is depicted in
\Cref{fig:sectors}.  The choice of~$\gamma$ (combined with the choice
of $\rho$) has the effect that an edge between two vertices in the
outer band cannot extend beyond an empty sector, i.e., a sector that
does not contain any vertices, allowing us to use empty sectors as
delimiters between components.  To this end, we introduce the notion
of \emph{runs}, which are maximal sequences of non-empty sectors (see
\Cref{fig:sectors}).  While a run can contain multiple components, the
number of vertices in it denotes an upper bound on the combined sizes
of the components that it contains.

To show that there are only few vertices in components whose size
exceeds $\tau \log\log(n)$, we bound the number of vertices in runs
that contain more than $\tau \log\log(n)$ vertices.  For a given run
this can happen for two reasons.  First, it may contain many vertices
if its angular interval is too large, i.e., it consists of too many
sectors.  This is unlikely, since the sectors are chosen sufficiently
small, such that the probability for a given one to be empty is high.
Second, while the angular width of the run is not too large, it
contains too many vertices for its size.  However, the vertices of the
graph are distributed uniformly at random in the disk, making it
unlikely that too many vertices are sampled into such a small area.
To formalize this, we introduce a threshold $w$ and distinguish
between two types of runs: A \emph{wide} run contains more than $w$
sectors, while a \emph{narrow} run contains at most $w$ sectors.  The
threshold~$w$ is chosen such that the probabilities for a run to be
wide and for a narrow run to contain more than $\tau \log\log(n)$
vertices are small.  To this end, we set
$w = e^{\gamma(n, \tau)} \cdot \log^{(3)}(n)$.

In the following, we first bound the number of vertices in wide runs.
Afterwards, we consider narrow runs that contain more than
$\tau \log\log(n)$ vertices.  Together, this gives an upper bound on
the number of vertices that are added greedily in the outer band.

\subsubsection{Wide Runs}

We refer to a sector that contributes to a wide run as a
\emph{widening sector}.  In the following, we bound the number of
vertices in all wide runs in three steps.  First, we determine the
expected number of all widening sectors.  Second, based on the
expected value, we show that the number of widening sectors is small,
with high probability.  Finally, we make use of the fact that the area
of the disk covered by widening sectors is small, to show that the
number of vertices sampled into the corresponding area is sublinear,
with high probability.

\subparagraph{Expected Number of Widening Sectors.}

Let $n'$ denote the total number of sectors and let $\mathcal{S}_1,
\dots, \mathcal{S}_{n'}$ be the corresponding sequence.  For each
sector $\mathcal{S}_k$, we define the random variable $S_k$ indicating
whether $\mathcal{S}_k$ contains any vertices, i.e., $S_k = 0$ if
$\mathcal{S}_k$ is empty and $S_k = 1$ otherwise.  The sectors in the
disk are then represented by a circular sequence of indicator random
variables $S_1, \dots, S_{n'}$, and we are interested in the random
variable $W$ that denotes the sum of all runs of $1$s that are longer
than $w$.  In order to compute $\mathbb{E}[W]$, we first compute the
total number of sectors, as well as the probability for a sector to be
empty or non-empty.

\begin{lemma}
  \label{lem:num-sectors}
  Let $G$ be a hyperbolic random graph on $n$ vertices. Then, the
  number of sectors of width $\gamma = \theta(\rho, \rho)$ is $n' = 2n
  / \gamma(n, \tau) \cdot (1 \pm o(1))$.
\end{lemma}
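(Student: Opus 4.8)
The plan is to compute $\gamma = \theta(\rho, \rho)$ explicitly from Equation~\eqref{eq:maximum-angular-distance} and then observe that the number of sectors of angular width $\gamma$ needed to tile the circle of total angular extent $2\pi$ is $2\pi/\gamma$ up to an additive rounding error of at most one.

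First I would substitute $r(u) = r(v) = \rho$ into Equation~\eqref{eq:maximum-angular-distance}, obtaining $\gamma = \theta(\rho,\rho) = 2e^{(R - 2\rho)/2}(1 + \Theta(e^{R - 2\rho}))$. Plugging in $\rho = R - \log(\pi/2 \cdot e^{C/2}\gamma(n,\tau))$ and $R = 2\log(n) + C$, a short calculation gives
\begin{align*}
  e^{(R - 2\rho)/2} = e^{-R/2}\cdot \tfrac{\pi}{2}e^{C/2}\gamma(n,\tau) = \tfrac{\pi\gamma(n,\tau)}{2n},
\end{align*}
and hence $e^{R - 2\rho} = \Theta(\gamma(n,\tau)^2/n^2)$. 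Since $\gamma(n,\tau) = \mathcal{O}(\log^{(3)}(n))$, this error term is $o(1)$, so that $\gamma = \tfrac{\pi\gamma(n,\tau)}{n}(1 \pm o(1))$, and in particular $\gamma \to 0$.

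Next I would note that partitioning the angular range $[0, 2\pi)$ into sectors of width $\gamma$ yields $n' = \lceil 2\pi/\gamma \rceil$ sectors (with the last one possibly slightly narrower, which is irrelevant to the later use of empty sectors as delimiters). Because $2\pi/\gamma = \Theta(n/\gamma(n,\tau)) \to \infty$, the rounding contributes only a factor $(1 \pm o(1))$, so
\begin{align*}
  n' = \tfrac{2\pi}{\gamma}(1 \pm o(1)) = \tfrac{2\pi}{\pi\gamma(n,\tau)/n}(1 \pm o(1)) = \tfrac{2n}{\gamma(n,\tau)}(1 \pm o(1)),
\end{align*}
where Lemma~\ref{lem:1+x-frac-bound} absorbs the $(1 \pm o(1))$ in the denominator of $\gamma$ into the error term.

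The only mildly delicate point is verifying that the multiplicative error $1 + \Theta(e^{R - 2\rho})$ from Equation~\eqref{eq:maximum-angular-distance} is genuinely $1 + o(1)$, which hinges on $\gamma(n,\tau)$ growing slower than any power of $n$; everything else is a direct substitution together with the elementary fact that partitioning a fixed-length circle into shrinking arcs produces a number of parts asymptotic to the ratio of the two lengths.
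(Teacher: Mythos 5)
Your proof is correct and takes essentially the same route as the paper: substitute $\rho = R - \log(\pi/2 \cdot e^{C/2}\gamma(n,\tau))$ and $R = 2\log(n)+C$ into Equation~\eqref{eq:maximum-angular-distance}, observe the error term $\Theta(e^{R-2\rho}) = \Theta((\gamma(n,\tau)/n)^2)$ is $o(1)$ because $\gamma(n,\tau)=\mathcal{O}(\log^{(3)}(n))$, and absorb the reciprocal via Lemma~\ref{lem:1+x-frac-bound}. The only (welcome) difference is that you explicitly handle the integer-rounding of $n' = \lceil 2\pi/\gamma\rceil$, a point the paper silently glosses over.
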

\begin{proof}
  Since all sectors have equal angular width
  $\gamma = \theta(\rho, \rho)$, we can use
  \Cref{eq:maximum-angular-distance} to compute the total number of
  sectors as
  \begin{align*}
    n' &= 2\pi/\theta(\rho, \rho) \\
       &= \pi e^{-R/2 + \rho} (1 \pm \Theta(e^{R - 2\rho}))^{-1}.
  \end{align*}
  By substituting $\rho = R - \log(\pi/2 \cdot e^{C/2} \gamma(n,
  \tau))$ and $R = 2\log(n) + C$, we obtain
  \begin{align*}
    n' &= \frac{\pi e^{R/2}}{\pi/2 \cdot e^{C/2} \gamma(n, \tau)} (1 \pm \Theta (e^{-R} \gamma(n, \tau)^2 ))^{-1} \\
       &= 2n/\gamma(n, \tau) \cdot (1 \pm \Theta ((\gamma(n, \tau)/n)^2))^{-1}.
  \end{align*}
  It remains to simplify the error term.  Note that
  $\gamma(n, \tau) = \mathcal{O}(\log^{(3)}(n))$.  Consequently, the
  error term is equivalent to $(1 \pm o(1))^{-1}$.  Finally, it holds
  that $1/(1 + x) = 1 - \Theta(x)$ for $x = \pm o(1)$.
\end{proof}

\begin{lemma}
  \label{lem:prob-sector-non-empty}
  Let $G$ be a hyperbolic random graph on $n$ vertices and let
  $\mathcal{S}$ be a sector of angular width $\gamma = \theta(\rho,
  \rho)$.  For sufficiently large $n$, the probability that
  $\mathcal{S}$ contains at least one vertex is bounded by
  \begin{align*}
    1 - e^{-\gamma(n, \tau) / 4} \le \Pr[V(\mathcal{S}) \neq \emptyset] \le e^{-\left(e^{-\gamma(n, \tau)} \right)}.
  \end{align*}
\end{lemma}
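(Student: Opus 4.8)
The plan is to use the fact that the $n$ vertex positions are sampled independently, so the indicator events ``vertex $i$ falls into $\mathcal{S}$'' are independent, each with probability $\mu(\mathcal{S})$. Hence $\Pr[V(\mathcal{S}) = \emptyset] = (1 - \mu(\mathcal{S}))^n$, and the whole statement reduces to estimating $\mu(\mathcal{S})$ and then feeding the estimate through the elementary inequalities $1 + x \le e^x$ (Lemma~\ref{lem:1+x-upper}) and $1 - x \ge e^{-(1+o(1))x}$ (Lemma~\ref{lem:1-x-lower}).

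First I would compute $\mu(\mathcal{S})$. The density $f$ is rotationally symmetric, so a sector of angular width $\gamma$ spanning radii $[\rho, R]$ satisfies $\mu(\mathcal{S}) = \frac{\gamma}{2\pi}\,(1 - \mu(\mathcal{I}))$, where $\mathcal{I}$ is the inner disk of radius $\rho$. The computation in the proof of Lemma~\ref{lem:inner-disk-bad-nodes-whp} gives $\mu(\mathcal{I}) = \Theta(\gamma(n,\tau)^{-\alpha})$, which is $o(1)$ since $\gamma(n,\tau) = \omega(1)$. Using $\gamma = \theta(\rho,\rho) = \pi\gamma(n,\tau)/n\cdot(1+o(1))$ (the value recorded at the start of the outer-band discussion via Equation~\eqref{eq:maximum-angular-distance}), this yields
\[
  n\,\mu(\mathcal{S}) \;=\; \frac{n\gamma}{2\pi}\,(1 - o(1)) \;=\; (1 \pm o(1))\,\frac{\gamma(n,\tau)}{2}.
\]

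For the lower bound, Lemma~\ref{lem:1+x-upper} with $x = -\mu(\mathcal{S})$ gives $(1-\mu(\mathcal{S}))^n \le e^{-n\mu(\mathcal{S})} = e^{-(1\pm o(1))\gamma(n,\tau)/2}$, and since $(1-o(1))/2 \ge 1/4$ for $n$ large, this is at most $e^{-\gamma(n,\tau)/4}$; hence $\Pr[V(\mathcal{S}) \neq \emptyset] = 1 - (1-\mu(\mathcal{S}))^n \ge 1 - e^{-\gamma(n,\tau)/4}$. For the upper bound, note $\mu(\mathcal{S}) = \Theta(\gamma(n,\tau)/n) = o(1)$, so Lemma~\ref{lem:1-x-lower} applies and gives $(1-\mu(\mathcal{S}))^n \ge e^{-(1+o(1))n\mu(\mathcal{S})} = e^{-(1\pm o(1))\gamma(n,\tau)/2} \ge e^{-\gamma(n,\tau)}$ for $n$ large (the coefficient $(1\pm o(1))/2$ is eventually below $1$). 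Therefore $\Pr[V(\mathcal{S}) \neq \emptyset] \le 1 - e^{-\gamma(n,\tau)}$, and one more application of Lemma~\ref{lem:1+x-upper}, with $x = -e^{-\gamma(n,\tau)}$, turns this into $1 - e^{-\gamma(n,\tau)} \le e^{-e^{-\gamma(n,\tau)}}$, the claimed bound.

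The only genuinely delicate part is the bookkeeping of the $(1 \pm o(1))$ factors so that each is absorbed in the correct direction: in the lower bound the $o(1)$ must only drop the coefficient from $\tfrac12$ to $\tfrac14$, and in the upper bound it must keep $\tfrac12(1+o(1)) \le 1$ for large $n$; one also has to note $\mu(\mathcal{S}) = o(1)$ so that Lemma~\ref{lem:1-x-lower} is applicable. Everything else is direct substitution.
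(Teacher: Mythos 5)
Your proof is correct and follows essentially the same route as the paper: express $\Pr[V(\mathcal{S})=\emptyset]=(1-\mu(\mathcal{S}))^n$, estimate $n\mu(\mathcal{S})=(1\pm o(1))\gamma(n,\tau)/2$, and push the two elementary inequalities (Lemma~\ref{lem:1+x-upper} and Lemma~\ref{lem:1-x-lower}) through, absorbing the $o(1)$ factors into the constants $1/4$ and $1$ for large $n$. The only cosmetic difference is that the paper gets $\mu(\mathcal{S})$ directly as $1/n'$ via Lemma~\ref{lem:num-sectors}, whereas you derive it from $\gamma/(2\pi)\cdot(1-\mu(\mathcal{I}))$; both yield the same $(1\pm o(1))$ estimate.
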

\begin{proof}
  To compute the probability that $\mathcal{S}$ contains at least one
  vertex, we first compute the probability for a given vertex to fall
  into $\mathcal{S}$, which is given by the measure
  $\mu(\mathcal{S})$.  Since the angular coordinates of the vertices
  are distributed uniformly at random and since the disk is divided
  into $n'$ sectors of equal width, the measure of a single sector
  $\mathcal{S}$ can be obtained as $\mu(\mathcal{S}) = 1/n'$.  The
  total number of sectors $n'$ is given by \Cref{lem:num-sectors} and
  we can derive
  \begin{align*}
    \mu(\mathcal{S}) &= \frac{\gamma(n, \tau)}{2n} (1 \pm o(1))^{-1} = \frac{\gamma(n, \tau)}{2n} (1 \pm o(1)),
  \end{align*}
  where the second equality is obtained by applying
  $1/(1 + x) = 1 - \Theta(x)$ for $x = \pm o(1)$.

  Given $\mu(\mathcal{S})$, we first compute the lower bound on the
  probability that $\mathcal{S}$ contains at least one vertex.  Note
  that
  $\Pr[V(\mathcal{S}) \neq \emptyset] = 1 - \Pr[V(\mathcal{S}) =
  \emptyset]$.  Therefore, it suffices to show that
  $\Pr[V(\mathcal{S}) = \emptyset] \le e^{-\gamma(n, \tau)/4}$.  The
  probability that $\mathcal{S}$ is empty is
  $(1 - \mu(\mathcal{S}))^n$.  Now recall that $1 - x \le e^{-x}$ for
  all $x \in \mathbb{R}$.  Consequently, we have
  \begin{align*}
    \Pr[V(\mathcal{S}) = \emptyset] \le e^{-n\mu(\mathcal{S})} \le e^{-\gamma(n, \tau)/2 \SpaciousCdot (1 - o(1))}
  \end{align*}
  and for large enough $n$ it holds that $1 - o(1) \ge 1/2$.

  It remains to compute the upper bound.  Again, since
  $\Pr[V(\mathcal{S}) \neq \emptyset] = 1 - \Pr[V(\mathcal{S}) =
  \emptyset]$ and since $\Pr[V(\mathcal{S}) = \emptyset] = (1 -
  \mu(\mathcal{S}))^n$, we can compute the probability that
  $\mathcal{S}$ contains at least one vertex as
  \begin{align*}
    \Pr[V(\mathcal{S}) \neq \emptyset] = 1 - (1 - \mu(\mathcal{S}))^n.
  \end{align*}
  Note that $\mu(\mathcal{S}) \in o(1)$.  Therefore, we can bound
  $1 - x \ge e^{-x(1 + o(1))}$ for $x \in o(1)$, and obtain the
  following bound on $\Pr[V(\mathcal{S}) \neq \emptyset]$
  \begin{align*}
    \Pr[V(\mathcal{S}) \neq \emptyset] &= 1 - (1 - \mu(\mathcal{S}))^n \\
                                       &\le 1 - e^{-n \mu(\mathcal{S}) (1 + o(1))} \\
                                       &\le 1 - e^{-\gamma(n, \tau)/2 \SpaciousCdot (1 + o(1))}.
  \end{align*}
  For large enough $n$, we have $(1 + o(1)) \le 2$.  Therefore,
  \begin{align*}
    \Pr[V(\mathcal{S}) \neq \emptyset] &\le 1 - e^{-\gamma(n, \tau)}
  \end{align*}
  holds for sufficiently large $n$.  Finally, $1 - x \le e^{-x}$ is
  valid for all $x \in \mathbb{R}$ and we obtain the claimed bound.
\end{proof}

We are now ready to bound the expected number of widening sectors,
i.e., sectors that are part of wide runs.  To this end, we aim to
apply the following lemma.

\begin{lemma}[{\cite[Proposition 4.3\protect\footnotemark]{mpp-s-07}}]
  \footnotetext{The original statement has been adapted to fit our
    notation.  We use $n', w$, and $W$ to denote the total number of
    random variables, the threshold for long runs, and the sum of
    their lengths, respectively.  They were previously denoted by $n,
    k$, and $S$, respectively.  In the original statement $s = 0$
    indicates that the variables are distributed independently and
    identically, and $c$ indicates that the sequence is circular.}
  \label{lem:bernoulli-sequence}
  Let $S_1, \dots, S_{n'}$ denote a circular sequence of independent
  indicator random variables, such that $\Pr[S_k = 1] = p$ and
  $\Pr[S_k = 0] = 1 - p = q$, for all $k \in [n']$.  Furthermore, let
  $W$ denote the sum of the lengths of all success runs of length at
  least $w \le n'$.  Then, $\mathbb{E}[W] = n' p^{w} (wq + p)$.
\end{lemma}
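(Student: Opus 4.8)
The plan is to expand $W$ into a sum of indicator random variables, one per position of the circular sequence, and then exploit the rotational symmetry of the model. For $k \in \{1, \dots, n'\}$, let $Y_k$ be the indicator of the event that position $k$ lies inside a maximal success run of length at least $w$. Since maximal runs are disjoint, every such run contributes exactly its length to $W$ and exactly its number of positions to $\sum_k Y_k$, so $W = \sum_{k = 1}^{n'} Y_k$. Because $S_1, \dots, S_{n'}$ are i.i.d.\ and arranged on a circle, the $Y_k$ are identically distributed, and hence $\mathbb{E}[W] = n' \cdot \Pr[Y_1 = 1]$. It remains to compute the probability that position $1$ belongs to a long run.

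To this end I would condition on the length $\ell$ of the maximal run of $1$s containing position~$1$ (set $\ell = 0$ if $S_1 = 0$, and $\ell = n'$ if the whole circle is filled). For $1 \le \ell \le n' - 2$, position $1$ can occupy any of $\ell$ offsets within the run, and the run must be bounded by a $0$ on each side, these two bounding positions being distinct; so this event has probability $\ell\, p^\ell q^2$. The case $\ell = n' - 1$ is special, since then the two bounding positions coincide (probability $(n' - 1)\, p^{n' - 1} q$), and the case $\ell = n'$ has probability $p^{n'}$. Summing the relevant contributions over $\ell \ge w$ yields
\begin{align*}
  \Pr[Y_1 = 1] = \sum_{\ell = w}^{n' - 2} \ell\, p^\ell q^2 + (n' - 1)\, p^{n' - 1} q + p^{n'}.
\end{align*}

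The last step is to evaluate this sum. From the elementary identity $\sum_{\ell \ge L} \ell\, p^\ell = (L p^L - (L - 1) p^{L + 1}) / (1 - p)^2$ one gets $\sum_{\ell \ge L} \ell\, p^\ell q^2 = p^L (L q + p)$ for every $L \ge 1$. Applying this with $L = n' - 1$ shows that the two circular boundary terms are precisely $(n' - 1)\, p^{n' - 1} q + p^{n'} = \sum_{\ell \ge n' - 1} \ell\, p^\ell q^2$, so that $\Pr[Y_1 = 1] = \sum_{\ell \ge w} \ell\, p^\ell q^2 = p^w (w q + p)$, giving $\mathbb{E}[W] = n' p^w (w q + p)$. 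The step that needs genuine care is handling the circular structure: the flanking zeros of a run merge when the run has length $n' - 1$, and the all-ones configuration has to be treated separately; what makes the bookkeeping work out cleanly is that these two corrections recombine exactly into the tail of the infinite series, leaving no error term. One can cross-check this by instead summing over the left endpoints of maximal long runs, decomposing each run's length into its first $w$ positions plus the remainder; the all-ones case again supplies exactly the term that makes everything fit.
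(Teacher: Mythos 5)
The paper does not supply a proof of this lemma: it is cited verbatim from Makri, Philippou, and Psillakis (their Proposition 4.3), so there is no ``paper's proof'' to compare against. Your argument is a correct, self-contained derivation via the standard linearity-of-expectation decomposition over position indicators $Y_k$ combined with rotational symmetry of the circular i.i.d.\ sequence, and the crucial bookkeeping observation --- that the two circular boundary contributions $(n'-1)\,p^{n'-1}q + p^{n'}$ are exactly the tail $\sum_{\ell \ge n'-1} \ell\, p^\ell q^2$ of the geometric-type series --- is precisely what collapses the sum to the clean closed form $p^w(wq+p)$ without any leftover error term. One small caveat worth recording: your derivation of $\Pr[Y_1 = 1] = \sum_{\ell = w}^{n'-2} \ell\, p^\ell q^2 + (n'-1)\,p^{n'-1}q + p^{n'}$ implicitly assumes $w \le n'-1$, and indeed the stated closed form fails at the boundary value $w = n'$ that the lemma's hypothesis $w \le n'$ permits: for $n' = 2$, $w = 2$ one has $\mathbb{E}[W] = 2p^2$, whereas $n'p^w(wq+p) = 2p^2(2q+p)$. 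This discrepancy is inherited from the cited statement rather than introduced by you (the source presumably intends $w < n'$ or parametrizes the circular run slightly differently), and it is immaterial here since the paper applies the lemma with $w = \Theta(\log^{(2)}(n)/\log^{(3)}(n))$ while $n' = \Theta(n/\log^{(3)}(n))$; but if you present this as a complete proof of the lemma as written, you should either restrict to $w \le n'-1$ or note the exception explicitly.
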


We note that the indicator random variables $S_1, \dots, S_n'$ are
\emph{not} independent on hyperbolic random graphs.  To overcome this
issue, we compute the expected value of $W$ on hyperbolic random
graphs with $n$ vertices in expectation (see \Cref{sec:preliminaries})
and subsequently derive a probabilistic bound on $W$ for hyperbolic
random graphs.

\begin{lemma}
  \label{lem:expected-bad-sectors}
  Let $G$ be a hyperbolic random graph with $n$ vertices in
  expectation and let $W$ denote the number of widening sectors.
  Then,
  \begin{align*}
    \mathbb{E}[W] \le \frac{2^{1/4} \cdot \tau^{3/4} \cdot n}{\gamma(n, \tau) \cdot \log^{(2)}(n)^{1/4} \cdot \log^{(3)}(n)^{1/2}} (1 \pm o(1)).
  \end{align*}
\end{lemma}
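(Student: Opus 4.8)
The plan is to instantiate the Bernoulli-sequence identity of Lemma~\ref{lem:bernoulli-sequence} with the parameters of our discretization and then simplify. On a hyperbolic random graph with $n$ vertices in expectation the vertices are placed by an inhomogeneous Poisson point process, so the numbers of vertices falling into the pairwise disjoint sectors $\mathcal{S}_1, \dots, \mathcal{S}_{n'}$ are independent. Hence the occupancy indicators $S_1, \dots, S_{n'}$ form a circular sequence of independent indicator random variables, and the sum $W$ of the lengths of all success runs of length at least $w$ satisfies $\mathbb{E}[W] = n' p^{w}(w q + p)$, where $p = \Pr[V(\mathcal{S}) \neq \emptyset]$ and $q = 1 - p$. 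From here everything is a (careful) calculation.

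First I would collect the three ingredients. By Lemma~\ref{lem:num-sectors}, $n' = 2n/\gamma(n, \tau) \cdot (1 \pm o(1))$. For the occupancy probability, note that in the expectation model $\Pr[V(\mathcal{S}) = \emptyset] = e^{-n\mu(\mathcal{S})}$ exactly, which lies between the two estimates $(1 - \mu(\mathcal{S}))^{n}$ and $e^{-n\mu(\mathcal{S})(1 + o(1))}$ used in the proof of Lemma~\ref{lem:prob-sector-non-empty}; consequently the bounds $1 - e^{-\gamma(n, \tau)/4} \le p \le e^{-e^{-\gamma(n, \tau)}}$ carry over, and in particular $q = 1 - p \le e^{-\gamma(n, \tau)/4}$. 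Finally $w = e^{\gamma(n, \tau)} \log^{(3)}(n)$ by definition.

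The key step is the estimate of $p^{w}$. Using the upper bound on $p$ together with $w e^{-\gamma(n, \tau)} = \log^{(3)}(n)$, we get
\begin{align*}
  p^{w} \le \left( e^{-e^{-\gamma(n, \tau)}} \right)^{w} = e^{-w e^{-\gamma(n, \tau)}} = e^{-\log^{(3)}(n)} = \frac{1}{\log^{(2)}(n)}.
\end{align*}
For the remaining factor, $w q \le w e^{-\gamma(n, \tau)/4} = e^{3\gamma(n, \tau)/4} \log^{(3)}(n)$, which tends to infinity because $\gamma(n, \tau) = \omega(1)$, so $w q + p = e^{3\gamma(n, \tau)/4} \log^{(3)}(n) \cdot (1 + o(1))$. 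Multiplying the three factors and substituting $e^{\gamma(n, \tau)} = \tau \log^{(2)}(n) / (2 \log^{(3)}(n)^{2})$, hence $e^{3\gamma(n, \tau)/4} = \tau^{3/4} \log^{(2)}(n)^{3/4} / (2^{3/4} \log^{(3)}(n)^{3/2})$, yields
\begin{align*}
  \mathbb{E}[W] \le \frac{2n}{\gamma(n, \tau)} \cdot \frac{1}{\log^{(2)}(n)} \cdot \frac{\tau^{3/4} \log^{(2)}(n)^{3/4}}{2^{3/4} \log^{(3)}(n)^{3/2}} \cdot \log^{(3)}(n) \cdot (1 \pm o(1)).
\end{align*}
Collecting powers ($2 / 2^{3/4} = 2^{1/4}$, $\log^{(2)}(n)^{3/4 - 1} = \log^{(2)}(n)^{-1/4}$, $\log^{(3)}(n)^{1 - 3/2} = \log^{(3)}(n)^{-1/2}$) and absorbing the accumulated $1 \pm o(1)$ factors via Lemma~\ref{lem:1+x-frac-bound} then gives the claimed bound.

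I expect the only genuine subtlety to be the independence requirement of Lemma~\ref{lem:bernoulli-sequence}: the indicators $S_k$ are \emph{not} independent in the standard hyperbolic random graph model, since fixing some vertex positions creates stochastic dependencies (as discussed in Section~\ref{sec:preliminaries}), and it is precisely for this reason that we pass to the model with $n$ vertices in expectation; Lemma~\ref{lem:hrg-in-expectation} will later allow transferring the resulting high-probability statement back. The rest is bookkeeping of the nested logarithms, where the main thing to watch is that the $1 \pm o(1)$ errors coming from $n'$ and from $w q + p$ do not perturb the leading constant $2^{1/4} \tau^{3/4}$.
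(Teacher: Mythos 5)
Your proof is correct and takes essentially the same route as the paper: pass to the Poisson model to get independence of the sector-occupancy indicators, feed $n'$, $p$, $q$, $w$ into Lemma~\ref{lem:bernoulli-sequence}, observe that $p^{w} \le e^{-\log^{(3)}(n)} = 1/\log^{(2)}(n)$, and unwind $e^{3\gamma(n,\tau)/4}$ via the definition of $\gamma(n,\tau)$. The only minor imprecision is that widening sectors live in runs of length \emph{more than} $w$, so the lemma should strictly be instantiated with threshold $w+1$ (as the paper does); applying it with threshold $w$ as you do only overcounts, so your bound is still valid, just slightly looser, and the asymptotics are unchanged.
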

\begin{proof}
  A widening sector is part of a run of more than
  $w = e^{\gamma(n, \tau)} \cdot \log^{(3)}(n)$ consecutive non-empty
  sectors.  To compute the expected number of widening sectors, we
  apply \Cref{lem:bernoulli-sequence}.  To this end, we use
  \Cref{lem:num-sectors} to bound the total number of sectors $n'$ and
  bound the probability $p = \Pr[S_k = 1]$ (i.e., the probability that
  sector $\mathcal{S}_k$ is not empty) as
  $p \le \exp(-(e^{-\gamma(n, \tau)}))$, as well as the complementary
  probability $q = 1 - p \le e^{-\gamma(n, \tau)/4}$, using
  \Cref{lem:prob-sector-non-empty}.  We obtain
  \begin{align*}
    \mathbb{E}[W] &= n' p^{(w + 1)} ((w + 1)q + p) \\
                  &\le \frac{2n}{\gamma(n, \tau)} (1 \pm o(1)) \cdot e^{-\left( (w + 1) e^{-\gamma(n, \tau)} \right)} \cdot \left( (w + 1)e^{-\frac{\gamma(n, \tau)}{4}} + 1 \right) \\
                  &\le \frac{2n}{\gamma(n, \tau)} e^{\left( -e^{\gamma(n, \tau)} \log^{(3)}(n) e^{-\gamma(n, \tau)} \right)} \\
                  &\hphantom{\le \frac{2n}{\gamma(n, \tau)}~} \cdot \left( (e^{\gamma(n, \tau)} \log^{(3)}(n) + 1) e^{-\frac{\gamma(n, \tau)}{4}} + 1 \right) (1 \pm o(1)).
  \end{align*}
  Now the first exponential simplifies to
  $\exp(-\log^{(3)}(n)) = \log^{(2)}(n)^{-1}$, since
  the~$\gamma(n, \tau)$ terms cancel.  Factoring out
  $\exp(3/4 \cdot \gamma(n, \tau)) \log^{(3)}(n)$ in the third term
  then yields
  \begin{align*}
    \mathbb{E}[W] &\le \frac{2n e^{3/4 \SpaciousCdot \gamma(n, \tau)} \log^{(3)}(n)}{\gamma(n, \tau) \cdot \log^{(2)}(n)} \\
                  &\qquad \cdot \left( 1 + \frac{1}{e^{\gamma(n, \tau)} \log^{(3)}(n)} + \frac{1}{e^{3/4 \SpaciousCdot \gamma(n, \tau)} \log^{(3)}(n)} \right) (1 \pm o(1)).
  \end{align*}
  Since $\gamma(n, \tau) = \omega(1)$, the first error term can be
  simplified as $(1 + o(1))$.  Additionally, we can substitute
  $\gamma(n, \tau) = \log(\tau \log^{(2)}(n)/(2 \log^{(3)}(n)^2))$ to
  obtain
  \begin{align*}
    \mathbb{E}[W] &\le 2^{1/4} \frac{ \tau^{3/4} \cdot n \cdot \log^{(3)}(n)}{\gamma(n, \tau) \cdot \log^{(2)}(n)} \cdot \frac{\log^{(2)}(n)^{3/4}}{\log^{(3)}(n)^{3/2}} \cdot (1 \pm o(1)).
  \end{align*}
  Further simplification then yields the claim.
\end{proof}

\subparagraph{Concentration Bound on the Number of Widening Sectors.}

\Cref{lem:expected-bad-sectors} bounds the expected number of widening
sectors and it remains to show that this bound holds with high
probability.  To this end, we first determine under which conditions
the sum of long success runs in a circular sequence of indicator
random variables can be bounded with high probability in general.
Afterwards, we show that these conditions are met for our application.

\begin{lemma}
  \label{lem:whp-circular-success-runs}
  Let $S_1, \dots, S_{n'}$ denote a circular sequence of independent
  indicator random variables and let $W$ denote the sum of the lengths
  of all success runs of length at least $1 \le w \le n'$.  If $g(n')
  = \omega(w \sqrt{n' \log(n')})$ is an upper bound on
  $\mathbb{E}[W]$, then $W = \mathcal{O}(g(n'))$ holds with
  probability $1 - \mathcal{O}((n')^{-c})$ for any constant $c$.
\end{lemma}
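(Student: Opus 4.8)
The plan is to apply the method of bounded differences (Corollary~\ref{col:bounded-differences}) to the function $W = W(S_1, \dots, S_{n'})$, viewed as a function of the $n'$ independent indicator variables $S_1, \dots, S_{n'}$. The first task is to bound the effect on $W$ of flipping a single coordinate $S_i$. Flipping $S_i$ from $0$ to $1$ (or vice versa) can at most merge two adjacent runs into one longer run (or split one run into two). In the worst case this changes which sectors are counted as belonging to a long run, but the key observation is that the change is localized: only sectors within the runs adjacent to position $i$ can be affected, and such a change alters $W$ by at most the combined length of those runs plus one. Since we have no a priori bound on run lengths other than the trivial $n'$, the naive worst-case difference is $\Delta_i = \mathcal{O}(n')$, which is far too weak. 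The fix — and this is the main obstacle — is to argue that the \emph{relevant} worst-case change is in fact $\mathcal{O}(w)$ up to the event that some run near $i$ is itself long; more carefully, flipping $S_i$ changes $W$ by at most $2w + \mathcal{O}(1)$ \emph{plus} (the length of any long run straddling position $i$ before the flip) $+$ (the length of any long run created by the flip). One then needs to absorb the contribution of genuinely long runs.

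The cleanest route, and the one I would take, is to avoid typical bounded differences and instead observe the following deterministic bound: flipping one coordinate changes $W$ by at most $2(w+1)$ unless the flip either destroys or creates a run of length $> w$, and in either of those cases the change is bounded by the length of the single run involved, which is at most $n'$. But rather than pay $n'$, I would use the structural fact that a single flip affects at most \emph{two} maximal blocks of $1$s, so $|W(\boldsymbol{x}) - W(\boldsymbol{x}')| \le L_1 + L_2 + 1$ where $L_1, L_2$ are the lengths of the (at most two) runs of $1$s touching position $i$ in whichever configuration has the longer such runs. A standard and robust way to turn this into a usable concentration bound is to first \emph{truncate}: define $\tilde W$ to be the sum of lengths of success runs of length in the window $[w, \ell]$ for a cutoff $\ell = \mathcal{O}(w \log n')$ (say $\ell = 3w\log n'$, or more simply a constant times $g(n')/n'$ times a log factor), show that $W = \tilde W$ with probability $1 - \mathcal{O}((n')^{-c})$ because the probability that any run has length $> \ell$ is at most $n' p^{\ell}$ and $p \le 1$ forces $p$ bounded away from $1$ in our regime (indeed $q = 1-p = \omega(1/\text{polylog})$, so $p^{\ell}$ is superpolynomially small in $n'$), and then apply bounded differences to $\tilde W$ with $\Delta_i = \mathcal{O}(\ell) = \mathcal{O}(w\log n')$, giving $\Delta = \sum_i \Delta_i^2 = \mathcal{O}(n' w^2 \log^2 n')$.

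With $\Delta = \mathcal{O}(n' w^2 \log^2(n'))$ and $g(n') = \omega(w\sqrt{n'\log n'})$ an upper bound on $\mathbb{E}[\tilde W] \le \mathbb{E}[W]$, Corollary~\ref{col:bounded-differences} with, say, $c = 2$ yields
\begin{align*}
  \Pr[\tilde W > 2 g(n')] \le e^{-2 g(n')^2 / \Delta} = e^{-\omega(w^2 n' \log(n') / (n' w^2 \log^2(n')))} = e^{-\omega(1/\log(n'))},
\end{align*}
which unfortunately is not yet polynomially small — so I would instead keep $c$ as a free growing parameter, or more precisely exploit that $g(n') = \omega(w\sqrt{n'\log n'})$ can be strengthened: in our actual application (Lemma~\ref{lem:expected-bad-sectors}) we have $g(n')/(w\sqrt{n'\log n'}) \to \infty$ polynomially fast, so $g(n')^2/\Delta = \omega(\log n')$ and the bound becomes $e^{-\omega(\log n')} = \mathcal{O}((n')^{-c})$ for every constant $c$. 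Finally I would reconcile the truncation: since $W = \tilde W$ outside an event of probability $\mathcal{O}((n')^{-c})$ and $\tilde W = \mathcal{O}(g(n'))$ outside another such event, a union bound gives $W = \mathcal{O}(g(n'))$ with probability $1 - \mathcal{O}((n')^{-c})$, as claimed. The delicate point to get right is making the hypothesis "$g(n') = \omega(w\sqrt{n'\log n'})$" actually sufficient; if the stated hypothesis is read as merely super-linear in $w\sqrt{n'\log n'}$ without a rate, one needs the truncation cutoff $\ell$ to be chosen as a slowly growing multiple of $w$ (not $w\log n'$) so that $\Delta = \mathcal{O}(n' w^2 \cdot o(\log n'))$ and $g(n')^2/\Delta = \omega(\log n')$ follows from the hypothesis alone — this tuning of $\ell$ against the tail bound $n' p^{\ell}$ is where the real work lies.
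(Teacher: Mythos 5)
The central gap in your proposal is the claim that ``the naive worst-case difference is $\Delta_i = \mathcal{O}(n')$'' and hence that one must truncate or appeal to typicality. This is false: the deterministic worst-case change in $W$ when flipping a single $S_i$ is already $\mathcal{O}(w)$, and this is what the paper's proof uses. To see it, note that flipping $S_i$ merges at most two runs (lengths $a$, $b$) into one of length $a+b+1$, or splits one such run into two. Writing the contribution of a run of length $L$ as $L\cdot\mathbbm{1}[L \ge w]$, the change is $(a+b+1)\mathbbm{1}[a+b+1\ge w] - a\,\mathbbm{1}[a\ge w] - b\,\mathbbm{1}[b\ge w]$. If both $a,b \ge w$ this is $1$; if exactly one is $\ge w$ it is at most $w$; and if both are $<w$ it is at most $2(w-1)+1 = 2w-1$. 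Crucially, when a very long run is split, both halves typically still clear the threshold $w$ and keep contributing, so the change is $1$, not the run length. Thus $W$ satisfies the plain bounded differences condition with $\Delta_i = 2w+1$ and $\Delta = n'(2w+1)^2$, and Corollary~\ref{col:bounded-differences} together with $g(n') = \omega(w\sqrt{n'\log n'})$ immediately gives $e^{-\omega(\log n')} = (n')^{-\omega(1)}$, which is the paper's proof.

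Your workaround also does not repair the gap. The truncation argument needs a tail bound on run lengths of the form $n'p^{\ell}$, which in turn requires $p$ bounded away from $1$ at a quantitative rate — but the lemma as stated makes no such assumption, only an assumption on $\mathbb{E}[W]$. You then appeal to the specific regime of Lemma~\ref{lem:expected-bad-sectors}, which is circular: the lemma is meant to hold for a general circular Bernoulli sequence. You acknowledge this at the end (``this tuning of $\ell$ against the tail bound $n'p^{\ell}$ is where the real work lies''), and indeed that work is not carried out; with the cutoff $\ell = \mathcal{O}(w\log n')$ you compute a bound of $e^{-\omega(1/\log n')}$, which is not even $o(1)$. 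The missing idea is simply that no truncation is needed because $\Delta_i \le 2w+1$ holds unconditionally.
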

\begin{proof}
  In order to show that $W$ does not exceed $g(n')$ by more than a
  constant factor with high probability, we aim to apply a method of
  bounded differences (\Cref{col:bounded-differences}).  To this end,
  we consider $W$ as a function of $n'$ independent random variables
  $S_1, \dots, S_{n'}$ and determine the parameters $\Delta_i$ with
  which $W$ satisfies the bounded differences condition (see
  \Cref{eq:bounded-differences-condition}).  That is, for each
  $i \in [n']$ we need to bound the change in the sum of the lengths
  of all success runs of length at least $w$, obtained by changing the
  value of $S_i$ from $0$ to $1$ or vice versa.

  \begin{figure}
    \centering
    \includegraphics{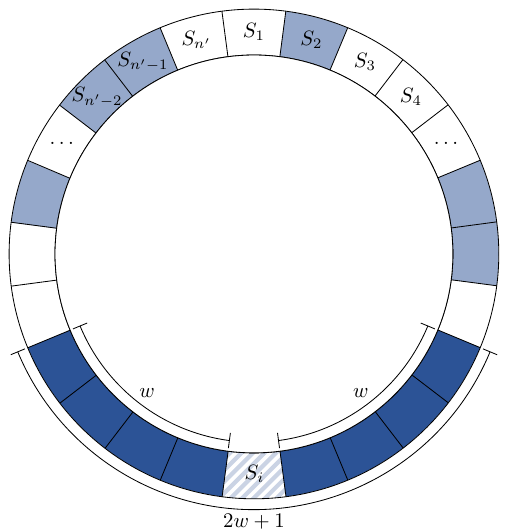}
    \caption{A circular sequence of random variables
      $S_1, \dots, S_{n'}$ that can either be $0$ (white) or~$1$
      (blue).  Dark blue runs are as large as possible without being
      wide.  Depending on the value of $S_i$, the two runs of length
      $w$ are merged into one run of length $2w + 1$.}
    \label{fig:circular-sequence}
  \end{figure}

  The largest impact on $W$ is obtained when changing the value of
  $S_i$ from $0$ to~$1$ merges two runs of size $w$, i.e., runs that
  are as large as possible but not \emph{wide}, as shown in
  \Cref{fig:circular-sequence}.  In this case both runs did not
  contribute anything to $W$ before the change, while the merged run
  now contributes $2w + 1$.  Then, we can bound the change in $W$ as
  $\Delta_i = 2w + 1$.  Note that the other case in which the value of
  $S_i$ is changed from $1$ to $0$ can be viewed as the inversion of
  the change in the first case.  That is, instead of merging two runs,
  changing~$S_i$ splits a single run into two.  Consequently, the
  corresponding bound on the change of~$W$ is the same, except that
  $W$ is decreasing instead of increasing.

  It follows that $W$ satisfies the bounded differences condition for
  $\Delta_i = 2w + 1$ for all $i \in \{1, \dots, n'\}$.  We can now
  apply \Cref{col:bounded-differences} to bound the probability that
  $W$ exceeds an upper bound $g(n')$ on its expected value by more
  than a constant factor as
  \begin{align*}
    \Pr[W > c_1 g(n')] \le e^{-2((c_1 - 1)g(n'))^2 / \Delta},
  \end{align*}
  where $\Delta = \sum_i \Delta_i^2$ and $c_1 \ge 1$.  Since
  $\Delta_i = 2w + 1$ for all $i \in [n']$, we have
  $\Delta = n'(2w + 1)^2$. Thus,
  \begin{align*}
    \Pr[W > c_1 g(n')] \le e^{-\frac{2((c_1 - 1)g(n'))^2}{n' (2w + 1)^2}} \le e^{-\frac{2(c_1 - 1)^2}{n'} \SpaciousCdot \left( \frac{g(n')}{3w} \right)^2},
  \end{align*}
  where the second inequality is valid since $w$ is assumed to be at
  least $1$.  Moreover, we can apply
  $g(n') = \omega(w \sqrt{n' \log(n')})$ (a precondition of this
  lemma), which yields
  \begin{align*}
    \Pr[W > c_1 g(n')] = (n')^{- \omega(1)}.
  \end{align*}
  Therefore, it holds that $\Pr[W \in \mathcal{O}(g(n'))] = 1 -
  (n')^{-\omega(1)} = 1 - \mathcal{O}((n')^{-c})$ for any
  constant~$c$.
\end{proof}

\begin{lemma}
  \label{lem:whp-bad-sectors}
  Let $G$ be a hyperbolic random graph on $n$ vertices.  Then, with
  probability $1 - \mathcal{O}(n^{-c})$ for any constant $c > 0$, the
  number of widening sectors is
  \begin{align*}
    W = \mathcal{O} \left( \frac{\tau^{3/4} \cdot n}{\gamma(n, \tau) \cdot \log^{(2)}(n)^{1/4} \cdot \log^{(3)}(n)^{1/2}} \right).
  \end{align*}
\end{lemma}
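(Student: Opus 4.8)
The plan is to translate the expected-value bound from Lemma~\ref{lem:expected-bad-sectors} into a high-probability bound by combining Lemma~\ref{lem:whp-circular-success-runs} (the general concentration statement for sums of long success runs in a circular Bernoulli sequence) with Lemma~\ref{lem:hrg-in-expectation} (the transfer from the Poissonised model back to the fixed-$n$ model). Recall that Lemma~\ref{lem:expected-bad-sectors} and Lemma~\ref{lem:whp-circular-success-runs} both concern the \emph{independent} indicator variables $S_1, \dots, S_{n'}$, which only genuinely hold in the hyperbolic random graph with $n$ vertices \emph{in expectation}; so all the probabilistic work happens in that model, and only at the very end do we pass to the original model, paying the $n^{1/2}$ penalty.

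\medskip

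First I would fix the function $g(n')$ to be the upper bound on $\mathbb{E}[W]$ supplied by Lemma~\ref{lem:expected-bad-sectors}, namely
\begin{align*}
  g(n') = \frac{2^{1/4} \cdot \tau^{3/4} \cdot n}{\gamma(n, \tau) \cdot \log^{(2)}(n)^{1/4} \cdot \log^{(3)}(n)^{1/2}} \cdot (1 + o(1)),
\end{align*}
and then verify the precondition $g(n') = \omega(w \sqrt{n' \log(n')})$ of Lemma~\ref{lem:whp-circular-success-runs}. Here one substitutes $n' = 2n/\gamma(n,\tau) \cdot (1 \pm o(1))$ from Lemma~\ref{lem:num-sectors} and $w = e^{\gamma(n,\tau)} \log^{(3)}(n)$, and uses $e^{\gamma(n,\tau)} = \tau \log^{(2)}(n) / (2 \log^{(3)}(n)^2)$ from the definition of $\gamma(n,\tau)$. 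This reduces to checking that $n / (\gamma(n,\tau) \log^{(2)}(n)^{1/4} \log^{(3)}(n)^{1/2})$ dominates $e^{\gamma(n,\tau)} \log^{(3)}(n) \cdot \sqrt{(n/\gamma(n,\tau)) \log n}$ up to constants, i.e.\ essentially that $\sqrt{n}$ beats $\poly(\log n) \cdot \sqrt{\log n}$, which is comfortably true since all the $\log^{(i)}$ factors and $e^{\gamma(n,\tau)} = \Theta(\log^{(2)}(n) / \log^{(3)}(n)^2)$ are polylogarithmic. Given the precondition, Lemma~\ref{lem:whp-circular-success-runs} yields $W = \mathcal{O}(g(n'))$ with probability $1 - \mathcal{O}((n')^{-c'})$ for every constant $c'$, in the $n$-vertices-in-expectation model; and since $n' = \Theta(n / \gamma(n,\tau)) = \Theta(n / \poly(\log n))$, a probability of $1 - \mathcal{O}((n')^{-c'})$ is also $1 - \mathcal{O}(n^{-c'})$ for a slightly smaller constant.

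\medskip

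Finally I would invoke Lemma~\ref{lem:hrg-in-expectation}: the event ``$W$ exceeds the claimed bound'' is a graph property $\mathbf{P}$ holding in the Poissonised model with probability $\mathcal{O}(|P|^{-c'})$, so in the original hyperbolic random graph on $n$ vertices it holds with probability $\mathcal{O}(n^{-c' + 1/2})$. Choosing $c'$ large enough (e.g.\ $c' = c + 1/2$ for the target constant $c > 0$) gives the stated $1 - \mathcal{O}(n^{-c})$ guarantee, and $g(n') = \mathcal{O}(\tau^{3/4} n (\gamma(n,\tau) \log^{(2)}(n)^{1/4} \log^{(3)}(n)^{1/2})^{-1})$ is precisely the bound in the statement.

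\medskip

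The only delicate point — and the main obstacle — is the precondition verification $g(n') = \omega(w\sqrt{n'\log n'})$: one must be careful that the factor $e^{\gamma(n,\tau)}$ hidden inside $w$ is genuinely polylogarithmic (it is $\Theta(\log^{(2)}(n)/\log^{(3)}(n)^2)$, hence $o(\log n)$) so that the $\sqrt{n'}$ in $g$ — which is of order $\sqrt{n/\poly\log n}$, matching $\sqrt{n'}$ up to polylog — dominates after accounting for the extra $\sqrt{\log n'}$ and the $w$ factor. Everything else is bookkeeping: substituting the definitions of $\gamma(n,\tau)$, $w$, $\rho$, and $n'$, and tracking that the various $(1 \pm o(1))$ and $(n')^{-c'}$ versus $n^{-c}$ conversions do not degrade the claim.
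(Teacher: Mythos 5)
Your proposal follows essentially the same route as the paper's proof: bound $\mathbb{E}[W]$ via Lemma~\ref{lem:expected-bad-sectors} in the Poissonised model, verify the precondition $g(n') = \omega(w\sqrt{n'\log n'})$ of Lemma~\ref{lem:whp-circular-success-runs} using $w = \Theta(\log^{(2)}(n)/\log^{(3)}(n))$ and $n' = \Theta(n/\gamma(n,\tau))$, apply that lemma for concentration, convert the $(n')^{-c'}$ failure probability to one in $n$, and finally transfer back to the fixed-$n$ model via Lemma~\ref{lem:hrg-in-expectation}. The only cosmetic difference is that the paper crudely uses $n' = \Omega(n^{1/2})$ for the probability conversion while you use the sharper $n' = \Theta(n/\poly\log n)$; both suffice, and the argument is correct.
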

\begin{proof}
  In the following, we show that the claimed bound holds with
  probability $1 - \mathcal{O}(n^{-c_1})$ for any constant $c_1 > 0$
  on hyperbolic random graphs with $n$ vertices in expectation.  By
  \Cref{lem:hrg-in-expectation} the same bound then holds with
  probability $1 - \mathcal{O}(n^{-c_1 + 1/2})$ on hyperbolic random
  graphs.  Choosing $c = c_1 - 1/2$ then yields the claim.

  Recall that we represent the sectors using a circular sequence of
  independent indicator random variables $S_1, \dots, S_{n'}$ and that
  $W$ denotes the sum of the lengths of all success runs spanning more
  than $w$ sectors, i.e., the sum of all widening sectors.  By
  \Cref{lem:expected-bad-sectors} we obtain a valid upper bound on
  $\mathbb{E}[W]$ by choosing
  \begin{align*}
    g(n') = h(n) = \frac{2^{1/4} \cdot \tau^{3/4} \cdot n}{\gamma(n, \tau) \cdot \log^{(2)}(n)^{1/4} \cdot \log^{(3)}(n)^{1/2}} (1 \pm o(1))
  \end{align*}
  and it remains to show that this bound holds with sufficiently high
  probability.  To this end, we aim to apply
  \Cref{lem:whp-circular-success-runs}, which states that
  $W = \mathcal{O}(g(n'))$ holds with probability
  $1 - \mathcal{O}((n')^{-c_2})$ for any constant $c_2$, if
  $g(n') = \omega(w \sqrt{n' \log(n')})$.  In the following, we first
  show that $h(n)$ fulfills this criterion\footnote{Since we are
    interested in runs of \emph{more than} $w$ sectors, we need to
    show $g(n') = \omega((w + 1) \sqrt{n' \log(n')})$.  However, it is
    easy to see that this is implied by showing
    $g(n') = \omega(w \sqrt{n' \log(n')})$.}, before arguing that we
  can choose $c_2$ such that
  $1 - \mathcal{O}((n')^{-c_2}) = 1 - \mathcal{O}(n^{-c_1})$ for any
  constant $c_1$.

  Since $\tau$ is constant and
  $n' = 2n/\gamma(n, \tau) \cdot (1 \pm o(1))$ by
  \Cref{lem:num-sectors}, we can bound $h(n)$ by
  \begin{align*}
    h(n) &= \Theta \left( \frac{n'}{\log^{(2)}(n)^{1/4} \log^{(3)}(n)^{1/2}} \right) \\
         &= \Theta \left( \frac{\log^{(2)}(n) \cdot n'}{\log^{(2)}(n)^{5/4} \log^{(3)}(n)^{1/2}} \right) \\
         &= \omega \left( \frac{\log^{(2)}(n)}{\log^{(3)}(n)} \cdot \frac{n'}{\log^{(2)}(n)^{5/4}} \right),
  \end{align*}
  where the last bound is obtained by applying
  $\log^{(3)}(n)^{1/2} = \omega(1)$.  Recall that $w$ was chosen as
  $w = e^{\gamma(n, \tau)} \log^{(3)}(n)$.  Furthermore, we have
  $\gamma(n, \tau) = \log(\tau \log^{(2)}(n) / (2 \log^{(3)}(n)^2))$.
  Thus, it holds that $w = \Theta(\log^{(2)}(n)/(\log^{(3)}(n)))$,
  allowing us to further bound $h(n)$ by
  \begin{align*}
    h(n) &= \omega \left( w \frac{n'}{\log^{(2)}(n)^{5/4}} \right) \\
         &= \omega \left( w \sqrt{n' \log(n') \cdot \frac{n'}{\log(n') \log^{(2)}(n)^{5/2}}} \right)
  \end{align*}
  and it remains to show that the last factor in the root is in
  $\omega(1)$.  Note that $n' = \Omega(n / \log^{(3)}(n))$ and $n' =
  \mathcal{O}(n)$.  Consequently, it holds that
  \begin{align*}
    \frac{n'}{\log(n') \log^{(2)}(n)^{5/2}} &= \Omega \left( \frac{n}{\log(n) \cdot \log^{(2)}(n)^{5/2} \cdot \log^{(3)}(n)} \right) = \omega \left(\frac{n}{\log(n)^3} \right) = \omega(1).
  \end{align*}
  As stated above, this shows that $W = \mathcal{O}(h(n))$ holds with
  probability $1 - \mathcal{O}((n')^{-c_2})$ for any constant $c_2$.
  Again, since $n' = \Omega(n / \log^{(3)}(n))$, we have
  $n' = \Omega(n^{1/2})$.  Therefore, we can conclude that
  $W = \mathcal{O}(h(n))$ holds with probability
  $1 - \mathcal{O}(n^{-c_2/2})$.  Choosing $c_2 = 2 c_1$ then yields
  the claim.
\end{proof}

\subparagraph{Number of Vertices in Wide Runs.}

Let $\mathcal{W}$ denote the area of the disk covered by all widening
sectors.  By \Cref{lem:whp-bad-sectors} the total number of widening
sectors is small, with high probability.  As a consequence,
$\mathcal{W}$ is small as well and we can derive that the size of the
vertex set $V(\mathcal{W})$ containing all vertices in all widening
sectors is sublinear with high probability.

\begin{lemma}
  \label{lem:whp-vertices-wide-runs}
  Let $G$ be a hyperbolic random graph on $n$ vertices. Then, with
  high probability, the number of vertices in wide runs is bounded by
  \begin{align*}
    \vert V(\mathcal{W}) \vert = \mathcal{O} \left( \frac{\tau^{3/4} \cdot n}{\log^{(2)}(n)^{1/4} \cdot \log^{(3)}(n)^{1/2}} \right).
  \end{align*}
\end{lemma}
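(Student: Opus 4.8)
The plan is to deduce this from Lemma~\ref{lem:whp-bad-sectors}, which already provides a constant $c_0$ such that, with high probability, the number $W$ of widening sectors is at most $c_0 B$ for $B := \tau^{3/4} n / (\gamma(n, \tau) \cdot \log^{(2)}(n)^{1/4} \cdot \log^{(3)}(n)^{1/2})$. On the event $\{W \le c_0 B\}$ the region $\mathcal{W}$ is covered by at most $c_0 B$ sectors, so $|V(\mathcal{W})|$ is dominated by the total number of vertices contained in the $\lceil c_0 B\rceil$ most populous sectors; I denote this random variable by $M$. Since the complementary event has probability $\mathcal{O}(n^{-1})$, it then suffices to show that $M = \mathcal{O}(h(n))$ with high probability, where $h(n) := \tau^{3/4} n / (\log^{(2)}(n)^{1/4} \cdot \log^{(3)}(n)^{1/2})$ is the asserted bound.

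First I would bound $\mathbb{E}[M]$. Writing $N_k$ for the number of vertices in sector $\mathcal{S}_k$, we have $\mathbb{E}[N_k] \le n/n' = \gamma(n, \tau)/2 \cdot (1 + o(1))$ by Lemma~\ref{lem:num-sectors}. Fix a threshold $t$ equal to a sufficiently large constant times $\gamma(n, \tau)$. Splitting each of the $\lceil c_0 B\rceil$ largest occupancies according to whether it lies above or below $t$ yields the pointwise inequality $M \le \sum_{k=1}^{n'} N_k \cdot \mathbbm{1}[N_k \ge t] + \lceil c_0 B\rceil \cdot t$. The second summand equals $\mathcal{O}(B \gamma(n, \tau)) = \mathcal{O}(h(n))$ by the definitions of $B$ and $\gamma(n, \tau)$. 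For the first summand, $N_k$ is binomially distributed with mean $\Theta(\gamma(n, \tau)) = \Theta(\log^{(3)}(n))$, so a Chernoff bound makes $\mathbb{E}[N_k \cdot \mathbbm{1}[N_k \ge t]]$ so small that $n' \cdot \mathbb{E}[N_k \cdot \mathbbm{1}[N_k \ge t]] = o(h(n))$ once the constant in $t$ is chosen large enough. Hence $\mathbb{E}[M] = \mathcal{O}(h(n))$.

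Next I would establish concentration via the method of bounded differences. Regarded as a function of the $n$ independent vertex positions, $M$ changes by at most $2$ whenever a single vertex is relocated: such a move alters the occupancy of at most two sectors, by one each, and the sum of the $\lceil c_0 B\rceil$ largest entries of a vector is $1$-Lipschitz in each entry, being a maximum of linear functions with $\{0, 1\}$-coefficients. Thus $M$ satisfies the bounded differences condition with $\Delta_i = 2$ for all $i$, and Corollary~\ref{col:bounded-differences}, applied with an upper bound $g(n) = \Theta(h(n))$ on $\mathbb{E}[M]$ and $\Delta = \sum_i \Delta_i^2 = 4n$, gives $\Pr[M > c\,g(n)] \le e^{-\Omega(g(n)^2 / n)}$ for fixed $c > 1$. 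Since $g(n)^2 / n = \Theta(n / (\log^{(2)}(n)^{1/2} \cdot \log^{(3)}(n))) = \omega(\log n)$, this probability is $n^{-\omega(1)}$, and a union bound with Lemma~\ref{lem:whp-bad-sectors} gives $|V(\mathcal{W})| \le M = \mathcal{O}(h(n))$ with high probability.

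The main obstacle, and the reason for passing through $M$ instead of working with $|V(\mathcal{W})|$ directly, is the stochastic dependency between the set $\mathcal{W}$ of widening sectors and the vertex positions: moving one vertex across a sector boundary can merge or split runs and thereby change $|V(\mathcal{W})|$ by as much as $\Theta(n)$ in the worst case (see Figure~\ref{fig:bad-sectors-cases}), so $|V(\mathcal{W})|$ itself does not satisfy a usable bounded differences condition. Dominating it, on the high-probability event of Lemma~\ref{lem:whp-bad-sectors}, by the $2$-Lipschitz quantity $M$ removes this difficulty. An alternative would be to retain $|V(\mathcal{W})|$ and invoke the method of typical bounded differences (Corollary~\ref{col:typical-bounded-differences}) with the atypical event ``some window of $\Theta(w)$ consecutive sectors contains more than $\log(n)$ vertices''; this works as well, but requires a careful choice of the parameters $\varepsilon_i$.
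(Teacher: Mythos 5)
Your proof is correct, and it takes a genuinely different route than the paper's. The paper computes $\mathbb{E}[|V(\mathcal{W})|]$ by the law of total expectation conditioned on $\{W \le g(n)\}$ versus $\{W > g(n)\}$ (using Lemma~\ref{lem:whp-bad-sectors} to control the latter), then invokes the Chernoff bound of Corollary~\ref{col:chernoff} directly on $|V(\mathcal{W})|$. You instead dominate $|V(\mathcal{W})|$, on the high-probability event of Lemma~\ref{lem:whp-bad-sectors}, by the auxiliary quantity $M$ (the sum of occupancies of the $\lceil c_0 B\rceil$ most populated sectors), bound $\mathbb{E}[M]$ by truncating sector occupancies at a level $t = \Theta(\gamma(n,\tau))$, and prove concentration of $M$ via the method of bounded differences (Corollary~\ref{col:bounded-differences}), using that $M$ is $2$-Lipschitz in each vertex position. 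The main advantage of your route is that it deals explicitly with the stochastic dependence between the region $\mathcal{W}$ and the vertex positions: $|V(\mathcal{W})|$ is not a sum of independent indicators nor Lipschitz in the positions (a single move can change it by $\Theta(n)$, as you note), whereas $M$ is a top-$k$ partial sum and hence genuinely Lipschitz, so the concentration step is cleanly justified. The paper's direct application of a Chernoff bound to $|V(\mathcal{W})|$ glosses over this dependence; your detour through $M$ removes that wrinkle at the cost of an extra truncation computation for $\mathbb{E}[M]$. Both arrive at the same bound $\mathcal{O}(h(n))$.
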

\begin{proof}
  We start by computing the expected number of vertices in
  $\mathcal{W}$ and show concentration afterwards.  The probability
  for a given vertex to fall into $\mathcal{W}$ is equal to its
  measure~$\mu(\mathcal{W})$.  Since the angular coordinates of the
  vertices are distributed uniformly at random, we have
  $\mu(\mathcal{W}) = W/n'$, where $W$ denotes the number of widening
  sectors and $n'$ is the total number of sectors, which is given by
  \Cref{lem:num-sectors}.  The expected number of vertices in
  $\mathcal{W}$ is then
  \begin{align}
    \label{eq:exp-wide-vertices}
    \mathbb{E}[\vert V(\mathcal{W}) \vert] = n \mu(\mathcal{W}) = n \frac{W}{n'} = \frac{1}{2} W \cdot \gamma(n, \tau) (1 \pm o(1)),
  \end{align}
  where the last equality holds since $1/(1 + x) = 1 - \Theta(x)$ is
  valid for $x = \pm o(1)$.  Note that the number of widening sectors
  $W$ is itself a random variable.  Therefore, we apply the law of
  total expectation and consider different outcomes of $W$ weighted
  with their probabilities.  Motivated by the previously determined
  probabilistic bound on $W$ (\Cref{lem:whp-bad-sectors}), we consider
  the events $W \le g(n)$ as well as $W > g(n)$, where
  \begin{align*}
    g(n) = \frac{c \cdot \tau^{3/4} \cdot n}{\gamma(n, \tau) \cdot \log^{(2)}(n)^{1/4} \cdot \log^{(3)}(n)^{1/2}},
  \end{align*}
  for sufficiently large $c > 0$ and $n$.  With this, we can compute
  the expected number of vertices in $\mathcal{W}$ as
  \begin{align*}
    \mathbb{E}[\vert V(\mathcal{W})\vert ] =~&\mathbb{E}[\vert V(\mathcal{W})\vert  \mid W \le g(n)] \cdot \Pr[W \le g(n)]~+ \\
                                   &\mathbb{E}[\vert V(\mathcal{W})\vert  \mid W > g(n)] \cdot \Pr[W > g(n)].
  \end{align*}
  To bound the first summand, note that $\Pr[W \le g(n)] \le 1$.
  Further, by applying \Cref{eq:exp-wide-vertices} from above, we have
  \begin{align*}
    \mathbb{E}[\vert V(\mathcal{W}) \vert \mid W \le g(n)] \cdot \Pr[W \le g(n)] &\le \frac{1}{2} g(n) \cdot \gamma(n, \tau) (1 \pm o(1)).
  \end{align*}
  In order to bound the second summand, note that $n$ is an obvious
  upper bound on $\mathbb{E}[\vert V(\mathcal{W}) \vert]$.  Moreover,
  by \Cref{lem:whp-bad-sectors} it holds that
  $\Pr[W > g(n)] = \mathcal{O}(n^{-c_1})$ for any $c_1 > 0$.  As a
  result we have
  \begin{align*}
    \mathbb{E}[\vert V(\mathcal{W}) \vert \mid W > g(n)] \cdot \Pr[W > g(n)] \le n \Pr[W > g(n)] &= \mathcal{O}(n^{-c_1 + 1}),
  \end{align*}
  for any $c_1 > 0$.  Clearly, the first summand dominates the second
  and we can conclude that
  $\mathbb{E}[\vert V(\mathcal{W}) \vert] = \mathcal{O}(g(n) \gamma(n,
  \tau))$.  Consequently, for large enough $n$, there exists a
  constant $c_2 > 0$ such that $\hat{g}(n) = c_2 g(n) \gamma(n, \tau)$
  is a valid upper bound on $\mathbb{E}[\vert V(\mathcal{W}) \vert]$.
  This allows us to apply the Chernoff bound in \Cref{col:chernoff} to
  bound the probability that $\vert V(\mathcal{W}) \vert$ exceeds
  $\hat{g}(n)$ by more than a constant factor as
  \begin{align*}
    \Pr[\vert V(\mathcal{W}) \vert \ge (1 + \varepsilon)\hat{g}(n)] \le e^{-\varepsilon^2/3 \SpaciousCdot \hat{g}(n)}.
  \end{align*}
  Finally, since $\hat{g}(n)$ can be simplified as
  \begin{align*}
    \hat{g}(n) = c_2 \cdot \frac{c \cdot \tau^{3/4} \cdot n}{\log^{(2)}(n)^{1/4} \cdot \log^{(3)}(n)^{1/2}},
  \end{align*}
  it is easy to see that $\hat{g}(n) = \omega(\log(n))$ and thus
  $\vert V(\mathcal{W}) \vert = \mathcal{O}(\hat{g}(n))$ holds with
  probability $1 - \mathcal{O}(n^{-c_3})$ for any $c_3 > 0$.
\end{proof}

It remains to bound the number of vertices in large components
contained in narrow runs.


\subsubsection{Narrow Runs}

In the following, we differentiate between \emph{small} and
\emph{large} narrow runs, containing at most and more than $\tau
\log^{(2)}(n)$ vertices, respectively.  As before, we first bound the
expected number of vertices in all large narrow runs and deal with
concentration afterwards.

\subparagraph{Expected Number of Vertices in Large Narrow Runs.}

The straight-forward way to bounding the number of vertices in all
large narrow runs is to consider each sector and count the contained
number of vertices, if the sector is part of a large narrow run.
Unfortunately, whether this is the case depends on the surrounding
sectors and whether they are empty (ending the run) or contain lots of
vertices (to make the run large), and dealing with these stochastic
dependencies is difficult.

To relax these dependencies, we determine an upper bound on the number
of vertices in large narrow runs, by not only considering sectors that
are part of such a run, but also ones that are in the proximity
thereof.  More precisely, for a sector $\mathcal{S}$ we define its
\emph{narrow proximity} $\mathcal{P}_\mathcal{S}$ as $\mathcal{S}$
together with the $w - 1$ sectors to its left and the $w - 1$ sectors
to its right.  If $\mathcal{S}$ is part of a large narrow run, then
there are more than $\tau \log^{(2)}(n)$ vertices in its narrow
proximity.  Note, however, that this condition is not sufficient: Even
if there are as many vertices in $\mathcal{P}_\mathcal{S}$, there
could be empty sectors that cut $\mathcal{S}$ off from the
corresponding sectors, in which case $\mathcal{S}$ is not part of a
large narrow run.

We start by bounding the expected number of vertices in the narrow
proximity of a sector.

\begin{lemma}
  \label{lem:exp-vertices-proximity}
  Let $G$ be a hyperbolic random graph on $n$ vertices, let
  $\mathcal{S}$ be a sector, and let $\mathcal{P}_\mathcal{S}$ be its
  narrow proximity.  Then, 
  $\mathbb{E}[\vert
  V(\mathcal{P}_\mathcal{S}) \vert] \le e^{\gamma(n, \tau)}
  \log^{(3)}(n) \gamma(n, \tau) (1 \pm o(1))$.
\end{lemma}
\begin{proof}
  The narrow proximity of $\mathcal{S}$ consists of $\mathcal{S}$
  together with the $w - 1$ sectors to its left and the $w - 1$ ones
  to its right.  In particular, $\mathcal{P}(\mathcal{S})$ consists of
  at most $2w$ sectors.  Since the angular coordinates of the vertices
  are distributed uniformly at random and since we partitioned the
  disk into $n'$ disjoint sectors of equal width, we can derive an
  upper bound on the expected number of vertices in
  $\mathcal{P}_\mathcal{S}$ as $\mathbb{E}[\vert
  V(\mathcal{P}_\mathcal{S}) \vert] \le n \cdot 2w/n'$.
  As $w = e^{\gamma(n, \tau)}\log^{(3)}(n)$ by definition and $n' =
  2n/\gamma(n, \tau) \cdot (1 \pm o(1))$ according to
  \Cref{lem:num-sectors}, we have
  \begin{align*}
    \mathbb{E}[\vert V(\mathcal{P}_\mathcal{S}) \vert] \le e^{\gamma(n, \tau)} \log^{(3)}(n) \gamma(n, \tau) (1 \pm o(1))^{-1}.
  \end{align*}
  Since $1/(1 + x) = 1 - \Theta(x)$ for $x = \pm o(1)$, we obtain the
  claimed bound.
\end{proof}

Using this upper bound, we can bound the probability that the number
of vertices in the narrow proximity of a sector exceeds the threshold
$\tau \log^{(2)}(n)$ by a certain amount.

\begin{lemma}
  \label{lem:chernoff-vertices-proximity}
  Let $G$ be a hyperbolic random graph on $n$ vertices, let
  $\mathcal{S}$ be a sector, and let $\mathcal{P}_\mathcal{S}$ be its
  narrow proximity.  For $k > \tau \log^{(2)}(n)$ and $n$ large
  enough, it holds that $\Pr[\vert V(\mathcal{P}_\mathcal{S}) \vert =
  k] \le e^{-k/18}$.
\end{lemma}
\begin{proof}
  First note that $\Pr[\vert V(\mathcal{P}_\mathcal{S}) \vert = k] \le
  \Pr[\vert V(\mathcal{P}_\mathcal{S}) \vert \ge k]$.  In order to
  show that $\Pr[\vert V(\mathcal{P}_\mathcal{S}) \vert \ge k]$ is
  small, we aim to apply the Chernoff bound in \Cref{col:chernoff},
  choosing $\varepsilon = 1/2$ and $g(n) = 2/3 \cdot k$ as an upper
  bound on $\mathbb{E}[\vert V(\mathcal{P}_\mathcal{S}) \vert]$.  To
  see that this is a valid choice, we use
  \Cref{lem:exp-vertices-proximity} and substitute $\gamma(n, \tau) =
  \log(\tau \log^{(2)}(n)/(2 \log^{(3)}(n)^2))$, which yields
  \begin{align*}
    \mathbb{E}[\vert V(\mathcal{P}_\mathcal{S}) \vert] &\le e^{\gamma(n, \tau)} \log^{(3)}(n) \gamma(n, \tau) (1 \pm o(1)) \\
                                                       &= \frac{\tau \log^{(2)}(n)}{2 \log^{(3)}(n)^2} \cdot \log^{(3)}(n) \cdot \log \left( \frac{\tau \log^{(2)}(n)}{2 \log^{(3)}(n)^2} \right) (1 \pm o(1)) \\
                                                       &= \frac{\tau \log^{(2)}(n)}{2 \log^{(3)}(n)} \cdot \left( \log^{(3)}(n) - \left(2\log^{(4)}(n) - \log(\tau / 2) \right) \right) (1 \pm o(1)) \\
                                                       &= \frac{1}{2} \cdot \tau \log^{(2)}(n) \cdot \left( 1 - \frac{2\log^{(4)}(n) - \log(\tau / 2)}{\log^{(3)}(n)} \right) (1 \pm o(1)).
  \end{align*}
  Note, that the first error term is equivalent to $(1 - o(1))$ and
  that $(1 \pm o(1)) \le 4/3$ holds for $n$ large enough.
  Consequently, for sufficiently large $n$, we have $\mathbb{E}[\vert
  V(\mathcal{P}_\mathcal{S}) \vert] \le 2/3 \cdot \tau \log^{(2)}(n)$.
  Since $k > \tau \log^{(2)}(n)$, it follows that $g(n) = 2/3 \cdot k$
  is a valid upper bound on $\mathbb{E}[\vert
  V(\mathcal{P}_\mathcal{S}) \vert]$.  Therefore, we can apply the
  Chernoff bound in \Cref{col:chernoff} to conclude that
  \begin{align*}
    \Pr[\vert V(\mathcal{P}_\mathcal{S}) \vert \ge k] &\le e^{-(1/2)^2/3 \SpaciousCdot g(n)} = e^{-1/12 \SpaciousCdot 2/3 \SpaciousCdot k} = e^{-k/18}.
  \end{align*}
\end{proof}

We are now ready to bound the expected value of the number $N$ of
vertices in all large narrow runs.

\begin{lemma}
  \label{lem:exp-large-narrow-runs}
  Let $G$ be a hyperbolic random graph.  Then, the expected number of
  vertices in all large narrow runs is bounded by
  \begin{align*}
    \mathbb{E}[N] = \mathcal{O} \left(\frac{n \cdot \log^{(3)}(n)}{\gamma(n, \tau) \log(n)^{\tau/18}} \right).
  \end{align*}
\end{lemma}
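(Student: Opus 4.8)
The plan is to bound $\mathbb{E}[N]$, where $N$ is the total number of vertices in all \emph{large} narrow runs, by conditioning on each narrow run individually and summing over all possible ``starting positions'' of runs. First I would note that there are at most $n'$ narrow runs (the total number of sectors is an obvious upper bound on the number of runs of any kind), so by linearity of expectation it suffices to bound the expected contribution of a single narrow run $\mathcal{R}$ to $N$. A narrow run contributes to $N$ only if $|V(\mathcal{R})| > \tau \log\log(n)$, in which case it contributes exactly $|V(\mathcal{R})|$. Thus the expected contribution of $\mathcal{R}$ is $\sum_{k > \tau \log\log(n)} k \cdot \Pr[|V(\mathcal{R})| = k]$.

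The key ingredient is Lemma~\ref{lem:chernoff-vertices-narrow-run}, which gives $\Pr[|V(\mathcal{R})| = k] \le e^{-k/18}$ for $k > \tau \log\log(n)$. Plugging this in, the expected contribution of a single run is at most $\sum_{k > \tau \log\log(n)} k \cdot e^{-k/18}$. This is the tail of a convergent series whose terms decay geometrically; the sum is dominated (up to a constant factor) by its first term, which is $\tau \log\log(n) \cdot e^{-\tau \log\log(n)/18} = \tau \log^{(2)}(n) \cdot \log(n)^{-\tau/18}$. (Here one uses $e^{-\tau \log\log(n)/18} = (\log n)^{-\tau/18}$, i.e. $e^{\log\log(n)} = \log n$.) More carefully, $\sum_{k \ge k_0} k e^{-k/18} = \Theta(k_0 e^{-k_0/18})$ for $k_0 \to \infty$, which one can see by comparing with a geometric series or by noting $\sum_{k \ge k_0} k x^k = \Theta(k_0 x^{k_0})$ for fixed $x \in (0,1)$.

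Multiplying the per-run bound by the number of narrow runs $n' = 2n/\gamma(n,\tau) \cdot (1 \pm o(1))$ from Lemma~\ref{lem:num-sectors} then yields
\begin{align*}
  \mathbb{E}[N] = \mathcal{O}\left( n' \cdot \tau \log^{(2)}(n) \cdot \log(n)^{-\tau/18} \right) = \mathcal{O}\left( \frac{\tau \cdot n \cdot \log^{(2)}(n)}{\gamma(n,\tau) \cdot \log(n)^{\tau/18}} \right),
\end{align*}
as claimed. I do not expect any serious obstacle here: the main point is simply to recognize that the geometric tail sum is dominated by its leading term, so the decisive factor is the probability bound from Lemma~\ref{lem:chernoff-vertices-narrow-run} evaluated at the threshold $k = \tau\log\log(n)$. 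One minor subtlety to handle carefully is that distinct narrow runs are not independent and their number is itself random; but since we only need an upper bound on the expectation and we are using linearity of expectation (over an indicator that a run starting at a given sector is narrow and large, summed over all $n'$ sectors, which upper-bounds $N$), no independence is required, and the deterministic bound ``number of runs $\le n'$'' suffices.
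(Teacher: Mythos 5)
Your proposal is correct and follows essentially the same route as the paper's proof: decompose $N$ via linearity of expectation over narrow runs, bound the number of narrow runs by the number of sectors $n'$, apply Lemma~\ref{lem:chernoff-vertices-narrow-run} to bound $\Pr[|V(\mathcal{R})|=k]\le e^{-k/18}$, and observe that the resulting geometric tail sum is $\Theta(\tau\log^{(2)}(n)\cdot\log(n)^{-\tau/18})$. The only cosmetic difference is that the paper evaluates the tail sum by comparison with an integral, whereas you argue directly that a geometric-decaying sum is dominated by its leading term; both give the same bound.
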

\begin{proof}
  For the $i$-th sector $\mathcal{S}_i$ ($i \in [n']$) we define a
  random variable $N_i$, with $N_i = \vert V(\mathcal{S}_i) \vert$ if
  $\mathcal{S}_i$ is part of a large narrow run and $N_i = 0$
  otherwise.  Then $N = \sum_{i \in [n']} N_i$.  As mentioned above,
  we compute an upper bound of $\mathbb{E}[N]$ by considering random
  variables $N_i'$ instead, where $N_i' = \vert V(\mathcal{S}_i)
  \vert$ if the number of vertices in the narrow proximity of
  $\mathcal{S}_i$ exceeds the threshold $t = \tau \log^{(2)}(n)$, and
  $N_i' = 0$ otherwise.  By the above argumentation it holds that
  $N_i' \ge N_i$ and thus for $N' = \sum_{i \in [n']} N_i'$ we have
  $N' \ge N$.  Consequently, it suffices to show that the claimed
  bound holds for $\mathbb{E}[N']$.  To this end, we compute
  \begin{align*}
    \mathbb{E}[N'] = \sum_{i = 1}^{n'} \mathbb{E}[N_i'] = \sum_{i = 1}^{n'} \sum_{k = 0}^{n} \mathbb{E}\left[N_i~\big\vert~\vert V(\mathcal{P}_{\mathcal{S}_i}) \vert = k \right] \cdot \Pr[\vert V(\mathcal{P}_{\mathcal{S}_i}) \vert = k],
  \end{align*}
  where the second equality is obtained using the law of total
  expectation.  Note that we have $N_i' = 0$ whenever $\vert
  V(\mathcal{P}_{\mathcal{S}_i}) \vert \le t$, and $N_i = \vert
  V(\mathcal{S}_i) \vert$, otherwise.  Thus, the expression simplifies
  to
  \begin{align*}
    \mathbb{E}[N'] = \sum_{i = 1}^{n'} \sum_{k = t + 1}^{n} \mathbb{E}\left[\vert V(\mathcal{S}_i) \vert~\big\vert~\vert V(\mathcal{P}_{\mathcal{S}_i}) \vert = k \right] \cdot \Pr[\vert V(\mathcal{P}_{\mathcal{S}_i}) \vert = k],
  \end{align*}
  
  In each summand we are interested in the expected number of vertices
  in a sector~$\mathcal{S}_i$, conditioned on the fact that its narrow
  proximity contains exactly $k$ vertices.  Since the angular
  coordinates of the vertices are distributed uniformly and the narrow
  proximity consists of $2w - 1$ sectors including $\mathcal{S}_i$,
  the expected number that end up in $\mathcal{S}_i$ is given by $k /
  (2w - 1) \le k/w$.  It follows that 
  \begin{align*}
    \mathbb{E}[N'] \le \sum_{i = 1}^{n'} \sum_{k = t + 1}^{n} \frac{k}{w} \cdot \Pr[\vert V(\mathcal{P}_{\mathcal{S}_i}) \vert = k].
  \end{align*}
  The probability can be bounded using
  Lemma~\ref{lem:chernoff-vertices-proximity}, which yields for $k >
  \tau \log^{(2)}(n) = t$ that $\Pr[\vert
  V(\mathcal{P}_{\mathcal{S}_i}) \vert = k] \le e^{-k/18}$.  Thus,
  \begin{align*}
    \mathbb{E}[N'] \le \frac{n'}{w} \sum_{k = t + 1}^{n} k \cdot e^{-k/18}.
  \end{align*}
  Note that the sum is of the form $\sum k b^k$ for $b = e^{-1/18} <
  1$, which is the derivative of the geometric series multiplied by
  $b$.  Consequently, we obtain an upper bound by bounding the limits
  as $t + 1 > t$ and $n < \infty$ and applying the identity
  \begin{align*}
    \sum_{k = t}^{\infty} kb^k = \frac{b^{t} (t + b - tb)}{(b - 1)^2},
  \end{align*}
  which is valid for $b < 1$ and reduces to $\mathcal{O}(b^t \cdot t)$
  for constant $b$.  Substituting $b = e^{-1/18}$ and $t = \tau
  \log^{(2)}(n)$ then yields
  \begin{align*}
    \mathbb{E}[N'] \le \frac{n'}{w} \mathcal{O}\left(e^{-\tau \log^{(2)}(n) / 18} \cdot \log^{(2)}(n)\right) = \mathcal{O}\left( \frac{n'}{w} \frac{\log^{(2)}(n)}{\log(n)^{\tau/18}} \right).
  \end{align*}
  Finally, since $w = e^{\gamma(n, \tau)}\log^{(3)}(n)$ by definition
  and $n' = \mathcal{O}(n / \gamma(n, \tau))$ by
  \Cref{lem:num-sectors}, where we defined $\gamma(n, \tau) = \log(
  \tau \log^{(2)}(n) / (2 \log^{(3)}(n)^2))$, the above term can be
  simplified to
  \begin{align*}
    \mathbb{E}[N'] &= \mathcal{O}\left(\frac{n}{\gamma(n, \tau) \cdot e^{\gamma(n, \tau)}\log^{(3)}(n)} \cdot \frac{\log^{(2)}(n)}{\log(n)^{\tau / 18}}\right) \\
                   &= \mathcal{O}\left(\frac{n}{\gamma(n, \tau) \cdot \frac{\log^{(2)}(n)}{\log^{(3)}(n)}} \cdot \frac{\log^{(2)}(n)}{\log(n)^{\tau / 18}}\right) \\
                   &= \mathcal{O}\left(\frac{n \cdot \log^{(3)}(n)}{\gamma(n, \tau) \cdot \log(n)^{\tau / 18}} \right).
  \end{align*}
\end{proof}

\subparagraph{Concentration Bound on the Number of Vertices in Large
  Narrow Runs.}

\begin{figure}
  \centering
  \includegraphics{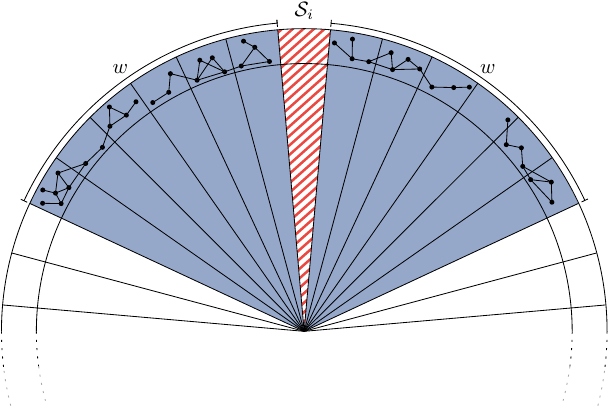}
  \caption{The random variable $S_i$ indicates whether $\mathcal{S}_i$
    contains any vertices.  Changing $S_i$ from $0$ to $1$ or vice
    versa merges two narrow runs or splits a wide run into two narrow
    ones, respectively.  If all vertices were placed in the blue area,
    moving a single vertex in or out of $\mathcal{S}_i$ may change the
    number of vertices in large narrow runs by $n$.}
  \label{fig:bad-sectors-cases}
\end{figure}

To show that the actual number of vertices in large narrow runs $N$ is
not much larger than the expected value, we consider $N$ as a function
of $n$ independent random variables $P_1, \dots, P_n$ representing the
positions of the vertices in the hyperbolic disk.  In order to show
that $N$ does not deviate much from its expected value with high
probability, we would like to apply the method of bounded differences,
which builds on the fact that $N$ satisfies the bounded differences
condition, i.e., that changing the position of a single vertex does
not change $N$ by much.  Unfortunately, this change is not small in
general.

In the worst case, there is a wide run $\mathcal{R}$ that contains all
vertices and a sector $\mathcal{S}_i \subseteq \mathcal{R}$ contains
only one of them.  Moving this vertex out of $\mathcal{S}_i$ may split
the run into two narrow runs (see \Cref{fig:bad-sectors-cases}).
These still contain $n$ vertices, which corresponds to the change in
$N$.  However, this would mean that $\mathcal{R}$ consists of only few
sectors (since it can be split into two narrow runs) and that all
vertices lie within the corresponding (small) area of the disk.  Since
the vertices of the graph are distributed uniformly, this is very
unlikely.  To take advantage of this, we apply the method of
\emph{typical} bounded differences
(\Cref{col:typical-bounded-differences}), which allows us to milden
the effects of the change in the unlikely worst case and to focus on
the typically smaller change of $N$ instead.  Formally, we represent
the typical case using an event $A$ denoting that each run of length
at most $2w + 1$ contains at most $\mathcal{O}(\log(n))$ vertices.  In
the following, we show that $A$ occurs with probability
$1 - \mathcal{O}(n^{-c})$ for any constant $c$, which shows that the
atypical case is very unlikely.

\begin{lemma}
\label{lem:whp-small-splittable-runs}
Let $G$ be a hyperbolic random graph.  Then, each run of length at
most $2w + 1$ contains at most $\mathcal{O}(\log(n))$ vertices with
probability $1 - \mathcal{O}(n^{-c})$ for any constant $c$.
\end{lemma}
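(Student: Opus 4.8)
The goal is to bound the maximum number of vertices in any ``short'' run—one spanning at most $2w+1$ sectors—by $\mathcal{O}(\log n)$ with high probability. The natural approach is a union bound over all possible short runs combined with a Chernoff bound on the vertex count inside the angular interval covered by such a run. First I would fix an angular window corresponding to $2w+1$ consecutive sectors; by Lemma~\ref{lem:num-sectors} and the measure computation in Lemma~\ref{lem:prob-sector-non-empty}, such a window has measure $\mu = (2w+1)/n' = \Theta(w \gamma(n,\tau)/n)$. Since $w = e^{\gamma(n,\tau)}\log^{(3)}(n)$ and $\gamma(n,\tau) = \Theta(\log(\log^{(2)}(n)/\log^{(3)}(n)^2))$, a short calculation gives $w\,\gamma(n,\tau) = \Theta(\log^{(2)}(n))$, so the expected number of vertices in such a window is $n\mu = \Theta(\log^{(2)}(n)) = o(\log n)$. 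Thus an upper bound $g(n) = c\log n$ for a suitably large constant $c$ dominates the expectation, and Corollary~\ref{col:chernoff} (with the indicator variables being ``vertex $i$ lies in the window'') yields $\Pr[\text{more than } c'\log n \text{ vertices in a fixed window}] \le e^{-\Omega(\log n)} = n^{-\Omega(1)}$, where the hidden constant can be made as large as we like by increasing $c'$.

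\textbf{From a fixed window to all runs.} A run is determined by its starting sector and its length (at most $2w+1$), so there are at most $n' \cdot (2w+1) = \mathcal{O}(n \log^{(2)}(n)/\gamma(n,\tau)) = \mathcal{O}(n\log n)$ candidate windows (using $n' = \mathcal{O}(n)$ and $w = \mathrm{poly}\log\log n$). Moreover, any actual run of length at most $2w+1$ is contained in one of the $n'$ windows of exactly $2w+1$ consecutive sectors (pad a shorter run on either side). Applying the per-window tail bound with $c'$ chosen large enough that the exponent beats the $\mathcal{O}(n\log n)$ union-bound factor, we conclude that every run of length at most $2w+1$ contains at most $\mathcal{O}(\log n)$ vertices with probability $1 - \mathcal{O}(n^{-c})$ for any constant $c$. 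One subtlety: the indicator variables ``vertex $i$ is in the window'' are the ones used for Chernoff, and these are genuinely independent in the original hyperbolic random graph model (the vertex positions are i.i.d.), so unlike the run-length analysis of Lemma~\ref{lem:whp-bad-sectors} there is no need to pass through the Poisson-point-process model here—this is why the lemma can be stated directly for a hyperbolic random graph on $n$ vertices.

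\textbf{Main obstacle.} The only place requiring care is verifying that $w\,\gamma(n,\tau) = \Theta(\log^{(2)}(n))$, i.e., that the window measure is small enough that $\mathcal{O}(\log n)$ comfortably dominates the expectation with room to spare for the union bound. Substituting the definitions, $e^{\gamma(n,\tau)} = \tau\log^{(2)}(n)/(2\log^{(3)}(n)^2)$, so $w = \tau\log^{(2)}(n)/(2\log^{(3)}(n))$ and $w\,\gamma(n,\tau) = \Theta(\log^{(2)}(n)\cdot\log^{(3)}(n)/\log^{(3)}(n)) = \Theta(\log^{(2)}(n)\cdot\gamma(n,\tau)/\gamma(n,\tau))$—one must track the $\gamma(n,\tau) = \Theta(\log^{(3)}(n))$ factor, giving $n\mu = \Theta(\log^{(2)}(n)\cdot\log^{(3)}(n))$, still $o(\log n)$ with a $\log n/\log^{(2)}(n)$ gap. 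Since this gap is $\omega(1)$, the Chernoff exponent $\varepsilon^2/3 \cdot g(n)$ with $g(n) = \Theta(\log n)$ gives $e^{-\Omega(\log n)}$, and choosing the constant in $g(n)$ large enough absorbs the $\mathcal{O}(n\log n)$ union bound while still leaving probability $1 - \mathcal{O}(n^{-c})$ for arbitrary $c$. Everything else is a routine application of Corollary~\ref{col:chernoff} followed by a union bound.
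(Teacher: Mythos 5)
Your approach matches the paper's: both compute $\mathbb{E}[|V(\mathcal{R})|] = \mathcal{O}(\log^{(2)}(n))$ for a short run (or window), note this is $o(\log n)$ so $g(n) = c\log n$ is a valid upper bound for any $c$, apply Corollary~\ref{col:chernoff} to get a tail bound of $n^{-c\varepsilon^2/3}$, and finish with a union bound over $\mathcal{O}(n)$ candidates. Your ``fixed window'' framing is a mildly cleaner packaging than the paper's direct argument over runs (it sidesteps the fact that which sectors form a run is itself random), but the calculation is identical, and your observation that the Poisson-model detour is unnecessary here because the Chernoff bound is applied to the i.i.d.\ vertex positions is correct and implicit in the paper as well. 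One small slip worth noting: in your ``main obstacle'' paragraph you correctly derive $w\,\gamma(n,\tau) = \Theta(\log^{(2)}(n))$ and hence $n\mu = \Theta(\log^{(2)}(n))$, but then contradict yourself by writing $n\mu = \Theta(\log^{(2)}(n)\cdot\log^{(3)}(n))$; the former is right (it matches the paper's bound), and since either is $o(\log n)$ the conclusion is unaffected, but you should resolve the inconsistency.
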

\begin{proof}
  We show that the probability for a single run $\mathcal{R}$ of at
  most $2w + 1$ sectors to contain more then $\mathcal{O}(\log(n))$
  vertices is $\mathcal{O}(n^{-c_1})$ for any constant $c_1$.  Since
  there are at most $n' = \mathcal{O}(n)$ runs, applying the union
  bound and choosing $c_1 = c + 1$ then yields the claim.

  Recall that we divided the disk into $n'$ sectors of equal width.
  Since the angular coordinates of the vertices are distributed
  uniformly at random, the probability for a given vertex to a lie in
  $\mathcal{R}$ (i.e., to be in $V(\mathcal{R})$) is given by
  \begin{align*}
    \mu(\mathcal{R}) \le \frac{2w + 1}{n'} = \frac{2 e^{\gamma(n, \tau)}\log^{(3)}(n) + 1}{n'}.
  \end{align*}
  By \Cref{lem:num-sectors} the total number of sectors is given as
  $n' = 2n / \gamma(n, \tau) \cdot (1 \pm o(1))$.  Consequently, we
  can compute the expected number of vertices in $\mathcal{R}$ as
  \begin{align*}
    \mathbb{E}[\vert V(\mathcal{R}) \vert] \le n \mu(\mathcal{R}) = \left(e^{\gamma(n, \tau)} \log^{(3)}(n) + 1/2 \right) \gamma(n, \tau) (1 \pm o(1)).
  \end{align*}
  Substituting
  $\gamma(n, \tau) =
  \mathcal{O}(\log(\log^{(2)}(n)/\log^{(3)}(n)^2))$, we can derive
  that
  \begin{align*}
    \mathbb{E}[\vert V(\mathcal{R}) \vert] \le \mathcal{O}\left(\frac{\log^{(2)}(n)}{\log^{(3)}(n)^2} \log^{(3)}(n) \cdot \log \left(\frac{\log^{(2)}(n)}{\log^{(3)}(n)^2} \right) \right) = \mathcal{O}(\log^{(2)}(n)).
  \end{align*}
  Consequently, it holds that $g(n) = c_2 \log(n)$ is a valid upper
  bound for any $c_2 > 0$ and large enough $n$.  Therefore, we can
  apply the Chernoff bound in \Cref{col:chernoff} to conclude that the
  probability for the number of vertices in $\mathcal{R}$ to exceed
  $g(n)$ is at most
  \begin{align*}
    \Pr[\vert V(\mathcal{R}) \vert \ge (1 + \varepsilon)g(n)] \le e^{-\varepsilon^2/3 \SpaciousCdot g(n)} = n^{-c_2 \varepsilon^2/3}.
  \end{align*}
  Thus, $c_2$ can be chosen sufficiently large, such that
  \begin{align*}
    \Pr[\vert V(\mathcal{R}) \vert \ge (1 + \varepsilon)g(n)] = \mathcal{O}(n^{-c_1})
  \end{align*}
  for any constant $c_1$.
\end{proof}

The method of typical bounded differences now allows us to focus on
this case and to milden the impact of the worst case changes as they
occur with small probability.  Consequently, we can show that the
number of vertices in large narrow runs is sublinear with high
probability.

\begin{lemma}
  \label{lem:whp-large-narrow-runs}
  Let $G$ be a hyperbolic random graph.  Then, with high probability,
  the number of vertices in large narrow runs is bounded by
  \begin{align*}
    N = \mathcal{O} \left(\frac{n \cdot \log^{(3)}(n)}{\gamma(n, \tau) \log(n)^{\tau/18}} \right).
  \end{align*}
\end{lemma}
\begin{proof}
  Recall that the expected number of vertices in all large narrow runs
  is given by \Cref{lem:exp-large-narrow-runs}.  Consequently, we can
  choose $c > 0$ large enough, such that for sufficiently large~$n$ we
  obtain a valid upper bound on $\mathbb{E}[N]$ by choosing
  \begin{align*}
    g(n) = \frac{c \cdot n \cdot \log^{(3)}(n)}{\gamma(n, \tau) \log(n)^{\tau/18}}.
  \end{align*}
  In order to show that $N$ does not exceed $g(n)$ by more than a
  constant factor with high probability, we apply the method of
  typical bounded differences
  (\Cref{col:typical-bounded-differences}).  To this end, we consider
  the typical event $A$, denoting that each run of at most $2w + 1$
  sectors contains at most $\mathcal{O}(\log(n))$ vertices, and it
  remains to determine the parameters $\Delta_i^A \le \Delta_i$ with
  which $N$ satisfies the typical bounded differences condition with
  respect to $A$ (see
  \Cref{eq:typical-bounded-differences-condition}).  Formally, we have
  to show that for all $i \in \{1, \dots, n\}$
  \begin{align*}
    \vert N(P_1, \dots, P_i, \dots, P_n) - N(P_1, \dots, P_i', \dots, P_n) \vert \le 
    \begin{cases}
      \Delta_i^A,\mkern-16mu&\text{if}~(P_1, \dots, P_i, \dots, P_n) \in A,\\
      \Delta_i,\mkern-16mu&\text{otherwise}.
    \end{cases}
  \end{align*}
  As argued before, changing the position $P_i$ of vertex $i$ to
  $P_i'$ may result in a change of $n$ in the worst case.  Therefore,
  $\Delta_i = n$ is a valid bound for all $i \in [n]$.  To bound the
  $\Delta_i^A$, we have to consider the following situation.  We start
  with a set of positions such that all runs of $2w + 1$ sectors
  contain at most $\mathcal{O}(\log(n))$ vertices and we want to bound
  the change in $N$ when changing the position $P_i$ of a single
  vertex $i$.  In this case, splitting a wide run or merging two
  narrow runs can only change $N$ by $\mathcal{O}(\log(n))$.
  Consequently, we can choose $\Delta_i^A = \mathcal{O}(\log(n))$ for
  all $i \in [n]$.  By \Cref{col:typical-bounded-differences} we can
  now bound the probability that $N$ exceeds $g(n)$ by more than a
  constant factor $c_1$ as
  \begin{align*}
    \Pr[N > c_1 g(n)] \le e^{-((c_1 - 1) g(n))^2 / (2 \Delta)} + \Pr[\bar{A}] \cdot \sum_{i \in [n]} 1 / \varepsilon_i,
  \end{align*}
  for any $\varepsilon_1, \dots, \varepsilon_n \in (0, 1]$ and
  $\Delta = \sum_{i \in [n]} (\Delta_i^A + \varepsilon_i (\Delta_i -
  \Delta_i^A))^2$.  By substituting the previously determined
  $\Delta_i^A$ and $\Delta_i$, as well as, choosing
  $\varepsilon_i = 1/n$ for all $i \in [n]$, we obtain
  \begin{align*}
    \Delta = \mathcal{O} \left( n \cdot \left(\log(n) + 1/n \cdot (n - \log(n)) \right)^2 \right) = \mathcal{O}(n \cdot \log(n)^2).
  \end{align*}
  Thus,
  \begin{align*}
    &\Pr[N > c_1 g(n)] \\
    &\quad\le \exp \left(- \Theta \left(n^2 \cdot \left( \frac{\log^{(3)}(n)}{\gamma(n, \tau) \log(n)^{\tau/18}} \right)^2 \right) \cdot \frac{1}{\mathcal{O}(n \log(n)^2)} \right) + \Pr[\bar{A}] \cdot \sum_i 1 / \varepsilon_i \\
    &\quad= \exp \left(- \Omega \left( n \cdot \left(\frac{\log^{(3)}(n)}{\gamma(n, \tau) \log(n)^{1 + \tau/18}} \right)^2 \right) \right) + \Pr[\bar{A}] \cdot \sum_i 1 / \varepsilon_i \\
    &\quad= \exp \left( -\Omega \left( n \cdot \left(\frac{1}{\log(n)^{1 + \tau/18}} \right)^2 \right) \right) + \Pr[\bar{A}] \cdot \sum_i 1 / \varepsilon_i,
  \end{align*}
  where the last equality holds, since
  $\gamma(n, \tau) = \mathcal{O}(\log^{(3)}(n))$.  By further
  simplifying the exponent, we can derive that the first part of the
  sum is $\exp(-\omega(\log(n)))$.  It follows that
  $\Pr[N > c_1 g(n)] \le n^{-c_2} + \Pr[\bar{A}] \cdot \sum_{i \in
    [n]} 1 / \varepsilon_i$ holds for any $c_2 > 0$ and sufficiently
  large $n$.  It remains to bound the second part of the sum.  Since
  $\varepsilon_i = 1/n$ for all $i \in [n]$, we have
  $\Pr[\bar{A}] \cdot \sum_{i \in [n]} 1 / \varepsilon_i =
  \Pr[\bar{A}] \cdot n^2$.  By \Cref{lem:whp-small-splittable-runs} it
  holds that $\Pr[\bar{A}] = \mathcal{O}(n^{-c_3})$ for any~$c_3$.
  Consequently, we can choose $c_3$ such that
  $\Pr[\bar{A}] \cdot n^{2} = \mathcal{O}(n^{-(c_3 - 2)})$ for any
  $c_3$, which concludes the proof.
\end{proof}

\subsection{The Complete Disk}
\label{sec:complete-disk}

In the previous subsections we determined the number of vertices that
are greedily added to the vertex cover in the inner disk and outer
band, respectively.  Before proving our main theorem, we are now ready
to prove a slightly stronger version that shows how the
parameter~$\tau$ can be used to obtain a trade-off between
approximation performance and running time.

\begin{theorem}
  \label{thm:vertex-cover-approximation-trade-off}
  Let $G$ be a hyperbolic random graph on $n$ vertices with power-law
  exponent $\beta = 2\alpha + 1$ and let $\tau > 0$ be constant.
  Given the radii of the vertices, an approximate vertex cover of $G$
  can be computed in time $\mathcal{O}(n \log(n) + m \log(n)^{\tau})$,
  such that the approximation factor is $(1 + \mathcal{O}(\gamma(n,
  \tau)^{-\alpha}))$ asymptotically almost surely.
\end{theorem}
\begin{proof}

  \emph{Running Time.}  We start by sorting the vertices of the graph
  in order of increasing radius, which can be done in time
  $\mathcal{O}(n \log(n))$.  Afterwards, we iterate them and perform
  the following steps for each encountered vertex $v$.  We add $v$ to
  the cover, remove it from the graph, and identify connected
  components of size at most $\tau \log\log(n)$ that were separated by
  the removal.  The first two steps can be performed in time
  $\mathcal{O}(1)$ and $\mathcal{O}(\deg(v))$, respectively.
  Identifying and solving small components is more involved.
  Removing~$v$ can split the graph into at most $\deg(v)$ components,
  each containing a neighbor $u$ of $v$.  Such a component can be
  identified by performing a breadth-first search (BFS) starting at
  $u$.  Each BFS can be stopped as soon as it encounters more than
  $\tau \log\log(n)$ vertices.  The corresponding subgraph contains at
  most $(\tau \log\log(n))^2$ edges.  Therefore, a single BFS takes
  time $\mathcal{O}(\log\log(n)^2)$.  Whenever a component of size at
  most $n_c = \tau \log\log(n)$ is found, we compute a minimum vertex
  cover for it in time $1.1996^{n_c} \cdot
  n_c^{\mathcal{O}(1)}$~\cite{xn-eamis-17}. Since
  $n_c^{\mathcal{O}(1)} = \mathcal{O}((e / 1.1996)^{n_c})$, this
  running time is bounded by $\mathcal{O}(e^{n_c}) =
  \mathcal{O}(\log(n)^\tau)$.  Consequently, the time required to
  process each neighbor of $v$ is $\mathcal{O}(\log(n)^{\tau})$.
  Since this is potentially performed for all neighbors of $v$, the
  running time of this third step can be bounded by introducing an
  additional factor of $\deg(v)$.  We then obtain the total running
  time $T(n, m, \tau)$ of the algorithm by taking the time for the
  initial sorting and adding the sum of the running times of the above
  three steps over all vertices, which yields
  \begin{align*}
    T(n, m, \tau) &= \mathcal{O}(n \log(n)) + \sum_{v \in V} \left(\mathcal{O}(1) + \mathcal{O}(\deg(v)) + \deg(v) \cdot \mathcal{O}(\log(n)^{\tau}) \right)\\
                  &= \mathcal{O}(n \log(n)) + \mathcal{O} \Bigg(\log(n)^{\tau} \cdot \sum_{v \in V} \deg(v) \Bigg) \\
                  &= \mathcal{O}(n \log(n) + m \log(n)^{\tau}).
  \end{align*}

  \emph{Approximation Ratio.}  As argued before, we obtain a valid
  vertex cover for the whole graph, if we take all vertices in
  $V_{\text{Greedy}}$ together with a vertex cover $C_{\text{Exact}}$
  of $G[V_{\text{Exact}}]$.  The approximation ratio of the resulting
  cover is then given by the quotient
  \begin{align*}
    \delta = \frac{\vert V_{\text{Greedy}}\vert + \vert C_{\text{Exact}}\vert }{\vert C_{\text{OPT}}\vert },
  \end{align*}
  where $C_{\text{OPT}}$ denotes an optimal solution.  Since all
  components in $G[V_{\text{Exact}}]$ are solved optimally and since
  any minimum vertex cover for the whole graph induces a vertex cover
  on $G[V']$ for any vertex subset $V' \subseteq V$, it holds that
  $\vert C_{\text{Exact}}\vert \le \vert C_{\text{OPT}}\vert $.
  Therefore, the approximation ratio can be bounded by
  $\delta \le 1 + \vert V_{\text{Greedy}}\vert /\vert
  C_{\text{OPT}}\vert $.
  To bound the number of vertices in $V_{\text{Greedy}}$, we add the
  number of vertices $I$ in the inner disk $\mathcal{I}$, as well as the
  numbers of vertices $W$ in the outer band that are contained in wide
  runs and the number of vertices $N$ that are contained in large
  narrow runs.  That is,
  \begin{align*}
    \delta \le 1 + \frac{I + W + N }{\vert C_{OPT}\vert }.
  \end{align*}
  Upper bounds on $I$, $W$, and $N$ that hold with high probability
  are given by
  \Cref{lem:inner-disk-bad-nodes-whp,lem:whp-vertices-wide-runs,lem:whp-large-narrow-runs},
  respectively.  Furthermore, it was previously shown that the size of
  a minimum vertex cover on a hyperbolic random graph is $\vert
  C_{OPT}\vert =~\Omega(n)$, asymptotically almost
  surely~\cite[Theorems 4.10 and 5.8]{cfr-ggdsfn-16}.  We obtain
  \begin{align*}
    \delta &= 1 + \mathcal{O} \left( \frac{1}{\gamma(n, \tau)^{\alpha}} + \frac{1}{\log^{(2)}(n)^{1/4} \cdot \log^{(3)}(n)^{1/2}} + \frac{\log^{(3)}(n)}{\gamma(n, \tau) \log(n)^{\tau/18}} \right).
  \end{align*}
  Since $\gamma(n, \tau) = \mathcal{O}(\log^{(3)}(n))$, the first
  summand dominates asymptotically.
\end{proof}

\begin{backInTime}{thm-vertex-cover-efficient-approximation}
  \begin{theorem}
    Let $G$ be a hyperbolic random graph on $n$ vertices.  Given the
    radii of the vertices, an approximate vertex cover of $G$ can be
    computed in time $\mathcal{O}(m \log(n))$, such that the
    approximation ratio is $(1 + o(1))$ asymptotically almost surely.
  \end{theorem}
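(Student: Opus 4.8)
The plan is to derive Theorem~\ref{thm:vertex-cover-cover-efficient-approximation} as a direct specialization of the stronger tradeoff statement in Theorem~\ref{thm:vertex-cover-approximation-tradeoff}. That theorem provides, for every constant $\tau > 0$, an algorithm with running time $\mathcal{O}(m \log(n)^{\tau} + n\log(n))$ and approximation factor $1 + \mathcal{O}(\gamma(n,\tau)^{-\alpha})$ asymptotically almost surely. The key observation is that both the running-time overhead and the approximation error are controlled by $\tau$, and the target bounds of the weaker theorem are loose enough that a \emph{single fixed small constant} suffices.

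First I would fix $\tau = 1$ (or any constant $\tau \le 1$). For the running time, substituting $\tau = 1$ into $\mathcal{O}(m\log(n)^{\tau} + n\log(n))$ yields $\mathcal{O}(m\log(n) + n\log(n))$; since a hyperbolic random graph is connected enough that $m = \Omega(n)$ on its giant component, and more simply since $n\log(n) = \mathcal{O}(m\log(n))$ whenever $m = \Omega(n)$, this collapses to $\mathcal{O}(m\log(n))$ as claimed. (If one wants to be careful about isolated or near-isolated vertices contributing to $n$ but not $m$, note that the $n\log(n)$ term comes only from the initial sort, and $\mathcal{O}(m\log(n) + n\log(n)) = \mathcal{O}((n+m)\log n)$ is the honest bound; the paper's convention is evidently that $m \geq n - 1$ on the relevant instances, or one simply absorbs it.) For the approximation factor, I need $\gamma(n,1)^{-\alpha} = o(1)$, i.e. $\gamma(n,1) = \omega(1)$. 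Recalling $\gamma(n,\tau) = \log\!\big(\tau \log^{(2)}(n) / (2\log^{(3)}(n)^2)\big)$, with $\tau = 1$ this is $\log\log\log(n) - 2\log^{(4)}(n) - \log 2 = \log^{(3)}(n)(1 - o(1)) = \omega(1)$. Since $\alpha \in (1/2, 1)$ is a fixed constant, $\gamma(n,1)^{-\alpha} = o(1)$, so the approximation factor $1 + \mathcal{O}(\gamma(n,1)^{-\alpha})$ is indeed $1 + o(1)$.

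The proof therefore consists of invoking Theorem~\ref{thm:vertex-cover-approximation-tradeoff} with $\tau = 1$ and simplifying the two resulting expressions. I would write it as: \emph{Apply Theorem~\ref{thm:vertex-cover-approximation-tradeoff} with the constant $\tau = 1$. The running time becomes $\mathcal{O}(m\log(n) + n\log(n)) = \mathcal{O}(m\log(n))$. The approximation factor becomes $1 + \mathcal{O}(\gamma(n,1)^{-\alpha})$; since $\gamma(n,1) = \log^{(3)}(n)(1 - o(1)) = \omega(1)$ and $\alpha > 1/2$ is constant, this is $1 + o(1)$.} There is essentially no obstacle here — the entire difficulty of the paper lies in establishing Theorem~\ref{thm:vertex-cover-approximation-tradeoff} (the inner-disk bound, the wide-run analysis, and the narrow-run concentration argument via typical bounded differences), all of which is already in place. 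The only thing to be slightly careful about is the $n\log(n)$ versus $m\log(n)$ bookkeeping, which is why the statement of Theorem~\ref{thm:vertex-cover-cover-efficient-approximation} only asserts $\mathcal{O}(m\log(n))$ — this is justified because in the regime of interest $m = \Omega(n)$, as the average degree of a hyperbolic random graph is a positive constant.
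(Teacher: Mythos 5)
Your proposal is correct and takes essentially the same route as the paper: invoke Theorem~\ref{thm:vertex-cover-approximation-tradeoff} with $\tau = 1$, observe that the running time simplifies to $\mathcal{O}(m\log(n))$, and that $\gamma(n,1) = \omega(1)$ together with constant $\alpha \in (1/2,1)$ gives the $(1+o(1))$ approximation factor. The extra bookkeeping you do for the $n\log(n)$ term is reasonable but not needed for the paper's intended reading.
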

  \begin{proof}
    By \Cref{thm:vertex-cover-approximation-trade-off} we can compute
    an approximate vertex cover in time
    $\mathcal{O}(n \log(n) + m \log(n)^{\tau})$, such that the
    approximation factor is
    $1 + \mathcal{O}(\gamma(n, \tau)^{-\alpha})$, asymptotically
    almost surely.  By choosing $\tau = 1$ we get
    $\gamma(n, 1) = \omega(1)$, which yields an approximation factor
    of $(1 + o(1))$, since $\alpha \in (1/2, 1)$.  Additionally, the
    bound on the running time can be simplified to
    $\mathcal{O}(n \log(n) + m \log(n))$.  The claim then follows
    since we assume the graph to be connected, which implies that the
    number of edges is $m = \Omega(n) $.
  \end{proof}
\end{backInTime}

\section{Experimental Evaluation}
\label{sec:experiments}

\begin{figure}[t]
  \centering
  \input{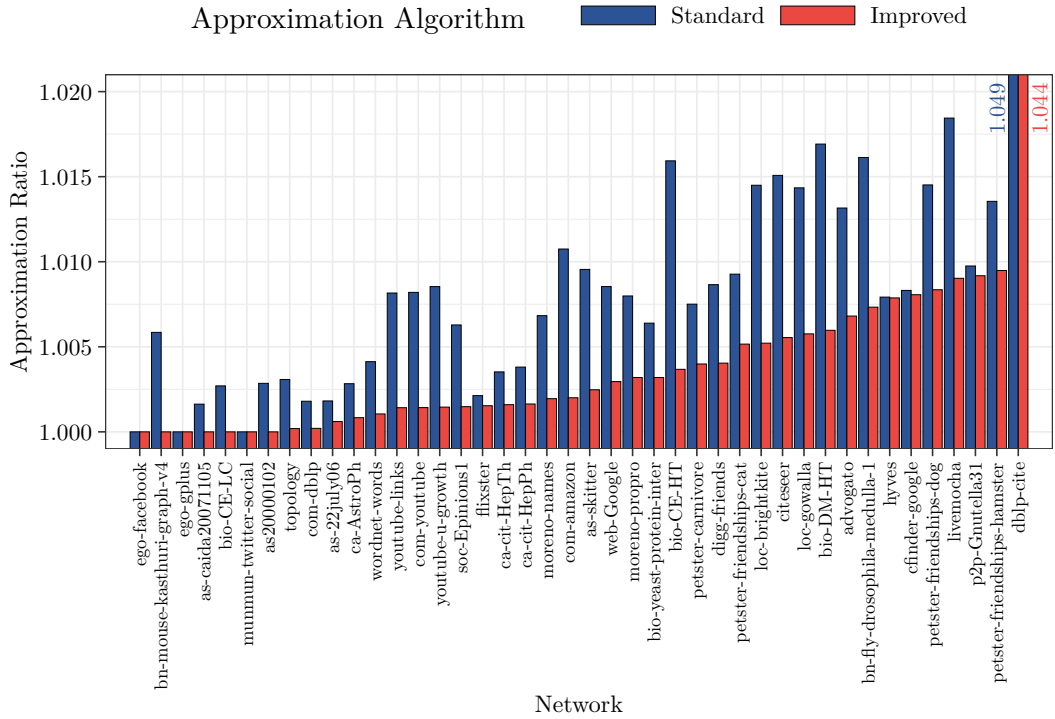}
  \caption{Approximation ratios obtained using the standard greedy
    approach (blue) and our improved version (red) on a selection of
    real-world networks.  The parameter adjusting the component size
    threshold was chosen as $\tau = 10$.  For the sake of readability
    the bars denoting the ratios for the \texttt{dblp-cite} network
    were cropped and the actual values written next to them.}
    \label{fig:experiments}
\end{figure}

It remains to evaluate how well the predictions of our analysis on
hyperbolic random graphs translate to real-world networks.  According
to the model, vertices near the center of the disk can likely be added
to the vertex cover safely, while vertices near the boundary need to
be treated more carefully (see \Cref{sec:algorithm}).  Moreover, it
predicts that these boundary vertices can be found by identifying
small components that are separated when removing vertices near the
center.  Due to the correlation between the radii of the vertices and
their degrees~\cite{gpp-rhg-12}, this points to a natural extension of
the standard greedy approach: While iteratively adding the vertex with
the largest degree to the cover, small separated components are solved
optimally.  To evaluate how this approach compares to the standard
greedy algorithm, we measured the approximation ratios on the largest
connected component of a selection of 42 real-world networks from
several network datasets~\cite{k-k-13, ra-ndrigav-15}.  The results of
our empirical analysis are summarized in \Cref{fig:experiments}.

Our experiments confirm that the standard greedy approach already
yields close to optimal approximation ratios on all networks, as
observed previously~\cite{dgd-vccn-13}.  In fact, the ``worst''
approximation ratio is only $1.049$ for the network
\texttt{dblp-cite}.  The average lies at just $1.009$.

\begin{figure}[t]
  \centering
  \input{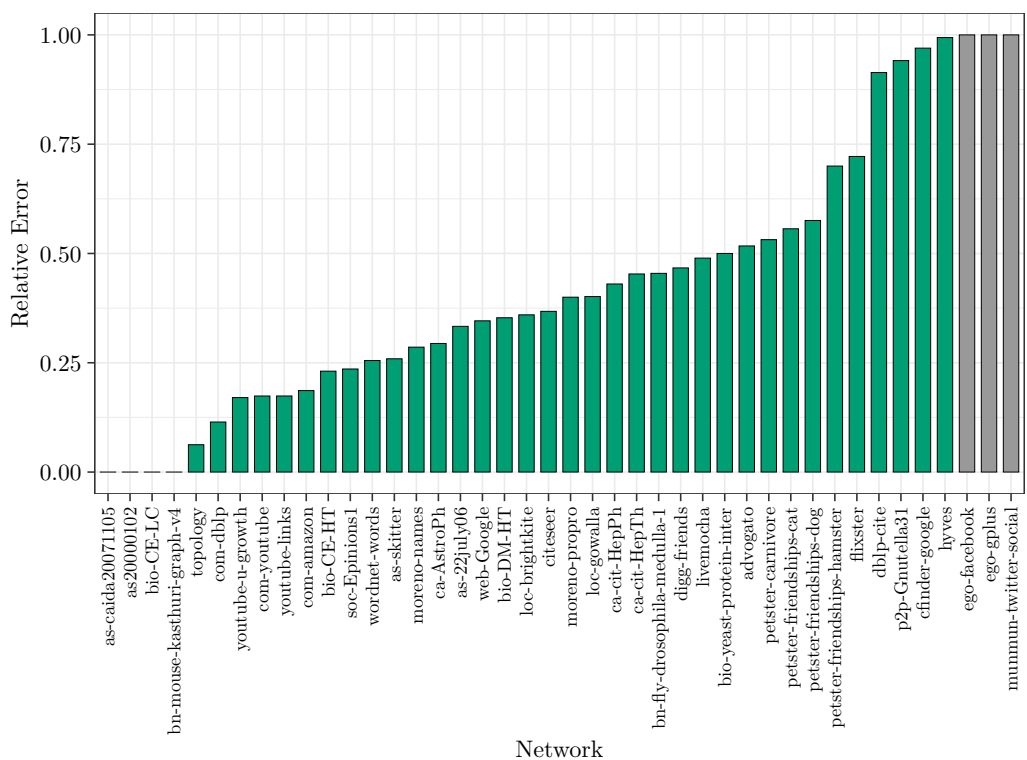}
  \caption{Relative error of the improved greedy compared to the
    standard approach.  The parameter adjusting the component size
    threshold was chosen as $\tau = 10$.  Gray bars indicate that no
    error could be determined since the standard approach found an
    optimum already.}
  \label{fig:experiments-relative}
\end{figure}

Clearly, our adapted greedy approach performs at least as well as the
standard greedy algorithm.  In fact, for $\tau = 1$ the sizes of the
components that are solved optimally on the considered networks are at
most $3$.  For components of this size the standard greedy approach
performs optimally.  Therefore, the approximation performances of the
standard and the adapted greedy match in this case.  However, the
adapted greedy algorithm allows for improving the approximation ratio
by increasing the size of the components that are solved optimally.
In our experiments, we chose $10 \lceil \log\log(n) \rceil$ as the
component size threshold, which corresponds to setting $\tau = 10$.
The resulting impact can be seen in \Cref{fig:experiments-relative},
which shows the error of the adapted greedy compared to the one of the
standard greedy algorithm.  This relative error is measured as the
fraction of the number of vertices by which the adapted greedy and the
standard approach exceed an optimum solution.  That is, a relative
error of $0.5$ indicates that the adapted greedy halved the number of
vertices by which the solution of the standard greedy exceeded an
optimum.  Moreover, a relative error of $0$ indicates that the adapted
greedy found an optimum when the standard greedy did not.  The
relative error is omitted (gray in \Cref{fig:experiments-relative}) if
the standard greedy already found an optimum, i.e., there was no error
to improve on.  For more than $69\%$ of the considered networks (29
out of 42) the relative error is at most $0.5$ and the average
relative error is $0.39$.  Since the behavior of the two algorithms
only differs when it comes to small separated components, this
indicates that the predictions of the model that led to the
improvement of the standard greedy approach do translate to real-world
networks.  In fact, the average approximation ratio obtained using the
standard greedy algorithm is reduced from $1.009$ to $1.004$ when
using the adapted greedy approach.

\clearpage



\bibliography{hyperbolic_vertex_cover_approximation}

\end{document}